\documentclass[11pt,american]{article}
\usepackage{geometry}
\geometry{verbose,tmargin=1in,bmargin=1in,lmargin=1in,rmargin=1in}
\synctex=-1
\usepackage{amsmath}
\usepackage{amsthm}
\usepackage{amssymb}
\usepackage[usenames, dvipsnames]{xcolor}
\usepackage{float}
 \usepackage{url}
\usepackage{silence}
\usepackage{comment}

\WarningFilter{hyperref}{Token not allowed in a PDF string}
\usepackage[colorlinks]{hyperref}
\hypersetup{
linkcolor=BrickRed
,citecolor=Green
,filecolor=Mulberry
,urlcolor=NavyBlue
,menucolor=BrickRed
,runcolor=Mulberry
,linkbordercolor=BrickRed
,citebordercolor=Green
,filebordercolor=Mulberry
,urlbordercolor=NavyBlue
,menubordercolor=BrickRed
,runbordercolor=Mulberry,
pdfauthor={Name}
}
\usepackage{enumitem}
\usepackage[textsize=footnotesize,textwidth=1cm]{todonotes}
\setlist{nosep} 
\setlist[description]{font=\normalfont\space}
\usepackage{tikz} 
\usepackage{extarrows}
\usepackage{complexity}
\theoremstyle{plain}
\newtheorem{lem}{Lemma}[section]
\theoremstyle{definition}

\theoremstyle{definition}

\theoremstyle{plain}

\theoremstyle{remark}

\theoremstyle{plain}
\newtheorem{thm}{Theorem}[section]
\theoremstyle{plain}
\newtheorem{problem}{Problem}[section]
\theoremstyle{plain}
\newtheorem{cor}{Corollary}[section]
\theoremstyle{plain}

\theoremstyle{remark}

\theoremstyle{definition}
\newtheorem{conjecture}{Conjecture}[section]
\theoremstyle{definition}
\newtheorem{remark}{Remark}[section]

\usepackage[nameinlink]{cleveref}

\renewcommand{\ref}{\cref}
\AtBeginDocument{\let\ref\cref}

\crefname{lemma}{Lemma}{Lemmas}
\crefname{observation}{Observation}{Observations}
\crefname{figure}{Figure}{Figures}
\crefname{appendix}{Appendix}{Appendices}
\crefname{lem}{Lemma}{Lemmas}
\crefname{problem}{Problem}{Problems}
\crefname{step}{Step}{Steps}
\crefname{theorem}{Theorem}{Theorems}
\crefname{thm}{Theorem}{Theorems}
\crefname{proposition}{Proposition}{Propositions}
\crefname{conjecture}{Conjecture}{Conjectures}
\crefname{definition}{Definition}{Definitions}
\crefname{fact}{Fact}{Facts}
\crefname{section}{Section}{Sections}
\crefname{corollary}{Corollary}{Corollaries}
\crefname{cor}{Corollary}{Corollaries}
\crefname{alg}{Algorithm}{Algorithms}
\crefname{algorithm}{Algorithm}{Algorithms}
\crefname{ex}{Example}{Examples}
\crefname{exa}{Example}{Examples}
\crefname{example}{Example}{Examples}
\crefname{eq}{Equation}{Equations}
\crefname{remark}{Remark}{Remarks}
\crefname{equation}{Equation}{Equations}
\crefname{subequation}{equation}{equations}

\newcommand{\ie}{, \textit{i}.\textit{e}., }

\newcommand{\splitatcommas}[1]{\begingroup
  \begingroup\lccode`~=`, \lowercase{\endgroup
    \edef~{\mathchar\the\mathcode`, \penalty0 \noexpand\hspace{0pt plus 1em}}}\mathcode`,="8000 #1\endgroup
}
\global\long\def\th#1{#1^{\textrm{th}}}\global\long\def\IZ{\mathbb{Z}}\global\long\def\IN{\mathbb{N}}\global\long\def\IR{\mathbb{R}}\global\long\def\IQ{\mathbb{Q}}\global\long\def\card#1{\left\vert #1\right\vert }\global\long\def\lst#1#2{#1_{1},#1_{2},\dots,#1_{#2}}

\global\long\def\lstc#1#2#3{#1_{1}#3#1_{2}#3\dots#3#1_{#2}}

\global\long\def\pri#1{#1^{\prime}}\global\long\def\ordslp{\operatorname{OrdSLP}}\global\long\def\flr#1{\lfloor#1\rfloor}\global\long\def\paren#1{\left(#1\right)}\global\long\def\degslp{\operatorname{DegSLP}}\global\long\def\mdegslp{\operatorname{mDegSLP}}\global\long\def\acit{\operatorname{ACIT}}

\global\long\def\eqslp{\operatorname{EquSLP}}\global\long\def\bitslp{\operatorname{BitSLP}}\global\long\def\divslp{\operatorname{Div2SLP}}\global\long\def\posslp{\operatorname{PosSLP}}\global\long\def\gtnc{\operatorname{GTNC}}\global\long\def\ord#1{\operatorname{ord}\paren{#1}}
\global\long\def\thrsos{\operatorname{3SoSSLP}}\global\long\def\twosos{\operatorname{2SoSSLP}}\global\long\def\sqslp{\operatorname{SquSLP}}\global\long\def\slp{\operatorname{SLP}}\global\long\def\bin#1{\operatorname{Bin}\paren{#1}}\global\long\def\sss{\operatorname{3SoS}}\global\long\def\ss{\operatorname{2SoS}}

\global\long\def\iu{\mathbb{\imath}}\global\long\def\rev{\operatorname{rev}}\global\long\def\pospolyslp{\operatorname{PosPolySLP}}\global\long\def\sqpolyslp{\operatorname{SqPolySLP}}

\global\long\def\eqdef{:=}\global\long\def\abs#1{\left|#1\right|}\global\long\def\nrm#1{\lVert#1\lVert}

\renewcommand{\BP}{\operatorname{BP}}
\newcommand{\ssp}{\operatorname{SoSRoot}}

\newtheorem{observation}{Observation}[section]
 \newcommand{\markusname}{Markus Bl\"aser}
\newcommand{\markusaffilfull}{Saarland University, Saarland Informatics Campus, Saarbr\"ucken, Germany}

\newcommand{\markusuni}{mblaeser@cs.uni-saarland.de}

\newcommand{\julianname}{Julian D\"orfler}
\newcommand{\julianaffilfull}{Saarland University, Saarland Informatics Campus, Saarbr\"ucken, Germany}

\newcommand{\julianuni}{jdoerfler@cs.uni-saarland.de}

\newcommand{\goravname}{Gorav Jindal}
\newcommand{\goravgmail}{gorav.jindal@gmail.com}

\newcommand{\goravaffil}{Max Planck Institute for Software Systems, Saarland Informatics Campus, Saarbr\"ucken, Germany}

\usepackage{mathptmx}
 
\title{PosSLP and Sum of Squares}
\author{\markusname\thanks{\markusaffilfull.\, Email: \texttt{\markusuni}} \and \julianname \thanks{\julianaffilfull.\, Email: \texttt{\julianuni}} \and \goravname \thanks{\goravaffil.\, Email: \texttt{\goravgmail}}}
\date{}

 \begin{document}
\maketitle
\begin{abstract}
The problem PosSLP is the problem of determining whether a given straight-line program (SLP)  computes a positive integer. PosSLP was introduced by Allender et al.\ to study the complexity of numerical analysis (Allender et al., 2009). PosSLP can also be reformulated as the problem of deciding whether the integer computed by a given SLP can be expressed as the sum of squares of four integers, based on the well-known result by Lagrange in 1770, which demonstrated that every natural number can be represented as the sum of four non-negative integer squares.

In this paper, we explore several natural extensions of this problem by investigating whether the positive integer computed by a given SLP can be written as the sum of squares of two or three integers. We delve into the complexity of these variations and demonstrate relations between the complexity of the original PosSLP problem and the complexity of these related problems. Additionally, we introduce a new intriguing problem called Div2SLP and illustrate how Div2SLP is connected to DegSLP and the problem of whether an SLP computes an integer expressible as the sum of three squares.

By comprehending the connections between these problems, our results offer a deeper understanding of decision problems associated with SLPs and open avenues for further exciting research.
\end{abstract}
\section{Introduction}
\subsection{Straight Line Programs and PosSLP}
The problem $\posslp$ was introduced in \cite{Allender06onthe} to study the complexity of numerical analysis and relate the computations over the reals (in the so-called Blum-Shub-Smale model, see \cite{SmaleRealCompu1997}) to classical computational complexity. $\posslp$ asks whether a given integer is positive or not. The problem may seem trivial at first glance but becomes highly non-trivial when the given integer is not explicitly provided but rather represented by an implicit expression which computes it. One way to model the implicit computations of integers and polynomials is through the notion of arithmetic circuits and straight line programs (SLPs). 

An arithmetic circuit takes the form of a directed acyclic graph where input nodes are designated with constants 0, 1, or variables $\lst x m$. Internal nodes are labeled with mathematical operations such as addition ($+$), subtraction ($-$), multiplication ($\times$), or division ($\div$). Such arithmetic circuits are said to be \emph{constant-free}. In the algebraic complexity theory literature, usually, one studies arithmetic circuits where constants are arbitrary scalars from the underlying field. But in this paper, we are only concerned with arithmetic circuits that are constant-free.

On the other hand, a straight-line program is a series of instructions that corresponds to a sequential evaluation of an arithmetic circuit. If this program does not contain any division operations, it is referred to as ``division-free''. Unless explicitly specified otherwise, we will exclusively consider division-free straight-line programs. Consequently, straight-line programs can be viewed as a compact representation of polynomials or integers. In many instances, we will be concerned with division-free straight-line programs that do not incorporate variables, representing an integer. Arithmetic circuits and SLPs are used interchangeably in this paper. Now we define the central object of study in this paper.
\begin{problem}[$\posslp$]
Given a straight-line program representing $N \in \IZ$, decide whether $N > 0$.
\end{problem}
An SLP $P$ computing an integer is a sequence $(b_0,\lst b m)$ of integers such that $b_0 = 1$ and $b_i = b_j \circ_i b_k$ for all $i > 0$, where $j,k < i$ and $\circ_i \in \{+, -, \times\}$. Given such an SLP $P$, $\posslp$ is the problem of determining the sign of the integer computed by $P$\ie the sign of $b_m$. Note that we cannot simply compute $b_m$ from a description of $P$ because the absolute value of $b_m$ can be as large as $2^{2^m}$. Therefore, computing $b_m$ exactly might require exponential time. Hence, this brute force approach of determining the sign of $b_m$ is too computationally inefficient. \cite{Allender06onthe} also show some evidence that $\posslp$ might be a hard problem computationally. They do so by showing the polynomial time Turing equivalence of $\posslp$ to the Boolean part of the problems decidable in polynomial time in the  Blum-Shub-Smale (BSS) model and also to the generic task of numerical computation. We briefly survey this relevance of $\posslp$ to emphasize its importance in numerical analysis. For a more detailed discussion, the interested reader is referred to \cite[Section 1]{Allender06onthe}.

The Blum-Shub-Smale (BSS) computational model deals with computations using real numbers. It is a well-explored area where complexity theory and numerical analysis meet. For a detailed understanding, see \cite{SmaleRealCompu1997}. Here we only dscribe the constant-free BSS model. BSS machines handle inputs from $\IR^{\infty}$, allowing polynomial-time computations over $\IR$ to solve ``decision problems'' $L \subseteq \IR^{\infty}$. The set of problems solvable by polynomial-time BSS machines is denoted by $\P^{0}_{\IR}$, see e.g., \cite{BURGISSER2006147}. To relate the complexity class $\P^{0}_{\IR}$ to classical complexity classes, one considers the \emph{boolean part} of $\P^{0}_{\IR}$, defined as: $\BP(\P_{\IR}^{0}) \eqdef \{L \cap \{0,1\}^{\infty} \mid L \in \P_{\IR}^{0}\}$. To highlight the importance of $\posslp$ as a bridge between the BSS model and classical complexity classes, \cite{Allender06onthe} proved the following \cref{thm:posslpbsseq}.
\begin{thm}[Proposition 1.1 in \cite{Allender06onthe}]
\label{thm:posslpbsseq} We have $\P^{\posslp} = \BP(\P_{\IR}^{0})$.
\end{thm}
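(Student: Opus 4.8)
The plan is to show both inclusions $\P^{\posslp} \subseteq \BP(\P_{\IR}^{0})$ and $\BP(\P_{\IR}^{0}) \subseteq \P^{\posslp}$.

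For the first inclusion, I need to simulate a polynomial-time oracle Turing machine with a $\posslp$ oracle by a polynomial-time constant-free BSS machine on boolean inputs. The Turing machine part is easy to simulate in the BSS model since BSS machines subsume discrete computation; the interesting point is the oracle queries. Each $\posslp$ query is an SLP of polynomial size; I would have the BSS machine actually evaluate this SLP over $\IR$ (which takes polynomially many real arithmetic operations, one per SLP instruction) and then test the sign of the result using a branch on $\geq 0$, which is a primitive in the BSS model. Since the integer computed by the SLP is exactly the real number the BSS machine evaluates, the sign test correctly answers the oracle query, and the whole simulation runs in polynomial time.

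For the reverse inclusion, I would take a language $L \in \BP(\P_{\IR}^{0})$ via a constant-free polynomial-time BSS machine $M$ and build a polynomial-time Turing machine with a $\posslp$ oracle deciding $L \cap \{0,1\}^\infty$. The key observation is that along any computation path of $M$ on a boolean input, every intermediate real value is obtained from the inputs (which are $0/1$) by $+,-,\times$ (division-free in the constant-free model, or handled separately if present), so each such value is an integer computable by an SLP of size polynomial in the running time. The branches of $M$ are sign tests $\geq 0$ on these intermediate values. The Turing machine therefore traces the computation of $M$ symbolically, maintaining at each step an SLP for the current register contents, and whenever $M$ branches on the sign of a value it issues the corresponding $\posslp$ query (decomposing ``$=0$'' and ``$>0$'' tests into a constant number of $\posslp$ queries). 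Since $M$ runs in polynomial time, only polynomially many steps and queries arise, and the SLPs stay polynomial-sized.

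The main obstacle is handling the branching semantics faithfully and bounding the bit-size of the constructed SLPs. In particular the BSS model branches on tests like $x \geq 0$, which distinguish the three cases $x>0$, $x=0$, $x<0$, whereas $\posslp$ only answers strict positivity; I would reduce each BSS branch to a constant number of $\posslp$ queries (for instance testing the signs of the value and its negation to detect zero). A further delicate point is that the constant-free BSS model as defined may or may not permit division; if it does, one must argue that divisions can be cleared or deferred so that the maintained quantities remain integers expressible by division-free SLPs, or else invoke the equivalence of the division-free and division-enabled constant-free models. Managing these case distinctions while keeping everything polynomially bounded is where the real care is needed; the rest is a routine step-by-step simulation in both directions.
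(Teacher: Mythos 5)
The paper itself contains no proof of this statement: it is imported verbatim as Proposition~1.1 of Allender et al.\ \cite{Allender06onthe}, so the only comparison possible is against the argument in that reference. Your sketch is, in outline, exactly that standard argument: for $\P^{\posslp} \subseteq \BP(\P_{\IR}^{0})$, a constant-free BSS machine parses each queried SLP, evaluates it with unit-cost real arithmetic, and branches on the sign; for the converse, a Turing machine simulates the BSS machine symbolically, maintaining SLPs for the register contents and resolving each branch with a constant number of $\posslp$ queries. The one point where your write-up is genuinely incomplete is division. The class $\P_{\IR}^{0}$ is defined over the field $\IR$, so division is a permitted operation and cannot be waved off with ``may or may not permit division''; your phrase ``cleared or deferred'' gestures at the right fix but never carries it out. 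The standard resolution is to maintain each register value $v$ as a \emph{pair} of division-free SLPs computing integers $p,q$ with $v = p/q$ (with $q \neq 0$ along any legal computation path), to implement $+,-,\times,\div$ by fraction arithmetic (each BSS step adds only constantly many gates to the numerator and denominator SLPs, so sizes stay polynomial), and to replace a sign test on $v$ by a sign test on $pq$, which is one $\posslp$ query. With that paragraph added, your proof is complete and coincides with the proof in \cite{Allender06onthe}.
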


Another motivation for the complexity of $\posslp$ comes from its connection to the task of numerical computation. Here we recall this connection from \cite{Allender06onthe}. \cite{Allender06onthe} defined the following problem to formalize the task of numerical computation:

\begin{problem}[Generic Task of Numerical Computation ($\gtnc$) \cite{Allender06onthe}]
Given a straight-line program $P$ with $n$ variables, and given inputs $\lst a n$ for $P$ (as floating-point numbers) and an integer $k$ in unary, compute a floating-point approximation of $P(\lst a n)$ with $k$ significant bits.
\end{problem}

The following result was also demonstrated in \cite{Allender06onthe}.
\begin{thm}[Proposition 1.2 in \cite{Allender06onthe}]
\label{thm:posslpgtnceq} $\gtnc$ is polynomial-time Turing equivalent to $\posslp$.
\end{thm}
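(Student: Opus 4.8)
The plan is to establish the two directions of the Turing equivalence separately. The reduction $\posslp \le^{\P}_{T} \gtnc$ is short, whereas $\gtnc \le^{\P}_{T} \posslp$ carries the bulk of the work and rests on the observation that a $\posslp$ oracle lets us compare any two integers presented by SLPs: to test $A \ge B$ we form an SLP computing $A - B$ (concatenating the two SLPs and one subtraction) and query the oracle on its sign. This comparison primitive is what enables the binary searches below.

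For $\posslp \le^{\P}_{T} \gtnc$, given an SLP computing $N \in \IZ$ I view it as an SLP with no variables and ask the $\gtnc$ oracle for a floating-point approximation $\tilde N$ of $N$ to a constant number of significant bits, say $k = 2$. Because a $k$-significant-bit approximation incurs only relative error, $\tilde N$ has the same sign as $N$ whenever $N \neq 0$, and $\tilde N = 0$ when $N = 0$; hence I output ``$N > 0$'' exactly when $\tilde N > 0$, using a single oracle call.

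For $\gtnc \le^{\P}_{T} \posslp$, given a division-free SLP $P$, floating-point inputs $a_i = m_i 2^{e_i}$, and $k$ in unary, I first substitute the inputs. Each $a_i$ is a dyadic rational with a polynomial-size SLP (repeated squaring for $2^{e_i}$), so after clearing denominators I obtain polynomial-size SLPs for integers $p$ and $q = 2^{D}$ with $P(a_1,\dots,a_n) = p/q$ and $q > 0$. A constant number of $\posslp$ queries on $p$ and $-p$ determine the sign of the result and detect the case $p = 0$ (in which I return $0$). Next I locate the binade by binary-searching for the integer $t$ with $2^{t} \le \abs{p}/q < 2^{t+1}$, each step comparing $\abs{p}$ against $2^{t}q$ with one $\posslp$ query; since $\abs{p}, q \le 2^{2^{m}}$ the exponent $t$ ranges over a set of polynomial bit-length, so this costs polynomially many queries. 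Finally I extract the leading bits by binary-searching for $M \eqdef \flr{\abs{p}\,2^{k-1-t}/q}$, which satisfies $2^{k-1} \le M < 2^{k}$, comparing the SLP-encoded integers $Mq$ and $\abs{p}2^{k-1-t}$ after rescaling both by a common nonnegative power of two to keep every quantity integral. This uses $O(k)$ queries, and I return $\tilde v = \pm M \cdot 2^{t-k+1}$, a floating-point number agreeing with $p/q$ in its top $k$ bits.

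The main obstacle is this second direction, specifically the bookkeeping needed to keep each intermediate quantity an integer with a polynomial-size SLP while the exponents $e_i$, $t$, and $k-1-t$ range over possibly negative, exponentially large values. The crucial feature making this work is that the search ranges for both the exponent $t$ and the mantissa $M$ have only polynomial bit-length, so binary search through the comparison primitive supplied by the $\posslp$ oracle terminates after polynomially many queries. Correctness then amounts to verifying that $\tilde v$ matches $p/q$ in its top $k$ significant bits, which follows directly from the defining inequalities of $t$ and $M$.
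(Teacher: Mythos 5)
The paper gives no proof of this theorem at all---it is imported verbatim as Proposition 1.2 of Allender et al.---and your argument is essentially the standard proof from that reference: a single constant-precision $\gtnc$ query recovers the sign of $N$, while conversely a $\posslp$ oracle yields a comparison primitive for SLP-encoded integers, with which you clear the dyadic denominators of the inputs and binary-search first for the exponent and then for the $k$-bit mantissa. Your proof is correct as written; the only cosmetic points are that truncation gives relative error $2^{1-k}$, so one should query with precision $k+1$, and that in the original formulation of $\gtnc$ (which permits division gates) one tracks numerator/denominator pairs of SLPs through the circuit rather than a single power-of-two denominator.
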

\subsection{How Hard is $\posslp$?}
$\gtnc$ can be viewed as the task that formalizes what is computationally efficient when we are allowed to compute with arbitrary precision arithmetic. Conversely, the BSS model can be viewed as formalizing computational efficiency where we have infinite precision arithmetic at no cost. \ref{thm:posslpbsseq} and \ref{thm:posslpgtnceq} show that both these models are equivalent to $\posslp$ under polynomial-time Turing reductions. One can also view these results as an indication that $\posslp$ is computationally intractable. Despite this, no unconditional non-trivial hardness results are known for $\posslp$. Still, a lot of important computational problems reduce to $\posslp$. We briefly survey some of these problems now. By the $n$-bit binary representation of an integer $N$ with the condition $\lvert N \rvert < 2^n$, we mean a binary string with a length of $n+1$. This string consists of a sign bit followed by $n$ bits encoding the absolute value of $N$, with leading zeros added if necessary. A very important problem in complexity theory is the $\eqslp$ problem defined as:
\begin{problem}[$\eqslp$, \cite{Allender06onthe}]
Given a straight-line program representing an integer $N$, decide whether $N = 0$.
\end{problem}
$\eqslp$ is also known to be equivalent to arithmetic circuit identity testing ($\acit$) or polynomial identity testing \cite{Allender06onthe}. It is easy to see that $\eqslp$ reduces to $\posslp$: $N \in \IZ$ is zero if and only if $1 - N^2 > 0$.
Recently, a conditional hardness result was proved for $\posslp$ in \cite{burgisser2023hardness}, formalized below.
\begin{thm}[Theorem 1.2 in \cite{burgisser2023hardness}]
If a constructive variant of the radical conjecture of \cite{saxenaradicalstock2018} is true and $\posslp \in \BPP$ then $\NP \subseteq \BPP$.
\end{thm}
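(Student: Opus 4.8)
The plan is to show that the two hypotheses together let us solve a fixed $\NP$-complete problem in randomized polynomial time. Since $\posslp \in \BPP$, any algorithm that makes polynomially many calls to a $\posslp$ oracle can be converted into a $\BPP$ algorithm: amplify each oracle call to drive its error below $1/\mathrm{poly}$, and use the fact that $\BPP$ is low for itself, so that $\BPP^{\BPP} = \BPP$. Thus it suffices to exhibit a randomized polynomial-time reduction from some $\NP$-complete problem to $\posslp$, in which the constructive radical conjecture of \cite{saxenaradicalstock2018} is permitted as a preprocessing primitive. I would begin from a combinatorial $\NP$-complete problem that is naturally arithmetized, such as a counting-flavoured variant of satisfiability or subset sum, and build from the instance a single univariate polynomial $f \in \IZ[x]$ represented by a small SLP (using repeated squaring to realize high powers encoding the combinatorial weights), whose root structure encodes the answer.

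The conceptual core of the reduction is to replace a question about coefficients or solution counts — which cannot be read off by sign comparisons alone, and which in fact underlies the $\mathsf{\#P}$-hardness of the related problem $\bitslp$ — by a question about the \emph{roots} of $f$, since the latter is exactly what $\posslp$ can interrogate. Concretely, for an SLP-represented univariate $f$ of polynomially bounded degree, a Sturm-sequence (sign-variation) computation decides how many real roots $f$ has in a prescribed interval using only the signs of the polynomials in the sequence evaluated at the interval endpoints; each such sign is a single $\posslp$ query on an SLP derived from $f$. I would arrange the encoding so that the sought $\NP$ witness exists if and only if $f$ has a root in a designated interval, reducing the decision to polynomially many sign queries.

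The obstacle is that the natural arithmetization produces an $f$ whose degree is exponential in the input size — the high powers appear precisely because the SLP packs combinatorial weights — so the Sturm sequence would have exponentially many terms and the naive root count is infeasible. This is where the constructive radical conjecture enters: it is invoked to pass from $f$ to a small, explicitly constructed SLP for its radical (squarefree part) $\mathrm{rad}(f)$, which has the same real roots as $f$ but controlled degree; working with $\mathrm{rad}(f)$ keeps the Sturm-type computation, and hence the number and size of the $\posslp$ queries, polynomially bounded. Randomization is used for the auxiliary identity tests on which the construction of the radical and the sign computations rely — recall $\eqslp$ is equivalent to $\acit$ and is only known to lie in coRP — and, if the encoding requires pinning down a unique witness, for an isolation step in the spirit of Valiant--Vazirani; together these explain why the conclusion lands in $\BPP$ rather than $\P$.

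I expect the main difficulty to be matching the precise form of the constructive radical conjecture to what the reduction consumes: one needs not merely that $\mathrm{rad}(f)$ admits a small circuit, but that such a circuit is produced in (randomized) polynomial time and that the degree of $\mathrm{rad}(f)$ — equivalently the number of distinct roots of $f$ — is small enough that the sign-variation count terminates quickly. Verifying that every intermediate object in the Sturm computation remains an SLP of polynomial size, and that the whole chain of reductions preserves the $\NP$-hardness of the source problem, is the delicate part; the heart of the argument is the conceptual trade of an inaccessible coefficient/counting question for an accessible sign-of-roots question via the radical.
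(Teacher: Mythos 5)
This statement is not actually proved in the paper you are reading: it is quoted verbatim as Theorem~1.2 of \cite{burgisser2023hardness} and serves only as motivation, so the comparison can only be against correctness in the abstract and against the cited source. Your outer framing is sound: since $\BPP$ is low for itself, a randomized polynomial-time Turing reduction from an $\NP$-complete problem to $\posslp$, combined with $\posslp \in \BPP$, does yield $\NP \subseteq \BPP$. The difficulty is that you never construct the reduction, and the one mechanism you propose for it rests on a property that the radical conjecture does not supply.

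The concrete gap is this: you claim the constructive radical conjecture lets you pass from $f$ to $\mathrm{rad}(f)$, ``which has the same real roots as $f$ but controlled degree.'' The conjecture controls the circuit \emph{size} of $\mathrm{rad}(f)$, not its \emph{degree}: $\deg(\mathrm{rad}(f))$ equals the number of distinct complex roots of $f$, and for SLP-represented polynomials this is typically exponential in the SLP size. For example, $x^{2^s}-1$ has an SLP of size $O(s)$, is already squarefree (hence equal to its own radical), and has $2^s$ distinct roots. Neither hypothesis of the theorem bounds this quantity, so your Sturm-sequence computation on $\mathrm{rad}(f)$ still has exponentially many terms and the reduction collapses. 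You flag this yourself as ``the main difficulty,'' but it is not a verification step---it is the crux, and no arithmetization is offered that encodes an $\NP$-complete problem in the real roots of a polynomial having only polynomially many distinct roots (the natural candidates, such as the Perrucci--Sabia polynomial used elsewhere in this paper, have no such bound). Note also that the tool that actually makes the radical useful is of a different character: $\degslp$ reduces to $\posslp$ (Allender et al.) \emph{with no restriction on the degree}, so under the constructive radical conjecture one can determine $\deg(\mathrm{rad}(f))$, i.e.\ the number of distinct roots of $f$, by binary search with a $\posslp$ oracle even when that number is exponential. A reduction exploiting the conjecture should therefore hide the hardness in the \emph{count} of distinct roots rather than in a sign-variation computation that is feasible only at polynomial degree. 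As written, your proposal is a plan whose key step is both unimplemented and premised on a degree bound the conjecture does not give, so it does not constitute a proof.
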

As for upper bounds on $\posslp$, $\posslp$ was shown to be in the counting hierarchy $\CH$ in \cite{Allender06onthe}. This is still the best-known upper bound on the complexity of $\posslp$. Another important problem is the sum of square roots, defined as follows:
\begin{problem}[Sum of Square Roots ($\ssp$)]
Given a list $(\lst a  n)$ of positive integers and a list $(\lst \delta  n) \in \{\pm 1\}^n$ of signs, decide if $\sum_{i=1}^{n} \delta_i \sqrt{a_i}$ is positive.
\end{problem}
$\ssp$ is widely recognized and finds applications in computational geometry and various other domains. The Euclidean traveling salesman problem, whose inclusion in $\NP$ is not known, is easily seen to be in $\NP$ relative to $\ssp$. $\ssp$ is conjectured to be in $\P$ in \cite{Malajovich2001AnEV} but this is far from clear. Still, one can show that $\ssp$ reduces to $\posslp$ \cite{TIWARI1992393,Allender06onthe}. There are several other problems related to straight line program which are intimately related to $\posslp$. For instance, the following problems were also introduced in \cite{Allender06onthe}. These problems would be useful in our discussion later.

\begin{problem}[$\bitslp$]
  Given a straight-line program representing $N$, and given $n, i \in \IN$ in binary, decide whether the $\th{i}$ bit of the $n$-bit binary representation of $N$ is 1.
\end{problem}
It was also shown in \cite{Allender06onthe} that $\posslp$ reduces to $\bitslp$. Although we do not know any unconditional hardness results for $\posslp$, $\bitslp$ was shown to be $\#\P$-hard in \cite{Allender06onthe}. Another important problem related to $\posslp$
is the following $\degslp$ problem, which was shown to be reducible to $\posslp$ in \cite{Allender06onthe}.
\begin{problem}[$\degslp$]
Given a straight-line program representing a polynomial $f \in \IZ[x]$ and a natural number $d$ in binary, decide whether $\deg(f) \leq d$.
\end{problem}
The problem $\degslp$ was posed in \cite{Allender06onthe} for multivariate polynomials, here we have considered its univariate version. But these are seen to be equivalent under polynomial time many one reductions \cite[Proof of Proposition 2.3]{Allender06onthe}, we recall this reduction in \ref{sec:mdegslptodegslp}. We also recall the following new problem from \cite{divtrunc21ccc} related to straight line programs, which is important to results in this paper.

\begin{problem}[$\ordslp$]
Given a straight-line program representing a polynomial $f \in \IZ[x]$ and a natural number $\ell$ in binary, decide whether $\ord{f} \geq \ell$. Here, the order of $f$, denoted as $\ord{f}$, is defined to be the largest $k$ such that $x^k \mid f$.
\end{problem}

\subsection{Our Results}

Lagrange proved in 1770 that every natural number can be represented as a sum of four non-negative integer squares \cite[Theorem 6.26]{NivenIvan199101}. Therefore, $\posslp$ can be reformulated as: Given a straight-line program representing $N \in \IZ$, decide if there exist $a,b,c,d \in \IN$ such that $N = a^2 + b^2 + c^2 + d^2$. In light of this rephrasing of $\posslp$, we study the various sum of squares variants of $\posslp$ in  \ref{sec:slpthreesos} and \ref{sec:slptwosos}. To formally state our results, we define these problems now. For convenience, we say that $n \in \IN$ is $\sss$ if it can be expressed as the sum of three squares (of integers). We study the following problem.

\begin{problem}[$\thrsos$]
Given a straight-line program representing $N \in \IZ$, decide whether $N$ is a $\sss$.
\end{problem}
One might expect that $\thrsos$ is easier than $\posslp$, but we show that $\posslp$ reduces to $\thrsos$ under polynomial-time Turing reductions. More precisely, we prove the following  \ref{thm:3soshard} in \ref{sec:slpthreesos}.
\begin{thm}
\label{thm:3soshard} $\posslp \in \P^{\thrsos}$.
\end{thm}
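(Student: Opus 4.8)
The plan is to exhibit a polynomial-time \emph{many-one} reduction from $\posslp$ to $\thrsos$, which in particular yields $\posslp \in \P^{\thrsos}$. The only number-theoretic input I would invoke is Legendre's three-square theorem: a non-negative integer $n$ fails to be a $\sss$ if and only if $n = 4^a(8b+7)$ for some $a,b \in \IN$. Combined with the trivial observation that a sum of three integer squares is always non-negative (so every negative integer is automatically not a $\sss$, and the oracle rejects it), this pins down exactly which integers the $\thrsos$ oracle accepts.

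The key idea is to apply an affine transformation to $N$ that simultaneously (i) keeps the output positive precisely when $N>0$, and (ii) forces the output into a residue class modulo $8$ that the exceptional set $4^a(8b+7)$ never meets. Concretely, I would set $M \eqdef 8N - 7$ and build an SLP for $M$ from the given SLP for $N$ in polynomial time (the constants $7$ and $8$ have $O(1)$-size SLPs, and we only append one constant multiplication and one subtraction). The reduction then makes a single query to the $\thrsos$ oracle on $M$ and outputs its answer.

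It remains to verify the equivalence $N > 0 \iff M \text{ is a } \sss$. If $N \ge 1$, then $M = 8N - 7 \ge 1 > 0$ and $M \equiv 1 \pmod 8$; since every number of the form $4^a(8b+7)$ is either $\equiv 7 \pmod 8$ (when $a=0$) or divisible by $4$ (when $a \ge 1$), whereas $M$ is odd and $\equiv 1 \pmod 8$, $M$ avoids the exceptional family and is therefore a $\sss$. Conversely, if $N \le 0$ then $M = 8N - 7 \le -7 < 0$, so $M$ cannot be a sum of three squares. This establishes correctness of the one-query reduction, and hence $\posslp \in \P^{\thrsos}$.

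The step I expect to require the most care is the modular bookkeeping in the forward direction: one must check that the chosen transform dodges the exceptional family $4^a(8b+7)$ for \emph{every} admissible $N$ simultaneously, without knowing $N$ or its (possibly doubly exponential) magnitude. Pinning the image into a single fixed good residue class mod $8$ (here $1 \pmod 8$, via the multiplier $8$ and shift $-7$; the alternative $4N-3$, which lands in $\{1,5\} \pmod 8$, works equally well) is precisely what makes the argument uniform. Everything else — the polynomial size of the SLP for $M$ and the fact that a single oracle call suffices — is routine.
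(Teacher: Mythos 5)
Your proof is correct, and it takes a genuinely different route that yields a stronger conclusion than the paper's. The paper proves $\posslp \in \P^{\thrsos}$ by a Turing reduction making several oracle calls: it first shows $\eqslp \leq_{\P} \thrsos$ (\ref{lem:3sosseqslphard}, via \ref{lem:gen3sos} stating that $7M^{4}$ is never a $\sss$ for $M \in \IZ_{+}$), uses that to test whether $N \in \{0,-1,-2\}$, and then queries the oracle on both $N$ and $N+2$, invoking the density statement (\ref{lem:nn23sos}) that for every $n \in \IN$ at least one of $n$, $n+2$ is a $\sss$. Your single affine substitution $N \mapsto 8N-7$ replaces all of this: for $N \geq 1$ the number $8N-7$ is positive and lies in the residue class $1 \pmod 8$, hence avoids the exceptional set $4^{a}(8b+7)$ (whose members are either $\equiv 7 \pmod 8$ when $a=0$ or divisible by $4$ when $a \geq 1$), so \ref{thm:threesqtheorem} makes it a $\sss$; for $N \leq 0$ the number $8N-7$ is negative and is rejected by the oracle. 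Your bookkeeping at the edge cases is sound: the three-square characterization is only ever applied to the non-negative integer $8N-7$, and negative inputs are dispatched by the trivial observation that sums of three squares are non-negative. What your route buys is a polynomial-time \emph{many-one} reduction $\posslp \leq_{\P} \thrsos$ with a single oracle query, which strengthens the theorem (it would upgrade the corresponding arrow in \ref{fig:vis} to a thick many-one arrow); moreover, composing it with the standard reduction $\eqslp \leq_{\P} \posslp$ (namely $N=0$ iff $1-N^{2}>0$) recovers \ref{lem:3sosseqslphard} without the $7M^{4}$ case analysis. What the paper's longer detour buys is those two auxiliary lemmas as standalone statements of independent interest, but for this particular theorem your argument is both simpler and stronger.
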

Similarly, we say that $n \in \IN$ is $\ss$ if it can be expressed as the sum of two squares (of integers). We also study the following problem.
\begin{problem}[$\twosos$]
Given a straight-line program representing $N \in \IZ$, decide whether $N$ is a $\ss$.
\end{problem}
These problems $\thrsos$ and $\twosos$ can also be seen as special cases of the renowned Waring problem. The Waring problem has an intriguing history in number theory. It asks whether for each $k \in \mathbb{N}$ there exists a positive integer $g(k)$ such that any natural number can be written as the sum of at most $g(k)$ many $\th{k}$ powers of natural numbers. Lagrange's four-square theorem can be seen as the equality $g(2) = 4$. Later, Hilbert settled the Waring problem for integers by proving that $g(k)$ is finite for every $k$ \cite{HilbertBeweisFD}. Therefore problems $\twosos$ and $\thrsos$ can be seen as computational variants of the Waring problem. These computational variants of the Waring problems are extensively studied in computer algebra and algebraic complexity theory and Shitov actually proved that computing the Waring rank of multivariate polynomials is $\exists\IR$-hard \cite{shitov2016hard}.
For $\twosos$, we prove the following conditional hardness result in \ref{sec:slptwosos}.
\begin{thm}
\label{thm:2soshard} If the generalized Cram{\'e}r conjecture A 
(\ref{conj:gencramconj}) is true, then $\posslp \in \NP^{\twosos}$.
\end{thm}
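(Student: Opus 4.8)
The plan is to show that deciding $N>0$ reduces to a single membership query to $\twosos$ after a nondeterministic guess of a small shift. The guiding identity is elementary: since a sum of two squares is nonnegative, if $N-1-k$ is a $\ss$ for some integer $k\ge 0$, then $N\ge k+1\ge 1$, so $N>0$. Conversely, I will invoke the generalized Cram\'er conjecture A (\ref{conj:gencramconj}) to guarantee that whenever $N\ge 1$ there is a genuine $\ss$ just below $N-1$, witnessed by a small $k$.

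Concretely, on input an SLP of size $m$ computing $N$ (so $\abs{N}\le 2^{2^m}$), set $T\eqdef\max(x_0,\,C\cdot 4^m)$, where $C$ and the threshold $x_0$ are the constants furnished by the generalized Cram\'er conjecture A. The nondeterministic machine guesses $k\in\{0,1,\dots,T\}$, which has only $O(m)$ bits; it then builds in polynomial time an SLP for $N-1-k$ (appending to the given SLP an $O(m)$-size subprogram computing the constant $1+k$ by repeated doubling), queries the $\twosos$ oracle on this SLP, and accepts iff the oracle answers yes. This is a polynomial-time nondeterministic computation making a single oracle call, so its acceptance defines a language in $\NP^{\twosos}$.

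For soundness, if the machine accepts along some branch then $N-1-k$ is a $\ss$ for the guessed $k\ge 0$; being a sum of two squares it is nonnegative, whence $N\ge k+1\ge 1$, i.e.\ $N>0$. For completeness, suppose $N\ge 1$ and consider the interval $I\eqdef[N-1-T,\,N-1]$ of length $T$. If $N-1\le T$ then $0\in I$, and since $0=0^2+0^2$ is a $\ss$ the value $k\eqdef N-1\in\{0,\dots,T\}$ is a valid witness. If instead $N-1>T\ge x_0$, the generalized Cram\'er conjecture A guarantees a prime $p\equiv 1\pmod 4$ inside $I$ (the gap between consecutive such primes near $x$ is at most $C\log^2 x\le C\cdot 4^m$ for $x\le 2^{2^m}$); such a $p$ is a $\ss$ by Fermat's two-square theorem, and writing $p=N-1-k$ yields a valid $k\in\{0,\dots,T\}$. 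In either case an accepting branch exists, so the reduction is correct.

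The crux is the completeness step, and it is exactly there that the conjecture is indispensable: the interval $I$ must contain a sum of two squares whose distance from $N-1$ is small enough to be guessed with polynomially many bits. A prime $p\equiv 1\pmod 4$ is the natural witness, and the generalized Cram\'er conjecture A supplies one within the admissible gap $T=2^{O(m)}$ for every $x$ up to the double-exponential bound $2^{2^m}$. Note that the shift is \emph{subtracted} rather than added, precisely because we must certify the lower bound $N\ge 1$; adding a nonnegative $k$ would only weaken the conclusion. Once the witness is in hand, the remaining ingredients---bounding $\abs{N}$, assembling the shifted SLP, and the single oracle call---are routine.
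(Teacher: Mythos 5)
Your reduction is correct, but it takes a genuinely different route from the paper's, and in fact streamlines it. The paper shifts \emph{upward}: it guesses $S \eqdef p - \abs{N}$ for the smallest prime $p \equiv 1 \pmod 4$ with $p \geq \abs{N}$ and queries whether $N + S$ is a $\ss$. With an upward shift a ``yes'' answer alone does not certify $N > 0$, so the paper needs two extra ingredients: it first computes $N \bmod (2M+1)$ for $M = 2^{3s}$, guesses the exact value of $N$ in case $\abs{N} \leq M$, and verifies that guess with an $\eqslp$ oracle, which requires \ref{lem:2sosseqslphard} ($\eqslp \leq_{\P} \twosos$); and for $\abs{N} > M$ it must argue that a negative $N$ satisfies $N < -M$, so that the shifted value stays negative and the oracle cannot be fooled. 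Your downward shift $N-1-k$ with $k \geq 0$ makes soundness immediate (a $\ss$ is nonnegative, so acceptance forces $N \geq k+1 \geq 1$), and your observation that $0$ is a $\ss$ absorbs the small-$N$ case via the witness $k = N-1$; you therefore need neither the modular recovery step, nor \ref{lem:2sosseqslphard}, nor any case analysis on the sign and magnitude of $N$. You also correctly identified that subtracting the extra $1$ is what prevents $N=0$ from being accepted. Both proofs use \ref{conj:gencramconj} in the same way: to place a prime $\equiv 1 \pmod 4$ (a $\ss$ by \ref{thm:twososchar}) within a gap small enough to be guessed with polynomially many bits.

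One step needs patching. The conjecture bounds $G_{4,1}(x)$, the maximal gap between consecutive primes $\equiv 1 \pmod 4$ that are \emph{both} at most $x$. The gap you need is the one containing $N-1$: between the largest such prime $p^{*} \leq N-1$ and the next one $p^{**} > N-1$. Applying the conjecture at scale $p^{**}$ gives $p^{**} - p^{*} = O(\log^{2} p^{**})$, and to convert this into your bound $C \cdot 4^{m}$ you must first bound $p^{**}$ in terms of $N$; your inequality ``$\log^{2} x \leq 4^{m}$ for $x \leq 2^{2^{m}}$'' is being applied to a prime that is not a priori below $2^{2^{m}}$. The paper faces exactly this issue and resolves it by citing Breusch's theorem, which supplies a prime $\equiv 1 \pmod 4$ in $[\abs{N}, 2\abs{N}]$; you can do the same, or note that $x \mapsto x - O(\log^{2} x)$ is eventually increasing, so $p^{**} \leq (N-1) + O(\log^{2} p^{**})$ already forces $p^{**} < 2N$ for large $N$. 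Either way this is a one-line fix, not a flaw in the approach; with it, your argument is a complete and somewhat simpler proof of the theorem.
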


We also study whether $\thrsos$ can be reduced to $\posslp$. Unfortunately, we cannot show this reduction unconditionally. Hence we study and rely on the following problem $\divslp$, which might be of independent interest. One can view 
$\divslp$ as the variant of $\ordslp$ for numbers in binary.

\begin{problem}[$\divslp$]
  Given a straight-line program representing $N \in \IZ$, and a natural number $\ell$ in binary, decide if $2^{\ell}$ divides $\abs{N}$ \ie if the $\ell$ least significant bits of $\abs{N}$ are zero.
\end{problem}

We show that if we are allowed oracle access to both $\posslp$ and $\divslp$ oracles then $\thrsos$ can be decided in polynomial time, formalized below in \ref{thm:threesostodivslp}. A proof can be found in \ref{sec:slpthreesos}.

\begin{thm}
\label{thm:threesostodivslp} $\thrsos \in \P^{\{\divslp, \posslp\}}$.
\end{thm}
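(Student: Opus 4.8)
The plan is to use the classical characterization of numbers expressible as a sum of three squares, due to Legendre and Gauss: a natural number $n$ is a $\sss$ if and only if $n$ is \emph{not} of the form $4^{a}(8b+7)$ for non-negative integers $a,b$. The whole strategy is to convert the membership question for $\thrsos$ into checking this arithmetic condition, using the two oracles to extract exactly the information about $N$ that the condition needs. First I would invoke the $\posslp$ oracle once to determine the sign of $N$; if $N<0$ we reject immediately (negative integers are not sums of squares), and if $N=0$ we accept (this sign test also subsumes the $\eqslp$ check via $1-N^2>0$, or can be done directly). So we may assume $N>0$.

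Next I would use the $\divslp$ oracle to compute the $2$-adic valuation $v\eqdef v_2(N)$, i.e.\ the largest $\ell$ with $2^{\ell}\mid \abs{N}$. Since $\abs N \le 2^{2^m}$, the valuation $v$ is at most $2^m$, which has polynomially many bits; thus I can find $v$ exactly by binary search over $\ell\in\{0,1,\dots,2^m\}$, making polynomially many calls to $\divslp$ to locate the threshold where $2^{\ell}\mid N$ flips to false. Writing $N=2^{v}\cdot M$ with $M$ odd, the Legendre--Gauss condition becomes transparent: the bad form $4^{a}(8b+7)$ has even $2$-adic valuation $2a$ together with an odd part congruent to $7 \pmod 8$. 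Hence $N$ is \emph{not} a $\sss$ precisely when $v$ is even \emph{and} $M\equiv 7 \pmod 8$; equivalently, $N$ is a $\sss$ iff $v$ is odd or $M\not\equiv 7\pmod 8$.

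It therefore remains to decide, when $v$ is even, whether $M = N/2^{v}$ satisfies $M\equiv 7 \pmod 8$, i.e.\ to read off the three least significant bits of the odd part $M$. The point is that I can do this with further $\divslp$-style tests rather than computing $M$ outright. One clean way: note $M\equiv 7\pmod 8$ is equivalent to $M-7\equiv 0\pmod 8$, i.e.\ $2^{v}M-7\cdot 2^{v}=N-7\cdot 2^{v}$ is divisible by $2^{v+3}$. I would build an SLP for $N-7\cdot 2^{v}$ (the constant $7\cdot 2^{v}$ is cheaply computable by repeated squaring since $v$ has polynomial bit-length) and query $\divslp$ on it with exponent $v+3$; more generally, reading the low bits of $M$ reduces to checking divisibility of shifted-and-corrected SLPs by powers of two, each a single $\divslp$ call. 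Since $v$ is even, $M\equiv 7\pmod 8$ iff this divisibility fails in the appropriate sense, and combining this Boolean with the parity of $v$ yields the answer. All of this uses only polynomially many oracle calls and polynomial-time SLP manipulations.

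The main obstacle is the last step: extracting the residue $M\bmod 8$ of the odd part without access to a general bit-extraction oracle like $\bitslp$. The subtlety is that the low-order bits of $N$ are the low bits of $2^v$ times $M$, so one must correctly ``divide out'' the factor $2^{v}$ at the level of SLPs while only having a divisibility oracle. I expect the cleanest route is to recast ``the $k$-th bit of $M$ is $1$'' as a divisibility statement about an SLP of the form $N - c\cdot 2^{v}$ for a suitable small correction constant $c$, so that each bit query becomes one $\divslp$ call; verifying that these corrections can be computed and assembled into SLPs of polynomial size, and that three such queries suffice to pin down $M\bmod 8$, is the technically delicate part of the argument.
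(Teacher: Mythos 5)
Your proposal is correct and takes essentially the same approach as the paper: a $\posslp$ call for the sign, binary search with $\divslp$ to find the $2$-adic valuation $v$, and then a single $\divslp$ query on a shifted SLP to test whether the odd part is $\equiv 7 \pmod 8$ --- the paper adds $2^{v}$ (so a carry turns the pattern $111$ followed by $v$ zeros into $v+3$ trailing zeros) and checks divisibility by $2^{v+3}$, which is the same trick as your subtraction of $7\cdot 2^{v}$. The ``technically delicate'' last step you worry about is already fully resolved by your own third paragraph: one divisibility query (not three) suffices, since $M\equiv 7\pmod 8$ iff $2^{v+3}$ divides $N-7\cdot 2^{v}$, so there is no gap.
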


We also study how $\divslp$ is related to other problems related to straight line programs. To this end, we prove the following \ref{thm:divslphardasdegslp} in  \ref{sec:slpthreesos}.

\begin{thm}
\label{thm:divslphardasdegslp} $\ordslp \equiv_{\P} \degslp \leq_{\P} \divslp$.
\end{thm}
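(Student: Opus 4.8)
The plan is to establish the three claims separately: the two reductions making up $\ordslp \equiv_{\P} \degslp$, and then $\degslp \leq_{\P} \divslp$. The single tool behind the first equivalence is polynomial reversal, which interchanges the roles of degree and order; the tool behind the last reduction is a Kronecker-style evaluation at a power of two, which converts divisibility of a polynomial by $x^{k}$ into divisibility of an integer by $2^{\ell}$.

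For the equivalence, let an SLP of size $m$ compute $f \in \IZ[x]$, and let $D$ be the syntactic degree bound obtained by assigning degree $0$ to the constant gate, degree $1$ to the input $x$, the maximum of the inputs' bounds to each $\pm$ gate, and the sum to each $\times$ gate; thus $\deg f \le D \le 2^{m}$. First I would build an SLP for the homogenization $F(x,y)$, the polynomial with $F_i = y^{\deg_i} f_i(x/y)$ at every gate $i$: products pass through unchanged ($F_i = F_j F_k$) while a sum gate becomes $F_i = y^{\deg_i - \deg_j} F_j \pm y^{\deg_i - \deg_k} F_k$, where the needed powers of $y$ (all at most $D$) are precomputed once by repeated squaring, keeping the SLP of polynomial size. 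Writing $f = \sum_i a_i x^i$ we get $F(x,y) = \sum_i a_i x^{i} y^{D-i}$, so $F(1,x) = \sum_i a_i x^{D-i} = \rev_D f(x)$, and substituting $1$ for the first coordinate $x$ and the indeterminate for $y$ yields an SLP for $\rev_D f$. Since
\[
\deg(\rev_D f) = D - \ord{f}, \qquad \ord{\rev_D f} = D - \deg(f),
\]
we obtain $\ord{f} \ge \ell \iff \deg(\rev_D f) \le D - \ell$ and $\deg f \le d \iff \ord{\rev_D f} \ge D - d$ (both for the relevant range $0 \le \ell, d \le D$; the out-of-range cases are trivial or amount to a zero test, which itself reduces to $\degslp$ via $f \mapsto x f$). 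These are exactly the many-one reductions $\ordslp \leq_{\P} \degslp$ and $\degslp \leq_{\P} \ordslp$.

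For $\degslp \leq_{\P} \divslp$ I would first reverse, reducing $\deg f \le d$ to the order test $\ord{g} \ge k$ with $g = \rev_D f$ and $k = D - d$ (assuming $d < D$, else the answer is trivially yes). The key step is to evaluate $g$ at $x = 2^{t}$ for a suitably large $t$. If $\ord{g} = r$ and $c_r$ is the lowest nonzero coefficient, then $g(2^{t}) = 2^{tr}\paren{c_r + c_{r+1}2^{t} + \cdots}$, and as soon as $t > v_2(c_r)$ the bracket has $2$-adic valuation exactly $v_2(c_r) < t$, so $v_2\paren{g(2^{t})} = t r + v_2(c_r)$. Comparing against the threshold $tk$ gives a clean dichotomy: if $r \ge k$ then $v_2(g(2^t)) \ge tk$, while if $r < k$ then $v_2(g(2^t)) \le t(k-1) + v_2(c_r) < tk$. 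Hence $\ord{g} \ge k \iff 2^{tk} \mid g(2^{t})$, which is precisely a $\divslp$ query on the integer $N = g(2^{t})$ with $\ell = tk = t(D-d)$. An SLP for $N$ is obtained from the SLP for $g$ by substituting the constant $2^{t}$ (computed by repeated squaring) for the variable, and all sizes stay polynomial.

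The main obstacle is the choice of $t$ in the last reduction: the $2$-adic argument is valid only once $t$ exceeds the $2$-adic valuation of every coefficient of $g$. I would bound this by the bit-length of those coefficients, which for a size-$s$ SLP is at most $2^{O(s)}$; taking $t$ to be a power of two just above this bound keeps $t$ representable with $O(s)$ bits, computable by an $O(s)$-size SLP, and keeps $\ell = t(D-d)$ of polynomial bit-length, as $\divslp$ requires. The remaining work is bookkeeping: verifying that the homogenization produces a polynomial-size SLP and that the degenerate cases ($f = 0$, or $d \ge D$) are handled consistently, both of which fall out of the formulas above.
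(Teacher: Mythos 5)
Your proposal is correct and takes essentially the same route as the paper: a reversal construction (your bivariate homogenization $F(x,y)$ specialized at $x=1$ is the paper's gate-by-gate construction of $x^{m}f(1/x)$ in \ref{lem:reversetheslp}) exchanges degree and order to give both directions of $\ordslp \equiv_{\P} \degslp$, and the reduction to $\divslp$ is the same evaluation of the (reversed) polynomial at a power of two whose exponent exceeds the coefficient bit-length, with the identical 2-adic valuation dichotomy. The only differences are bookkeeping: the paper uses sum-of-children degree bounds at every gate (so that all needed powers of $x$ come from one SLP obtained by turning additions into multiplications) and fixes $t=2^{3s}$ explicitly, whereas you use the sharper syntactic degree and precompute powers of $y$ by repeated squaring.
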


As for the hardness results for $\thrsos$ and $\twosos$, we also show that similar to $\posslp$, $\eqslp$ reduces to both $\thrsos$ and $\twosos$. Analogous to integers, we also study the complexity of deciding the positivity of univariate polynomials computed by a given SLP. In this context, we study the following problem.

\begin{problem}[$\pospolyslp$]
Given a straight-line program representing a univariate polynomial $f \in \IZ[x]$, decide if $f$ is positive\ie $f(x) \geq 0$ for all $x \in \IR$.
\end{problem}

We prove that in contrast to $\posslp$, hardness of $\pospolyslp$ can be proved unconditionally, formalized below in \ref{thm:pospolyslpsonphard}.

\begin{thm}
\label{thm:pospolyslpsonphard} $\pospolyslp$ is $\coNP$-hard under polynomial-time many-one reductions.
\end{thm}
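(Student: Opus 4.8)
The plan is to establish $\coNP$-hardness by reducing a $\coNP$-complete problem to $\pospolyslp$; equivalently, I reduce an $\NP$-complete problem to the complement $\{f : \exists x_0\in\IR,\ f(x_0)<0\}$, and I will use (the complement of) Subset Sum. Two features guide the construction. First, succinctness is essential: a \emph{dense} univariate polynomial can be tested for nonnegativity in polynomial time (via Sturm sequences or real-root isolation), so the hardness must come from the fact that an SLP of size $s$ can encode a polynomial of degree $2^s$. Second, I want to avoid routing the reduction through the sign of a single evaluated integer, since that would only yield conditional hardness via $\posslp$; instead the universal real quantifier itself must carry the combinatorial content, and the construction should sidestep extracting a designated coefficient of a generating function, which is a $\#\P$-flavoured task.

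For a Subset Sum instance $(a_1,\dots,a_n,t)$ I consider the sum-of-squares polynomial $F(y_1,\dots,y_n) = \big(\sum_{i} a_i y_i - t\big)^2 + \sum_{i}\big(y_i(1-y_i)\big)^2$. Clearly $F\ge 0$ on $\IR^n$, and $F(y)=0$ iff $y\in\{0,1\}^n$ and $\sum_i a_i y_i = t$, i.e.\ iff $y$ encodes a valid subset. The key quantitative step is a lower bound on $\min_{\IR^n} F$ in the \emph{no}-case: writing any $y$ as a perturbation $v+\delta$ of its nearest cube vertex $v$, the penalty term contributes $\approx \sum_i \delta_i^2$, while forcing the first term to vanish costs $\big(\sum_i\delta_i^2\big)^{1/2} \ge \abs{\sigma_v - t}\big/\big(\sum_i a_i^2\big)^{1/2} \ge \big(\sum_i a_i^2\big)^{-1/2}$, since $\sigma_v := \sum_i a_i v_i$ is an integer different from $t$. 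Hence, when no subset sums to $t$, $\min_{\IR^n} F \ge \gamma$ for an explicit $\gamma$ that is only inverse-exponential in the input size.

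I then choose an integer $M$ (a power of two of polynomially many bits) with $M\gamma > 1$ and set $H := M\cdot F - 1 \in \IZ[y_1,\dots,y_n]$. In the \emph{yes}-case $H$ attains the value $-1$ at the witnessing $0/1$ point, whereas in the \emph{no}-case $H \ge M\gamma - 1 > 0$ everywhere; thus $H$ is nonnegative on its whole domain iff no subset sums to $t$. As $H$ is built only from $+,-,\times$ and the integer constants $a_i,t,M$ (the latter by repeated squaring), it admits an SLP of size polynomial in the input, and the whole reduction is computable in polynomial time. This already yields $\coNP$-hardness for the \emph{multivariate} positivity problem while completely avoiding coefficient extraction.

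The remaining, and main, obstacle is to pass from several variables to the single variable that $\pospolyslp$ demands: I must produce a univariate $f$ that is nonnegative on all of $\IR$ iff $H$ is nonnegative on all of $\IR^n$. The naive Kronecker substitution $y_i = x^{d_i}$ fails, because along that curve only the all-zero and all-one vertices are reachable, so the negative dips that witness a solution would be lost. I expect this collapse to be the hardest part of the argument and the place where most of the technical care is needed; the aim is a one-variable gadget whose low-order structure still realizes every $0/1$ pattern of the $y_i$ (for instance by adjoining one-variable ``selector'' squares and tracking well-separated degree blocks), engineered so that the univariate polynomial takes a negative value exactly when $H$ does. Finally, I note that only hardness is claimed: any negative value is witnessed by a single rational point of polynomial bit-length, which keeps the complement on the $\NP$ side and is consistent with the reduction above.
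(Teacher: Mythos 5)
Your overall strategy --- build a sum-of-squares polynomial that vanishes exactly at the combinatorial witnesses, lower-bound its minimum inverse-exponentially in the no-case, then scale by a large power of two and subtract $1$ so that global nonnegativity becomes equivalent to the no-case --- is exactly the skeleton of the paper's proof. There, \ref{lem:minvalpospoly} (derived from \ref{thm:minpolyfval}) supplies the minimum bound, and $B\cdot P(W)-1$ with $B=2^{2^{2s+2}}$ plays the role of your $M\cdot F-1$. The quantitative part of your argument is fine in spirit.

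The genuine gap is the step you yourself flag and then leave open: the collapse from $n$ variables to one. As defined in the paper, $\pospolyslp$ is a problem about \emph{univariate} polynomials, so your construction of $H\in\IZ[y_1,\dots,y_n]$ only proves $\coNP$-hardness of the multivariate positivity problem, which is the easy and essentially classical half; the entire difficulty of the theorem lives in the univariate encoding. You correctly observe that the Kronecker substitution $y_i=x^{d_i}$ destroys the gadget, but the ``one-variable selector squares / degree-block'' device you gesture at is never constructed, and no argument is given that such a gadget exists, is expressible by a polynomial-size SLP, or preserves the location and magnitude of the negative dips. The paper sidesteps precisely this obstacle by invoking the Perrucci--Sabia construction (\ref{thm:pmwsmallcircuit}, \cite{PERRUCCI2007471}): each Boolean variable is tied to a distinct odd prime $p_i$, one sets $M=\prod_i p_i$, and each clause $C_i$ yields a \emph{univariate} polynomial $F_M(C_i)$ computed by a small SLP, so that $\sum_{i}F_M(C_i)^2$ has a real root iff the 3-SAT formula is satisfiable; in effect the Chinese Remainder Theorem, rather than a monomial substitution, realizes all $0/1$ assignment patterns on a single real line. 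Without this ingredient, or a worked-out replacement for it, your proposal does not prove the stated theorem. A secondary caveat: your closing remark that a negative value ``keeps the complement on the $\NP$ side'' is doubtful for SLP-encoded polynomials, since certifying $f(x_0)<0$ at a rational point is itself a $\posslp$-type task and the dip may require high precision to pin down; but since only hardness is claimed, this does not affect the statement itself.
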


In constrast to numbers, every positive polynomial can be written
as the sum of two squares (but only over the reals,
see Section~\ref{sec:polyassos} for a detailed discussion). So $\pospolyslp$ is equivalent 
to the question whether $f$ is the sum of two squares.
To conclude, we motivate and study the following problem (see \ref{sec:polyassos} for 
more details).

\begin{problem}[$\sqpolyslp$]
Given a straight-line program representing a univariate polynomial $f \in \IZ[x]$, decide if $\exists g \in \IZ[x]$ such that $f = g^2$.
\end{problem}

We show in  \ref{sec:polyassos} that $\sqpolyslp$ is in $\coRP$.

\section{\label{sec:slpthreesos}SLPs as Sums of Three Squares}
This section is concerned with studying the complexity of $\thrsos$ and related problems.
\subsection{Lower Bound for $\protect\thrsos$}
In this section, 
we prove \ref{thm:3soshard}. We use the following characterization
of integers which can be expressed as the sum of three squares.
\begin{thm}  [\cite{Legendre1797,gauss1801disquisitiones,Ankeny1957SumsOT,mordell1958representation}]
\label{thm:threesqtheorem}An integer $n$ is $\sss$ if and only
if it is not of the form $4^{a}(8k+7)$, with $a,k\in\IN.$
\end{thm}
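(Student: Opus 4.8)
The statement is the classical three-squares theorem of Legendre and Gauss, and the plan is to split it into the easy necessity direction and the substantially harder sufficiency direction. For necessity I would argue by a congruence obstruction followed by descent. Since every square is congruent to $0$, $1$, or $4$ modulo $8$, an exhaustive check of all sums of three such residues shows the possible values of $x^2+y^2+z^2$ modulo $8$ are exactly $\{0,1,2,3,4,5,6\}$, never $7$; hence no $n\equiv 7 \pmod 8$ is a sum of three squares. To strip off the factor $4^{a}$ I would show that if $4m = x^2+y^2+z^2$ then $x,y,z$ must all be even: modulo $4$ each square is $0$ or $1$, and three of them can sum to $0 \pmod 4$ only when the number of odd terms is $0$, i.e.\ all are even. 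Writing $x=2x'$, $y=2y'$, $z=2z'$ yields $m = x'^2+y'^2+z'^2$, so representability descends through every factor of $4$; combined with the residue-$7$ obstruction this rules out every $n = 4^{a}(8k+7)$.

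For sufficiency I would use the modern two-step route through the local--global principle and a lifting lemma. First I would prove that $n$ is a sum of three \emph{rational} squares exactly when $n \neq 4^{a}(8k+7)$, via the Hasse--Minkowski theorem: the ternary form $x^2+y^2+z^2$ represents $n$ over $\IQ$ iff it does so over $\IR$ and over every $\IQ_p$. Over $\IR$ every positive $n$ is represented; by Chevalley--Warning every ternary quadratic form over $\mathbb{F}_p$ with $p$ odd has a nontrivial zero, so the form is isotropic and therefore universal over $\IQ_p$ after Hensel lifting, making the local condition automatic at all odd primes; the entire obstruction is concentrated at $p=2$, where an explicit $2$-adic computation of the relevant Hilbert symbols and the Hasse invariant reproduces precisely the condition $n \neq 4^{a}(8k+7)$ (consistently, this condition is invariant under $n \mapsto 4n$, matching the fact that rational representability is). Second, I would upgrade rational solutions to integral ones by the Davenport--Cassels lemma: the form $x^2+y^2+z^2$ has integral associated bilinear form, and for any $\mathbf{v}\in\IQ^3$ the nearest lattice point $\mathbf{w}\in\IZ^3$ satisfies $\lVert \mathbf{v}-\mathbf{w}\rVert^2 \le 3\cdot(1/2)^2 = 3/4 < 1$. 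This strict inequality is exactly the hypothesis under which Davenport--Cassels guarantees that rational representability forces integral representability, which completes the sufficiency direction.

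The hard part will be the sufficiency direction, and within it the $2$-adic local solvability computation: establishing that $x^2+y^2+z^2=n$ has a solution in $\IQ_2$ iff $n \neq 4^{a}(8k+7)$ requires carefully tracking the discriminant and Hasse invariant of the form and invoking the $p=2$ classification of ternary forms, which is the genuinely delicate step. As an alternative that avoids $p$-adics, I could instead give the elementary Dirichlet-style argument: reduce to squarefree $n$ using the descent above, then seek an integer $x$ (of prescribed parity, depending on $n \bmod 8$) for which $n-x^2$ equals a prime, or twice a prime, that is $\equiv 1 \pmod 4$ and hence a sum of two squares by Fermat's theorem. The residue conditions on $n$ modulo $8$ are exactly what ensure the relevant arithmetic progression has the coprimality needed to apply Dirichlet's theorem on primes in arithmetic progressions, so this route trades the local--global machinery for Dirichlet's theorem as the principal hammer.
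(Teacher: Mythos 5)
The paper does not prove this statement at all: it is imported as a classical result, with citations to Legendre, Gauss, Ankeny, and Mordell, so your attempt can only be compared against the standard proofs in that literature. Your main route is one of them, and your sketch of it is sound. The necessity direction (squares are $0,1,4 \bmod 8$, so $7 \bmod 8$ is excluded; and $4m=x^2+y^2+z^2$ forces $x,y,z$ all even, giving descent through each factor of $4$) is complete and correct. The sufficiency route via Hasse--Minkowski plus the Davenport--Cassels lemma is the standard modern treatment (as in Serre's \emph{Course in Arithmetic}): the local analysis is right, including that the only obstruction sits at $p=2$, where $-(\IQ_2^\times)^2$ consists exactly of the $4^a(8k+7)$, and the bound $3\cdot(1/2)^2<1$ is exactly what Davenport--Cassels needs. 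One caveat worth stating: this route is not independent of Dirichlet's theorem, since the proof of Hasse--Minkowski for ternary forms itself invokes Dirichlet, so it is a repackaging rather than a strengthening of the cited elementary proofs.

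Your proposed ``alternative that avoids $p$-adics'' has a genuine gap, however. You want an integer $x$ of prescribed parity with $n-x^2$ equal to a prime (or twice a prime) that is $\equiv 1 \pmod 4$, and you justify its existence by Dirichlet's theorem. But Dirichlet's theorem produces primes in \emph{linear} arithmetic progressions, whereas $\{n-x^2 : x\in\IZ\}$ is a quadratic sequence; the assertion that it contains a prime is of Hardy--Littlewood type and is open in general, so no congruence bookkeeping modulo $8$ can rescue this step as stated. The actual Dirichlet-based proofs --- precisely the Ankeny and Mordell papers the theorem is attributed to --- work differently: for squarefree $n$ they use Dirichlet to find an \emph{auxiliary} prime $p$ in a linear progression modulo $4n$ (or $8n$), arrange $-n$ to be a quadratic residue modulo $p$ via quadratic reciprocity, and then extract an integral representation from a suitable sublattice of $\IZ^3$ by Minkowski's convex-body theorem (or, in Gauss's original approach, by genus theory of ternary forms of discriminant $1$). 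If you want the elementary route, you should replace your primality-of-$n-x^2$ step with that lattice argument; as written, only your first route constitutes a valid proof plan.
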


\ref{thm:threesqtheorem} informally implies that $\sss$ integers
are ``dense'' in $\IN$ and hence occur very frequently. A useful application of this intuitive high
density of $\sss$ integers is demonstrated below in \ref{lem:nn23sos}. More formally, Landau showed that the asymptotic density of $\sss$ integers in $\IN$ is 5/6  \cite{landau1909einteilung}. To reduce $\posslp$ to $\thrsos$, we shift the given integer (represented
by a given SLP) by a positive number to convert into $\sss$. To this
end we prove the following \ref{lem:nn23sos}.
\begin{lem}
\label{lem:nn23sos}For every $n\in\IN$, at least one element in
the set $\{n,n+2\}$ is $\sss$.
\end{lem}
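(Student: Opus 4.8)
The plan is to reduce everything to the arithmetic characterization of \ref{thm:threesqtheorem}: an integer fails to be $\sss$ exactly when it has the form $4^{a}(8k+7)$ with $a,k\in\IN$. Let me call such integers \emph{bad}. The statement to prove is equivalent to: $n$ and $n+2$ cannot both be bad. So I would argue by contradiction, or more directly show that among any two integers differing by $2$, at least one is \emph{not} bad.

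The key step is to record the residue of a bad number modulo $4$. If $a=0$, then $4^{a}(8k+7)=8k+7\equiv 3 \pmod 4$. If $a\geq 1$, then $4\mid 4^{a}(8k+7)$, so the number is $\equiv 0 \pmod 4$. Hence every bad number lies in residue class $0$ or $3$ modulo $4$; equivalently, no bad number is $\equiv 1$ or $\equiv 2 \pmod 4$. This is the only structural fact the proof needs.

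With this in hand I would just compare $n$ and $n+2$ modulo $4$. Their residues differ by $2$, so the unordered pair of residues is one of $\{0,2\}$ or $\{1,3\}$. In either case exactly one residue lies in $\{0,3\}$ and the other lies in $\{1,2\}$. Consequently at least one of $n$ and $n+2$ has residue $1$ or $2$ modulo $4$, hence is not bad, and by \ref{thm:threesqtheorem} that element is $\sss$, as required.

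I do not expect a genuine obstacle: once the characterization is invoked the whole argument is a single $\bmod\,4$ case check. The only thing to get right is the observation that the bad integers of \ref{thm:threesqtheorem} avoid the residues $1$ and $2$ modulo $4$, and that the shift by $2$ (rather than, say, by $1$) is precisely what forces the two residues to straddle the good classes $\{1,2\}$ and the potentially-bad classes $\{0,3\}$. It is worth double-checking the edge case $a\geq 1$ to confirm divisibility by $4$ (not merely by $2$) indeed pins the residue to $0$, since that is what keeps residues $1$ and $2$ entirely free of bad numbers.
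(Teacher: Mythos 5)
Your proof is correct and takes essentially the same route as the paper: both rest on the characterization of non-$\sss$ integers as $4^{a}(8k+7)$ followed by a short residue computation, and your two residue classes $\{0,3\}$ versus $\{1,2\}$ modulo $4$ correspond exactly to the paper's two cases $a>0$ and $a=0$. The only difference is organizational: you extract the uniform fact that non-$\sss$ numbers are $\equiv 0$ or $3 \pmod 4$ and argue symmetrically about the pair $\{n,n+2\}$, whereas the paper assumes $n$ is bad and shows directly that $n+2$ cannot be.
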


\begin{proof}
If $n$ is $\sss$ then we are done. Suppose $n$ is not $\sss$,
by using \ref{thm:threesqtheorem} we know that $n=4^{a}(8k+7)$ for some
$a,k\in\IN.$ If $a=0$ then $n=8k+7$ and hence $n+2=8k+9$ is clearly
not of the form $4^{b}(8c+7)$ for any $b,c\in \IN$. If $a>0$ then $n+2=4^{a}(8k+7)+2$
is not divisible by $4$. Hence for $n+2$ of the form $4^{b}(8c+7)$,
we have to have $4^{a}(8k+7)+2=8c+7$. This is clearly impossible
because the LHS is even whereas RHS is odd.
\end{proof}

\begin{lem}
\label{lem:gen3sos}If $M\in\IZ_{+}$ then  $7M^{4}$ not a $\sss$. 
\end{lem}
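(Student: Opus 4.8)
The plan is to show that $7M^4$ has the shape $4^a(8k+7)$ forbidden by \ref{thm:threesqtheorem}, and then simply invoke that theorem. First I would separate out the power of two dividing $M$: write $M = 2^s m$ with $m$ odd and $s \in \IN$. Raising to the fourth power and regrouping gives $7M^4 = 7\cdot 2^{4s} m^4 = 4^{2s}\cdot(7m^4)$, so the entire $2$-adic part of $M$ gets absorbed into the factor $4^{2s}$, and it remains only to control the odd quantity $7m^4$ modulo $8$.

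The one arithmetic fact I would use is that an odd square is $\equiv 1 \pmod 8$. Hence $m^2 \equiv 1 \pmod 8$, and therefore $m^4 = (m^2)^2 \equiv 1 \pmod 8$, which gives $7m^4 \equiv 7 \pmod 8$; that is, $7m^4 = 8k+7$ for some $k \in \IN$. Combining this with the factorization above yields $7M^4 = 4^{2s}(8k+7)$, which is exactly the excluded form with $a = 2s$. By \ref{thm:threesqtheorem}, integers of this form are precisely those that are not a $\sss$, so $7M^4$ is not a $\sss$, as claimed.

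I do not expect a genuine obstacle here, since the statement follows directly from the three-square characterization together with elementary modular arithmetic. The only point requiring a little care is the treatment of the power of two in $M$: one must notice that $2^{4s}$ is a power of $4$, so it can be folded into the $4^a$ factor and does not disturb the residue of the odd part modulo $8$. In the special case where $M$ is odd the claim is immediate from $7M^4 \equiv 7 \pmod 8$, and the factorization $M = 2^s m$ is precisely what reduces the general case to this one.
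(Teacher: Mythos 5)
Your proof is correct, but it takes a genuinely cleaner route than the paper's. The paper decomposes $M=4^{a}(4b+c)$ with $c\in\{0,1,2,3\}$ and argues by cases: the case $c=0$ is handled by recursion on the smaller integer $b$, while the cases $c=1,2,3$ are handled by explicitly expanding the fourth powers $(4b+c)^4$ and reading off the form $4^{\alpha}(8\beta+7)$ (incidentally, the expansion printed for $c=3$ contains coefficient typos, e.g.\ $964b^2+12496b$ where $(4b+3)^4$ has $864b^2+432b$, though this does not affect the residue modulo $8$). You instead write $M=2^{s}m$ with $m$ odd, note that $2^{4s}=4^{2s}$ absorbs the entire $2$-adic part, and then use the single arithmetic fact that an odd square is $\equiv 1 \pmod 8$, so $m^4\equiv 1$ and $7m^4\equiv 7 \pmod 8$, giving $7M^4=4^{2s}(8k+7)$ directly. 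Both arguments rest on the same key ingredient, the three-square characterization of \ref{thm:threesqtheorem}, but your decomposition eliminates the case analysis, the recursion, and the error-prone expansions in one stroke; the only thing the paper's version could be said to buy is that it verifies the residues by brute force rather than invoking the (equally elementary) fact about odd squares modulo $8$.
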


\begin{proof}
  Suppose $M=4^{a}(4b+c)$ where $a$ is the largest power of $4$ dividing
$M,$ $c=\frac{M}{4^{a}}\pmod4$ and $b=\flr{\frac{M}{4^{a+1}}}$.
We prove the claim by analyzing the following cases.
\begin{description}
\item [{If}] $c=0$ then $M=4^{a}b$ for some $a,b\in\IN$ and 4 does not
divide $b$. Note that here $a>0$, otherwise $c$ cannot be zero
by its definition. Therefore $7M^{4}=4^{4a}\cdot7b^{4}$. Now we can
apply this Lemma recursively on $b$ (which is smaller than $M$) to infer
that $7b^{4}$ is of the form $4^{\alpha}(8\beta+7)$ for some $\alpha,\beta\in\IN$.
Hence $7M^{4}$ is also of this form and thus not a $\sss$ by using
\ref{thm:threesqtheorem}.
\item [{If}] $c=1$ then $7M^{4}=4^{4a}\cdot7\cdot(256b^{4}+256b^{3}+96b^{2}+16b+1)=4^{\alpha}(8\beta+7)$
for some $\alpha,\beta\in\IN$, hence $7M^{4}$ is not a $\sss$ by
using \ref{thm:threesqtheorem}.
\item [{If}] $c=2$ then $7M^{4}=4^{4a+2}\cdot7\cdot(16b^{4}+32b^{3}+24b^{2}+8b+1)=4^{\alpha}(8\beta+7)$
for some $\alpha,\beta\in\IN$, hence $7M^{4}$ is not a $\sss$ by
using \ref{thm:threesqtheorem}.
\item [{If}] $c=3$ then $7M^{4}=4^{4a}\cdot7\cdot(256b^{4}+768b^{3}+964b^{2}+12496b+81)=4^{\alpha}(8\beta+7)$
for some $\alpha,\beta\in\IN$, hence $7M^{4}$ is not a $\sss$ by
using \ref{thm:threesqtheorem}.
\end{description}
\end{proof}
\ref{lem:gen3sos} implies the following $\eqslp$ hardness of $\thrsos$. 
\begin{lem}
\label{lem:3sosseqslphard}$\eqslp\leq_{\P}\thrsos$.
\end{lem}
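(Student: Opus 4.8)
The goal is to show $\eqslp \leq_{\P} \thrsos$, meaning we must reduce the problem of deciding whether an SLP computes zero to the problem of deciding whether an SLP computes a sum of three squares. The plan is to construct, from an SLP $P$ computing an integer $N$, a new SLP $P'$ computing some integer $f(N)$ such that $f(N)$ is a $\sss$ if and only if $N = 0$ (or if and only if $N \neq 0$, in which case the reduction composes with the trivial complementation). The natural candidate, suggested directly by \ref{lem:gen3sos}, is to arrange that when $N = 0$ the constructed integer is forced to be a specific non-$\sss$ value, and when $N \neq 0$ it becomes a $\sss$ value.

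First I would recall that \ref{lem:gen3sos} gives a robust family of non-sums-of-three-squares, namely $7M^4$ for every positive integer $M$. The idea is to build an expression that evaluates to something like $7$ (or $7 \cdot (\text{something})^4$) precisely when $N = 0$, while evaluating to a guaranteed $\sss$ value when $N \neq 0$. One clean approach is to consider the integer $7 + c \cdot N^2$ for a suitable constant or SLP-computable multiplier $c$: when $N = 0$ this is $7$, which equals $7 \cdot 1^4$ and hence is not a $\sss$ by \ref{lem:gen3sos}; the difficulty is ensuring that $7 + cN^2$ is a $\sss$ for every nonzero $N$. Since non-$\sss$ integers have the rigid form $4^a(8k+7)$, I would instead use the density of $\sss$ integers together with \ref{lem:nn23sos}: by shifting I can guarantee hitting a $\sss$ value. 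Concretely, I expect the cleaner construction to exploit that $7M^4$ is non-$\sss$ while a nearby shifted value is $\sss$, so that multiplying the SLP output into an exponent-of-four scaffold lets $N=0$ land exactly on a non-$\sss$ residue.

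The key steps, in order, would be: (i) given $P$ computing $N$, construct in polynomial time an SLP for $7(N^2+1)^4$ or a similar polynomially-sized combination, noting that squaring and the fixed multiplier $7$ and a constant shift each cost only $O(1)$ additional SLP instructions and preserve the polynomial bound on program length; (ii) analyze the two cases: if $N = 0$, the value is $7 \cdot 1 = 7 = 4^0(8\cdot 0 + 7)$, which by \ref{thm:threesqtheorem} is not a $\sss$; (iii) if $N \neq 0$, argue that the constructed value is a $\sss$, which is where \ref{lem:nn23sos} or a direct residue computation is invoked to certify membership. The reduction then answers ``$N = 0$'' exactly when the $\thrsos$ oracle reports ``not a $\sss$,'' i.e. by negating the oracle's answer.

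The main obstacle I anticipate is step (iii): guaranteeing that the constructed integer is a $\sss$ for \emph{all} nonzero $N$, not merely infinitely often. The set of $\sss$ integers has density $5/6$ but is not all of $\IN$, so a naive expression like $7 + N^2$ could accidentally land on a value of the form $4^a(8k+7)$ for some bad $N$. The cleanest fix is to design the construction so that the nonzero case always falls into a residue class modulo $8$ (after removing powers of $4$) that \ref{thm:threesqtheorem} permits, for instance by forcing the value to be odd and congruent to something other than $7 \pmod 8$; alternatively, one can apply \ref{lem:nn23sos} to a pair $\{v, v+2\}$ and build the SLP to select the $\sss$ member, though selection is awkward without branching. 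I expect the actual proof to sidestep this by choosing the multiplier and shift so that $N \neq 0$ deterministically yields a $\sss$ residue, reducing step (iii) to a short modular arithmetic check analogous to the case analysis already carried out in the proof of \ref{lem:gen3sos}.
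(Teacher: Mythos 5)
Your reduction is oriented the wrong way around, and this is precisely what creates the gap you yourself flag in step (iii). You try to arrange that $N=0$ lands on a non-$\sss$ value (such as $7$) while every $N \neq 0$ lands on a $\sss$ value; the second half of this requires certifying $\sss$ membership for \emph{all} nonzero outputs, which is exactly the hard part you leave unresolved. Worse, your concrete candidate $7(N^2+1)^4$ cannot work at all: for every integer $N$ the number $N^2+1$ is a positive integer, so by \ref{lem:gen3sos} the value $7(N^2+1)^4$ is \emph{never} a $\sss$, and the construction cannot distinguish $N=0$ from $N\neq 0$. Your fallback candidate $7+N^2$ genuinely fails as well: for $N=4$ it gives $23 = 8\cdot 2 + 7$, which is not a $\sss$. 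The repairs you sketch (forcing a good residue class modulo $8$, or selecting the $\sss$ member of the pair $\{v,v+2\}$ via \ref{lem:nn23sos}) are not carried out, and the selection idea indeed cannot be implemented by a division-free, branching-free SLP, as you suspect.

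The paper's proof avoids all of this by flipping the roles of the two cases: it maps $N$ to $7M^4$ with $M=N^2$, i.e.\ to $7N^8$. When $N=0$ this value is $0 = 0^2+0^2+0^2$, trivially a $\sss$; when $N\neq 0$ we have $M = N^2 > 0$, so \ref{lem:gen3sos} says $7M^4$ is not a $\sss$. Thus $7N^8$ is a $\sss$ if and only if $N=0$, giving a direct polynomial-time many-one reduction (no negation of the oracle answer, no shift, no density argument). The point you missed is that \ref{lem:gen3sos} is designed to certify \emph{non}-membership on the nonzero side; once the reduction is oriented so that the only value required to be a $\sss$ is $0$ itself, nothing remains to be checked. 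To repair your proof, discard the shift-based constructions and use $7N^8$ directly.
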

\begin{proof}
Given a straight-line program representing an integer $N$, we want
to decide whether $N=0$. Suppose $M=N^{2}$. We have $M\geq0$ and
$M=0$ iff $N=0$. By using \ref{lem:gen3sos}, we know that $7M^{4}$
is a $\sss$ iff $M=0$.
\end{proof}
\begin{remark}
 \ref{lem:3sosseqslphard} illustrates that $\thrsos$ is at least as hard as $\eqslp$ under deterministic polynomial time Turing reductions. This may not appear as a very strong result since $\eqslp$ can be decided in randomized polynomial time anyway. However, unconditionally, even $\posslp$ is known to be only $\eqslp$-hard. Moreover, we rely on \ref{lem:3sosseqslphard} in the proof of  \ref{thm:3soshard} below.
\end{remark}

\begin{proof}[Proof of \ref{thm:3soshard}]
Given a straight-line program representing an integer $N$, we want
to decide whether $N>0$. Using an $\eqslp$ oracle, we first check if $N\in\{0,-1,-2\}$.
By using \ref{lem:3sosseqslphard}, these oracle calls to $\eqslp$ can also be simulated
by oracle calls to the $\thrsos$ oracle.
Hence this task belongs to $\P^{\thrsos}$.
If $N\in\{0,-1,-2\}$, then clearly $N>0$ is false and we answer
``No''. Otherwise we check if $N$ is a $\sss$, if it is then clearly
$N>0$ and we answer ``Yes''. If it is not a $\sss$ then we check
if $N+2$ is a $\sss$. If $N+2$ is a $\sss$ then clearly $N>0$
because $N\not\in\{0,-1,-2\}$. If $N+2$ is not a $\sss$, then 
by \cref{lem:nn23sos} we can
conclude that $N<-2$ and hence we answer ``No''.
\end{proof}

\subsection{Upper Bound for $\protect\thrsos$}
Now we prove the upper bound for $\thrsos$, claimed in \ref{thm:threesostodivslp}.
\begin{proof}[Proof of \ref{thm:threesostodivslp}]
Given an $N\in \IZ$ represented by a given SLP, we want to decide if $N$ is a $\sss$.
By using the $\posslp$ oracle, we first check if $N\geq 0$. If $N < 0$ then
we answer ``No''. Hence we can now assume that $N>0$.
By using \ref{thm:threesqtheorem}, it is easy to see that $N$ is not a $\sss$
iff the binary representation $\bin{N}$ of $N$ looks like below:
\[
  N\text{ is not a }\sss\iff \bin{N}=S1110^{t}\text{ where }t\text{ is even and }S\in\{0,1\}^{*}.
\]
By using the $\divslp$ oracle, we compute the number of trailing
zeroes (call it again $t$) in the binary representation of $N$. This can
be achieved by doing a binary search and repeatedly using the $\divslp$
oracle.
If $t$ is not even then $N$ is a $\sss$. Next we construct an SLP
which computes $2^{t}$\ie the number $10^{t}$ in the binary representation.
Such an SLP can be constructed in time $\poly(\log t)$ and is of
size $O(\log^{2}t)$. This can be seen by looking at the binary representation
of $t$ and then using repeated squaring. We have:
\[
  \bin{N+2^{t}}=\pri S0^{t+3}\iff \bin{N}=S1110^{t}\text{ for some }S,\pri S\in\{0,1\}^{*}.
\]
Hence $N$ is not a $\sss$ iff $N+2^{t}$ has $t+3$ trailing
zeroes, which again can be decided using the $\divslp$ oracle.
\end{proof}

\subsection{Complexity of $\protect\divslp$}

In this section, we show a $\degslp$ lower bound for $\divslp$.
To this end, we first prove the following equivalence of $\degslp$
and $\ordslp$. 

\begin{lem}
\label{lem:reversetheslp}Given a straight-line program $P$ of length
$s$ computing a polynomial $f\in\IZ[x]$ , we can compute in $\poly(s)$
time:
\begin{enumerate}
\item A number $m\in \IN$ such that $\deg(f)\leq m\leq2^{s}$.
\item A straight line program $Q$ of length $O(s)$ such that Q computes
the polynomial $x^{m}f\paren{\frac{1}{x}}$.
\end{enumerate}
\end{lem}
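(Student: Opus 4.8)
The plan is to treat the two parts separately: part (1) is a routine \emph{syntactic} degree bound, while part (2) is the genuine content, realized by homogenization. The guiding idea for part (2) is that the reversal $f(x)\mapsto x^{m}f\paren{\tfrac1x}$ is, up to bookkeeping of degrees, a ring homomorphism; if divisions were allowed one could simply feed $1/x$ into $P$ and multiply by $x^{m}$, but to keep $Q$ division-free I would instead homogenize $P$ with a fresh variable and then specialize.

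For part (1) I would attach to every gate $v$ of $P$ a syntactic degree $D_{v}\in\IN$ by the rules $D_{v}=0$ for the constant $1$, $D_{v}=1$ for the input $x$, $D_{v}=\max\paren{D_{j},D_{k}}$ for a sum or difference gate, and $D_{v}=D_{j}+D_{k}$ for a product gate, where $j,k$ are the two operands. A straightforward induction shows that the polynomial $p_{v}$ computed at $v$ satisfies $\deg\paren{p_{v}}\le D_{v}$, and that $D_{v}$ at most doubles per gate, so the output gate satisfies $\deg(f)\le D_{\mathrm{out}}\le 2^{s}$. I would then set $m\eqdef D_{\mathrm{out}}$. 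Each $D_{v}$ is an integer of at most $s$ bits, and the recurrence uses only $\max$ and addition of such integers, so all the $D_{v}$, and in particular $m$, are computable in time polynomial in $s$.

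For part (2), introduce a fresh variable $y$ and, at each gate $v$, compute the homogenization $h_{v}(x,y)\eqdef y^{D_{v}}\,p_{v}\paren{x/y}$, which is an honest polynomial precisely because $D_{v}\ge\deg\paren{p_{v}}$. The homogenized gates obey simple rules: the input $x$ gives $h=x$ and the constant gives $h=1$; a product gate gives $h_{i}=h_{j}h_{k}$ because degrees add; and a sum or difference gate with, say, $D_{j}\ge D_{k}$ gives $h_{i}=h_{j}\pm y^{D_{j}-D_{k}}h_{k}$, so only the lower-degree operand must be lifted by the power $y^{D_{j}-D_{k}}$. Specializing $x\mapsto 1$ and $y\mapsto x$ (a free choice of inputs, since the constant $1$ and a variable are both available) turns $h_{\mathrm{out}}$ into $x^{m}f\paren{1/x}$, as required, and the whole construction is division-free. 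To supply the lifting powers I would precompute the ladder $x^{2^{0}},x^{2^{1}},\dots,x^{2^{s}}$ by $s+1$ repeated squarings, from which any needed exponent (all of which are $\le 2^{s}$) can be assembled.

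The delicate point, and the step I expect to be the main obstacle, is the size bound $O(s)$. Products and sums each cost $O(1)$ gates beyond the ladder, so the only real danger is the realignment factors $y^{D_{j}-D_{k}}$ at sum gates: assembled naively from the ladder, each could cost up to $\Theta(s)$ multiplications, giving only a $\poly(s)$ bound rather than $O(s)$. The handle I would exploit is that the \emph{leading powers} $x^{D_{v}}$ can themselves be produced inside the construction using one multiplication per product gate and no work at sum gates (where the degree is a maximum), which suggests that the paddings can be obtained by reusing already-computed powers rather than rebuilt from scratch. Turning this observation into a clean amortized count that certifies length $O(s)$ is the technical heart of the argument; the remaining verifications, namely correctness of $h_{\mathrm{out}}$, division-freeness, and the polynomial running time, are routine inductions.
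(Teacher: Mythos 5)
Your part (1) and the general homogenization framework are sound, but the proof has a genuine gap exactly where you flag it: the $O(s)$ size bound is never established, and the handle you propose does not close it. With the $\max$ convention at sum gates, the padding factor needed at a sum gate is $y^{D_j-D_k}$, i.e., a \emph{quotient} of the leading powers $y^{D_j}$ and $y^{D_k}$. Having produced those leading powers cheaply does not help: the program must be division-free, so you cannot turn $y^{D_j}$ and $y^{D_k}$ into $y^{D_j-D_k}$, and the exponent differences $D_j-D_k$ at distinct sum gates are essentially unrelated $s$-bit numbers, so each padding must be assembled from your ladder at a cost of $\Theta(s)$ multiplications. That yields $O(s^2)$, not $O(s)$; the weaker bound would in fact still suffice for the downstream reduction $\degslp\leq_{\P}\ordslp$, but it does not prove the lemma as stated.

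The paper's proof avoids the problem by giving up on tightness of the degree bound: it sets $m_g\eqdef m_{g_1}+m_{g_2}$ at \emph{both} multiplication and addition gates (only $m_g\geq\deg(R_g)$ is needed, and $m\leq 2^s$ still follows by induction). This choice has two payoffs. First, the recurrence for the powers becomes purely multiplicative, $x^{m_g}=x^{m_{g_1}}\cdot x^{m_{g_2}}$ at every gate, so \emph{all} powers $x^{m_h}$ are computed simultaneously by the length-$s$ program $P'$ obtained from $P$ by replacing every $+$ with $\times$; no binary ladder or exponent assembly is needed. Second, the realignment at a sum gate is now exactly the sibling's power, already available:
\[
x^{m_g}R_g\paren{\tfrac{1}{x}}=x^{m_{g_2}}\cdot\paren{x^{m_{g_1}}R_{g_1}\paren{\tfrac{1}{x}}}+x^{m_{g_1}}\cdot\paren{x^{m_{g_2}}R_{g_2}\paren{\tfrac{1}{x}}},
\]
costing $3$ gates per addition gate and $1$ per product gate, hence $O(s)$ total with no amortization argument. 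If you want to salvage your write-up, replace $\max$ by $+$ at sum gates; everything else you wrote then goes through essentially verbatim.
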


\begin{proof}
We generate the desired straight line program $Q$ in an inductive
manner. Namely, if a gate $g$ in $P$ computes a polynomial $R_{g}$
then the corresponding gate in $Q$ computes a number $m_{g}\geq\deg(R_{g})$
(the gate itself does not compute a number, to be precise, but our reduction algorithm
does) and the polynomial $x^{m_{g}}R_{g}\left(\frac{1}{x}\right)\in\IZ[x]$.
It is clear how to do it for leaf nodes. Suppose $g=g_{1}+g_{2}$
is a $+$ gate in $P$. So we have already computed integers $m_{g_{1}},m_{g_{2}}$
and polynomials $x^{m_{g_{1}}}R_{g_{1}}\left(\frac{1}{x}\right),x^{m_{g_{2}}}R_{g_{2}}\left(\frac{1}{x}\right).$
We consider $m_{g}\eqdef m_{g_{1}}+m_{g_{2}}$. We then have: 
\[
x^{m_{g}}R_{g}\left(\frac{1}{x}\right)=x^{m_{g_{2}}}x^{m_{g_{1}}}R_{g_{1}}\left(\frac{1}{x}\right)+x^{m_{g_{1}}}x^{m_{g_{2}}}R_{g_{2}}\left(\frac{1}{x}\right).
\]
We also construct a  straight-line program of length $O(s)$ that simultaneously
computes $x^{m_{h}}$ for all gates $h$ in $P$. With this, we can
compute $x^{m_{g}}R_{g}\left(\frac{1}{x}\right)$ using 3 additional
gates. This implies the straight-line program for $Q$ can be implemented
using only $O(s)$ gates. Similarly for a $\times$ gate $g=g_{1}\times g_{2}$,
we can simply use $x^{m_{g}}R_{g}\left(\frac{1}{x}\right)=x^{m_{g_{1}}+m_{g_{2}}}R_{g_{1}}\left(\frac{1}{x}\right)R_{g_{2}}\left(\frac{1}{x}\right)$
with $m_{g}=m_{g_{1}}+m_{g_{2}}$. By induction it is also clear that
at the top gate $g$, we have $m_{g}\leq2^{s}$. It remains to describe
a straight-line program of length $O(s)$ which computes $x^{m_{h}}$
for all gates $h$ in $P$. Consider the straight-line program $\pri P$
obtained from $P$ by changing every addition gate into a multiplication
gate. If $\pri g$ is a gate in $\pri P$ corresponding to the gate
$g$ in $P$, then one can show via induction that $R_{\pri g}\left(x\right)$
= $x^{m_{g}}$. This gives the desired straight-line program. 
\end{proof}
\begin{lem}
\label{lem:degspltordslp}$\degslp\leq_{\P}\ordslp$.
\end{lem}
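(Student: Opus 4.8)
The plan is to reduce the degree question for $f$ to an order question about its \emph{reversal} polynomial, exploiting exactly the construction supplied by \ref{lem:reversetheslp}. Given an SLP $P$ of length $s$ computing $f\in\IZ[x]$ together with a bound $d$ (in binary), I first invoke \ref{lem:reversetheslp} to compute, in $\poly(s)$ time, a number $m\in\IN$ with $\deg(f)\le m\le 2^{s}$ and an SLP $Q$ of length $O(s)$ computing $g(x)\eqdef x^{m}f\paren{1/x}$.

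The heart of the reduction is the identity $\ord{g}=m-\deg(f)$. Indeed, writing $f(x)=\sum_{i=0}^{e}a_{i}x^{i}$ with $e=\deg(f)$ and $a_{e}\neq0$, we obtain $g(x)=\sum_{i=0}^{e}a_{i}x^{m-i}$; since $m\ge e$ all exponents are nonnegative, so $g$ is a genuine polynomial whose smallest occurring exponent is $m-e$, with nonzero coefficient $a_{e}$. Hence $\ord{g}=m-\deg(f)$, and consequently
\[
  \deg(f)\le d \iff m-\deg(f)\ge m-d \iff \ord{g}\ge m-d.
\]
This turns a single $\degslp$ instance into a single $\ordslp$ query on the pair $(Q,\,m-d)$, giving a many-one reduction (which is in particular a Turing reduction). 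Since $m\le 2^{s}$ has $O(s)$ bits, the threshold $m-d$ is computable in $\poly(s,\log d)$ time and is a valid binary input to the $\ordslp$ oracle whenever it is nonnegative.

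The remaining points are pure bookkeeping. If $d\ge m$, then $\deg(f)\le m\le d$ already holds, so I answer ``yes'' without querying the oracle; this also dispenses with the degenerate case $f=0$, where $g=0$ and $\ord{g}$ is conventionally at least $m-d$, matching $\deg(0)=-\infty\le d$. I do not anticipate a genuine obstacle: \ref{lem:reversetheslp} already provides the only nontrivial ingredient, namely an $O(s)$-size SLP for the padded reversal $x^{m}f\paren{1/x}$, after which the correctness of the reduction rests entirely on the elementary exponent computation above.
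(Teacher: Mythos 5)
Your proof is correct and follows essentially the same route as the paper: invoke \ref{lem:reversetheslp} to get $m$ and an SLP for $x^{m}f\paren{1/x}$, then use the equivalence $\deg(f)\le d \iff \ord{x^{m}f\paren{1/x}}\ge m-d$. The extra details you supply (the explicit verification that $\ord{g}=m-\deg(f)$ and the handling of $d\ge m$ and $f=0$) are sound refinements of the same argument.
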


\begin{proof}
Suppose we are given a straight line program $P$ of length $s$ computing
a polynomial $f\in\IZ[x]$. By using \ref{lem:reversetheslp}, we compute:
\begin{enumerate}
\item A number $m\in\IN$ such that $\deg(f)\leq m\leq2^{s}$.
\item A straight line program $Q$ of length $O(s^{2})$ such that $Q$
computes the polynomial $x^{m}f\left(\frac{1}{x}\right)\in\IZ[x]$.
\end{enumerate}
Now it clear that: 
\[
\deg(f)\leq d\Longleftrightarrow\ord{x^{m}f\left(\frac{1}{x}\right)}\geq(m-d).
\]
Hence the claim follows.
\end{proof}
Proof of the following \ref{lem:ordspltdegslp} is almost same to that of \ref{lem:degspltordslp}, hence we omit it.
\begin{lem}
\label{lem:ordspltdegslp}$\ordslp\leq_{\P}\degslp$.
\end{lem}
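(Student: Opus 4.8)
The goal is to prove $\ordslp \leq_{\P} \degslp$, the reverse direction of the reduction already established in \cref{lem:degspltordslp}. The plan is to reuse the coefficient-reversal machinery of \cref{lem:reversetheslp} almost verbatim, since order and degree are exchanged by the reversal map $f(x) \mapsto x^m f(1/x)$ in a symmetric fashion.

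First I would take as input a straight-line program $P$ of length $s$ computing $f \in \IZ[x]$, together with a target $\ell$ in binary, and I want to decide whether $\ord{f} \geq \ell$. Applying \cref{lem:reversetheslp}, I obtain in $\poly(s)$ time a bound $m \in \IN$ with $\deg(f) \leq m \leq 2^s$ and a straight-line program $Q$ of length $O(s)$ computing the reversed polynomial $x^m f\paren{\frac{1}{x}}$. The key observation is the dual of the equivalence used before: writing $g(x) = x^m f\paren{\frac{1}{x}}$, the low-order coefficients of $f$ become the high-order coefficients of $g$ and vice versa, so that
\[
\ord{f} \geq \ell \Longleftrightarrow \deg\paren{x^{m} f\paren{\tfrac{1}{x}}} \leq m - \ell .
\]
Thus a single call to the $\degslp$ oracle on $Q$ with degree bound $m - \ell$ decides the original question, and the reduction runs in polynomial time.

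The only point requiring mild care is the boundary behaviour when $f$ has fewer nonzero terms than $m$ suggests, i.e.\ when $m$ strictly exceeds $\deg(f)$. Since \cref{lem:reversetheslp} only guarantees $m \geq \deg(f)$ rather than equality, the reversed polynomial is genuinely $x^{m - \deg(f)}\cdot(\text{reversal of } f)$, which shifts all exponents up by the slack $m - \deg(f)$. One must check that this shift is absorbed consistently on both sides of the equivalence: the order of $f$ is unaffected by the choice of $m$, while the degree of $x^m f(1/x)$ is exactly $m - \ord{f}$, so the displayed equivalence holds for \emph{any} valid $m \geq \deg(f)$ and not merely for the tight one. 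I expect this bookkeeping to be the main (and indeed only) obstacle; it is entirely routine, which is exactly why the authors state that the proof is ``almost the same'' as that of \cref{lem:degspltordslp} and omit it.
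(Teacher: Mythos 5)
Your proof is correct and is exactly the argument the paper intends but omits: apply \cref{lem:reversetheslp} and use that $\deg\paren{x^{m}f\paren{\tfrac{1}{x}}}=m-\ord f$, mirroring the proof of \cref{lem:degspltordslp}, and your bookkeeping check that the equivalence holds for \emph{any} valid $m\geq\deg(f)$ (not just the tight one) is the right point to verify. The only remaining detail, which the paper glosses over in both directions, is the boundary case $\ell>m$, where $m-\ell$ is not a valid $\degslp$ input; there $\ord f\geq\ell$ holds iff $f=0$, which can still be phrased as a single $\degslp$ query, e.g.\ asking whether $\deg\paren{x\cdot f(x)^{2}+1}\leq 0$.
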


\begin{thm}
$\ordslp\equiv_{\P}\degslp$.
\end{thm}

\begin{proof}
Follows immediately from \ref{lem:degspltordslp} and \ref{lem:ordspltdegslp}.
\end{proof}
\begin{thm}
$\ordslp\equiv_{\P}\degslp\leq_{\P}\divslp$.
\end{thm}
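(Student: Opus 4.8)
The equivalence $\ordslp\equiv_{\P}\degslp$ is already established (\ref{lem:degspltordslp} and \ref{lem:ordspltdegslp}), so by transitivity it suffices to prove $\ordslp\leq_{\P}\divslp$. The plan is to pack the coefficients of the input polynomial into well-separated blocks of a single integer by evaluating it at a sufficiently large power of two; then the statement $\ord{f}\geq\ell$ turns into a statement about the number of trailing zeros of that integer, which is exactly what the $\divslp$ oracle detects. Concretely, given an SLP $P$ of length $s$ computing $f=\sum_{i}c_{i}x^{i}\in\IZ[x]$ together with $\ell$ in binary, I would first invoke \ref{lem:reversetheslp} to obtain a bound $m\le 2^{s}$ on $\deg(f)$, and use the standard magnitude bound for SLPs, which gives $\abs{c_{i}}\le 2^{B^{\ast}}$ for some $B^{\ast}=2^{O(s)}$. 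I then set $B\eqdef B^{\ast}+1$, so that $B$ strictly exceeds the $2$-adic valuation of every nonzero coefficient of $f$.

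Next I would construct an SLP $Q$ computing the integer $N\eqdef f\paren{2^{B}}$. Since $B$ has only $O(s)$ bits, an SLP for the constant $2^{B}$ can be produced by repeated squaring in size $\poly(s)$, exactly as in the proof of \ref{thm:threesostodivslp}; substituting this sub-SLP for the variable $x$ in $P$ yields $Q$ of polynomial length. The reduction then outputs the $\divslp$ instance asking whether $2^{\ell B}$ divides $\abs{N}$, where the exponent $\ell B$ (a number with $O(s+\log\ell)$ bits) is computed in polynomial time. I claim this is a correct many-one reduction, that is,
\[
\ord{f}\geq\ell\iff 2^{\ell B}\mid\abs{f\paren{2^{B}}}.
\]

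To justify the claim, let $v_{2}(\cdot)$ denote the $2$-adic valuation and let $k=\ord{f}$, so that $c_{k}\ne 0$ is the least significant nonzero coefficient. Writing $f\paren{2^{B}}=2^{kB}\paren{c_{k}+2^{B}\,t}$ for an integer $t$, the key point is that $v_{2}\paren{c_{k}+2^{B}t}=v_{2}(c_{k})$, because $v_{2}(c_{k})\le\log_{2}\abs{c_{k}}\le B^{\ast}<B$; hence $v_{2}\paren{f\paren{2^{B}}}=kB+v_{2}(c_{k})$ with $0\le v_{2}(c_{k})<B$. Therefore $2^{\ell B}$ divides $f\paren{2^{B}}$ precisely when $kB+v_{2}(c_{k})\ge\ell B$, which holds if and only if $k\ge\ell$. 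Crucially, this argument is completely insensitive to the signs of the coefficients and to carries produced during the evaluation, since $v_{2}(a+2^{B}b)=v_{2}(a)$ whenever $v_{2}(a)<B$, for arbitrary integers $a,b$; this is what lets me avoid the usual nonnegativity gymnastics when reading coefficients off an evaluation.

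The only point that requires genuine care, and hence the main (if mild) obstacle, is the choice of $B$: it must exceed the $2$-adic valuation of the least significant nonzero coefficient, which forces me to pin down a clean magnitude bound $2^{O(s)}$ on the coefficients of a length-$s$ SLP and to verify that $2^{B}$, although doubly exponential in value, still admits a $\poly(s)$-size SLP. Once $B$ is fixed correctly, the remainder is bookkeeping, and the single oracle call above establishes $\ordslp\leq_{\P}\divslp$, which together with $\ordslp\equiv_{\P}\degslp$ completes the theorem.
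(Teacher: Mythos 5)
Your proof is correct and follows essentially the same route as the paper: reduce via the already-established equivalence $\ordslp\equiv_{\P}\degslp$ to showing $\ordslp\leq_{\P}\divslp$, evaluate $f$ at a sufficiently large power of two (the paper uses $2^{2^{3s}}$, you use $2^{B}$ with $B=2^{O(s)}+1$, which is the same idea with a slightly tighter parameter), and observe that the $2$-adic valuation of the lowest nonzero coefficient is smaller than the block width, so $\ord{f}\geq\ell$ iff $2^{\ell B}$ divides the evaluated integer. Your explicit $2$-adic valuation argument is a cleaner phrasing of the paper's ``$B$ does not divide $f_{0}+Bg(B)$'' step, and your invocation of \ref{lem:reversetheslp} for the degree bound is superfluous but harmless.
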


\begin{proof}
  We only need to show that $\ordslp\leq_{\P}\divslp$. Suppose we
are given a straight line program $P$ of length $s$ computing a
polynomial $f\in\IZ[x]$ and $\ell\in\IN$ in binary, we want to decide
if $\ord f\geq\ell$. We know that $\nrm f_{\infty}\leq2^{2^{s}}$,
where $\nrm f_{\infty}$ is the maximum absolute value of coefficients
of $f(x)$. We now construct an SLP which computes $f(B)$ where $B$
is a suitably chosen large integer which we will specify in a moment. 
If $\ord f\geq\ell$ then clearly
$B^{\ell}$ divides $f(B)$. Now consider the case when $\ord f<\ell$
. So we have $f=x^{m}(f_{0}+xg)$ for some $f_0\in\IZ,g\in\IZ[x]$ and $m<\ell$.
In this case we have:
\[
f(B)=B^{m}(f_{0}+Bg(B)).
\]
If  $B$ is chosen large enough then $B$ does not divide $f_{0}+Bg(B)$
and hence $B^{\ell}$ does not divide $f(B)$. It can be verified
that choosing $B=2^{2^{3s}}$ suffices for this argument. It is also not hard to see that 
an SLP for $f(B)$ can be constructed in polynomial time. Hence we conclude:
\[
\ord f\geq\ell\iff2^{\ell2^{3s}}\text{ divides }f(2^{2^{3s}}).
\]
This completes the reduction.
\end{proof}
\begin{problem}
What is the exact complexity of $\divslp$? 
\end{problem}

Now we show that  $\divslp$ is in $\CH$, this claim follows by employing ideas
from \cite{Allender06onthe}.
\begin{lem}
\label{lem:divslpinch} $\divslp$ is in $\CH$.
\end{lem}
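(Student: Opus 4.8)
The plan is to reduce the divisibility question to a collection of single-bit queries and to invoke the counting-hierarchy upper bound for bit extraction established by Allender et al. First I would dispose of the degenerate cases. Using a $\posslp$ oracle (which lies in $\CH$) I determine the sign of $N$ and replace $N$ by $\abs N$, so that it suffices to test $2^\ell \mid \abs N$ for $\abs N \ge 0$. If $P$ has length $s$, then $\abs N \le 2^{2^s}$, so $\abs N$ has at most $2^s$ bits; hence if the binary encoding of $\ell$ already has more than $s$ bits (equivalently $\ell \ge 2^s$), then $2^\ell > \abs N$ whenever $N \ne 0$, and the answer is simply whether $N = 0$, which is an $\eqslp$ query and $\eqslp \in \coRP \subseteq \CH$. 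Thus I may assume $\ell < 2^s$, so that $\ell$ and every index $i \in \{0,\dots,\ell-1\}$ are integers with at most $s$ bits.

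The key observation is that $2^\ell \mid \abs N$ exactly when the $\ell$ least significant bits of $\abs N$ all vanish, i.e.
\[
2^\ell \nmid \abs N \iff \exists\, i \in \{0,\dots,\ell-1\}\ \text{such that the $i$-th bit of } \abs N \text{ is } 1.
\]
The predicate ``the $i$-th bit of $\abs N$ is $1$'' is precisely a $\bitslp$ query, where the required bit-length bound $n \eqdef 2^s$ is supplied in binary and therefore has only $s$ bits. Employing the ideas of Allender et al.---who compute each residue $N \bmod p$ for a small prime $p$ in polynomial time and then convert from Chinese-remainder representation to binary, simulating the exponentially many primes inside the counting hierarchy---this bit-extraction predicate lies in $\CH$. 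This step is the genuinely hard ingredient, and I would cite it rather than reprove it.

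It remains to handle the quantifier and the complement using closure properties of $\CH$. Since the witness $i$ has only $O(s)$ bits, the displayed condition is an existential quantification of polynomial length over a $\CH$ predicate; as $\NP^{A} \subseteq \PP^{A}$ relativizes, prefixing one existential quantifier to a language at some level $\PP^{\,\cdots}$ of the counting hierarchy lands in the next level, which is again inside $\CH$. Hence the language $\{(P,\ell) : 2^\ell \nmid \abs N\}$ is in $\CH$, and because every level of $\CH$ is closed under complementation (as $\PP$ is), the complementary language $\divslp = \{(P,\ell) : 2^\ell \mid \abs N\}$ is in $\CH$ as well. The only real subtlety to emphasize is that although the conjunction ``all of the first $\ell$ bits are zero'' formally ranges over exponentially many positions, the bit index is an $O(s)$-bit object, so it is a legitimate polynomial-length existential witness; this is exactly what keeps the quantifier trick within the counting hierarchy.
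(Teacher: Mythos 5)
Your proof is correct and takes essentially the same route as the paper's: non-divisibility of $\abs N$ by $2^\ell$ is witnessed by an index $i<\ell$ whose bit of $\abs N$ is one, each such bit is a $\bitslp$ query (in $\CH$ by Allender et al.), and the resulting $\coNP^{\bitslp}$-style upper bound is absorbed into $\CH$ via relativized closure properties. Your preprocessing with $\posslp$ and the case distinction on $\ell \geq 2^{s}$ is harmless but not needed, since $\bitslp$ already refers to bits of $\abs N$ and out-of-range indices are simply leading zeros.
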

\begin{proof}
Given a straight-line program representing $N\in\IZ$, and a natural
number $\ell$ in binary, we want to decide if $2^{\ell}$ divides $\abs N$\ie
if the $\ell$ least significant bits of $\abs N$ are zero. We show
that this can be done in $\coNP^{\bitslp}$. The condition $2^{\ell}\nmid\abs N$
is equivalent to the statement that at least one bit in $\ell$ least
significant bits of $\abs N$ is one. Hence there is a witness of
this statement\ie the index $i\leq\ell$ such that $i^{\textrm{th}}$
bit of $\abs N$ is one. By using the $\bitslp$ oracle, we can verify
the existence of such a witness in polynomial time. Therefore $\divslp\in\coNP^{\bitslp}$.
By using \cite[Theorem 4.1]{Allender06onthe}, we get that $\divslp\in\coNP^{\CH}\subseteq\CH$. 
\end{proof}

In \ref{app:altproof}, we give a more general proof that shows that
``SLP versions'' of problems in dlogtime uniform $\TC_0$ are in $\CH$.

\section{\label{sec:slptwosos}SLPs as Sum of Two and Fewer Squares}

This section is primarily concerned with studying the complexity of $\twosos$. To this end,
we first recall the following renowned \ref{thm:twososchar} which characterizes when a natural number
is a sum of two squares.

\begin{thm}[{\cite[Section 18]{dudley2012elementary}}]
\label{thm:twososchar}An integer $n>1$ is not $\ss$ if and only
if the prime-power decomposition of $n$ contains a prime of the form
$4k+3$ with an odd power.
\end{thm}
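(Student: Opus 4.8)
The plan is to treat this as the classical two-squares theorem and rest it on two number-theoretic ingredients. The first is the multiplicativity of being $\ss$, coming from the Brahmagupta--Fibonacci identity
\[
(a^{2}+b^{2})(c^{2}+d^{2}) = (ac-bd)^{2} + (ad+bc)^{2},
\]
which shows that the set of integers that are $\ss$ is closed under multiplication. The second is Fermat's prime-level characterization: an odd prime $p$ is $\ss$ if and only if $p\equiv 1\pmod 4$, together with the trivial $2=1^{2}+1^{2}$. Granting these, I would prove the biconditional ``$n$ is $\ss$ $\iff$ every prime $q\equiv 3\pmod 4$ occurs to an even power in $n$'' by handling the two directions separately.

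First I would do the ``if'' direction (representability). Writing $n = 2^{a}\prod_{i}p_{i}^{e_{i}}\prod_{j}q_{j}^{2f_{j}}$ with $p_{i}\equiv 1\pmod 4$ and $q_{j}\equiv 3\pmod 4$ occurring to even powers $2f_{j}$ by hypothesis, each factor is individually $\ss$: the factor $2$ and each $p_{i}$ by Fermat's theorem, and each $q_{j}^{2f_{j}} = (q_{j}^{f_{j}})^{2}+0^{2}$. Multiplicativity then assembles these into a representation of $n$ as a sum of two squares.

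Next I would establish the ``only if'' direction by a descent-to-coprimality argument. Suppose $n=x^{2}+y^{2}$ and let $q\equiv 3\pmod 4$ divide $n$; I want the exponent of $q$ in $n$ to be even. Setting $d=\gcd(x,y)$ and $x=dx'$, $y=dy'$ with $\gcd(x',y')=1$, we get $n=d^{2}(x'^{2}+y'^{2})$. The crux is the claim that no prime $\equiv 3\pmod 4$ divides $m:=x'^{2}+y'^{2}$ when $\gcd(x',y')=1$: if $q\mid m$ then $x'^{2}\equiv -y'^{2}\pmod q$, and since $q$ cannot divide both $x'$ and $y'$, say $q\nmid y'$, so $(x'y'^{-1})^{2}\equiv -1\pmod q$, making $-1$ a quadratic residue mod $q$; by Euler's criterion this forces $q\equiv 1\pmod 4$, a contradiction. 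Hence the entire power of $q$ in $n$ comes from $d^{2}$ and is therefore even.

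The hard part will be Fermat's theorem itself, namely that every prime $p\equiv 1\pmod 4$ is $\ss$; everything else is elementary bookkeeping. To supply it I would use Thue's pigeonhole lemma: since $p\equiv 1\pmod 4$, Euler's criterion gives $u$ with $u^{2}\equiv -1\pmod p$, and among the $(\flr{\sqrt p}+1)^{2}>p$ pairs $(s,t)$ with $0\le s,t\le \flr{\sqrt p}$ two must be congruent modulo $p$ under $s-ut$, yielding a nonzero difference $(s,t)$ with $|s|,|t|<\sqrt p$ and $s^{2}+t^{2}\equiv t^{2}(u^{2}+1)\equiv 0\pmod p$; the bound $0<s^{2}+t^{2}<2p$ then pins it to $s^{2}+t^{2}=p$. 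As an alternative I would note this is immediate over the Gaussian integers $\IZ[i]$, where $p\mid (u+i)(u-i)$ but $p$ divides neither factor, so $p$ is not prime in the Euclidean domain $\IZ[i]$ and its nontrivial factorization has norm $p$, i.e.\ $p=a^{2}+b^{2}$.
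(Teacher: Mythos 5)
Your proposal is correct, but there is nothing in the paper to compare it against: the paper does not prove this statement at all, citing it as a classical result from Dudley's textbook (Section 18). Your argument — multiplicativity via the Brahmagupta--Fibonacci identity, Fermat's characterization of primes $p\equiv 1\pmod 4$ (via Thue's pigeonhole lemma or factorization in $\IZ[i]$), and the descent-to-coprimality argument showing a prime $q\equiv 3\pmod 4$ cannot divide a sum of two coprime squares — is precisely the standard textbook proof that the cited reference gives, so the two approaches coincide.
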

When the input integer $n$ is given explicitly as a bit string,  \ref{thm:twososchar} illustrates that a factorization oracle suffices to determine whether $n$ is a $\ss$. In fact, we are not aware of any algorithm that bypasses the need for factorization. For $x\in\IZ_{+}$, let $B(x)$ denote the number of $\ss$
integers in $[x]$. Landau's Theorem \cite{landau1908uber}
gives the following asymptotic formula for $B(x)$.
\begin{thm}[\cite{landau1908uber}]
\label{thm:landau2sosdensity}$B(x)=K\frac{x}{\sqrt{\ln x}}+O\paren{\frac{x}{\ln^{3/2}x}}$
as $x\to\infty$, where $K$ is the Landau-Ramanujan constant with
$K\approx0.764$.
\end{thm}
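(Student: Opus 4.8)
The plan is to treat $B(x)$ analytically, via the Dirichlet series of the indicator function of sums of two squares, and then transfer its singularity structure at $s=1$ to the counting function by a Tauberian argument. Write $b(n)=1$ if $n$ is $\ss$ and $b(n)=0$ otherwise, so that $B(x)=\sum_{n\le x} b(n)$. The first step is to observe, using the characterization in \ref{thm:twososchar}, that $b$ is multiplicative: the defining condition---every prime $p\equiv 3\bmod 4$ occurs to an even power---is a condition on disjoint sets of primes for coprime arguments, so $b(mn)=b(m)b(n)$ whenever $\gcd(m,n)=1$. This yields the Euler product
\[
F(s):=\sum_{n=1}^{\infty}\frac{b(n)}{n^{s}}=\frac{1}{1-2^{-s}}\prod_{p\equiv 1\bmod 4}\frac{1}{1-p^{-s}}\prod_{p\equiv 3\bmod 4}\frac{1}{1-p^{-2s}},
\]
valid for $\Re s>1$, where the local factor at a prime $p\equiv 3\bmod 4$ retains only the even powers.

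The second step is to pin down the singularity of $F$ at $s=1$ by comparing it with the Riemann zeta function $\zeta$ and the Dirichlet $L$-function $L(s,\chi)$ attached to the nonprincipal character $\chi$ modulo $4$. Multiplying out the Euler products gives
\[
\frac{F(s)^{2}}{\zeta(s)\,L(s,\chi)}=\frac{1}{1-2^{-s}}\prod_{p\equiv 3\bmod 4}\frac{1}{1-p^{-2s}}=:H(s),
\]
and $H(s)$ is an absolutely convergent Euler product, hence analytic and nonvanishing, on the half-plane $\Re s>\tfrac12$. Since $\zeta$ has a simple pole of residue $1$ at $s=1$, while $L(s,\chi)$ is analytic and nonzero there with $L(1,\chi)=\pi/4$, we conclude that $F(s)^{2}$ has a simple pole at $s=1$; equivalently $F(s)=\zeta(s)^{1/2}G(s)$ with $G$ analytic and nonvanishing near the line $\Re s=1$ and $G(1)=\sqrt{L(1,\chi)\,H(1)}$. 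Tracking constants (using $H(1)=2\prod_{p\equiv 3\bmod 4}(1-p^{-2})^{-1}$) yields $G(1)/\sqrt{\pi}=\tfrac{1}{\sqrt{2}}\prod_{p\equiv 3\bmod 4}(1-p^{-2})^{-1/2}$, which is exactly the Landau--Ramanujan constant $K\approx 0.764$.

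The final and hardest step is the Tauberian passage from the branch-point behaviour $F(s)\sim C(s-1)^{-1/2}$ to the asymptotics $B(x)=Kx/\sqrt{\ln x}+O(x/\ln^{3/2}x)$. A simple-pole Tauberian theorem of Wiener--Ikehara type does \emph{not} apply, because the singularity is a fractional power and the main term carries a $1/\sqrt{\ln x}$ factor rather than growing linearly. Instead I would invoke the Selberg--Delange method: using the zero-free region of $\zeta$ (and of $L(\cdot,\chi)$) one continues $F(s)=\zeta(s)^{1/2}G(s)$ slightly to the left of $\Re s=1$, writes $B(x)=\frac{1}{2\pi i}\int F(s)\frac{x^{s}}{s}\,ds$ by Perron's formula, and deforms the contour around the branch cut emanating from $s=1$. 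The Hankel contour around the cut produces the main term $Kx/\sqrt{\ln x}$ with an asymptotic expansion in powers of $1/\ln x$ whose first omitted term is $O(x/\ln^{3/2}x)$, while the horizontal and vertical segments contribute negligibly by the zero-free region estimates. The main obstacle is precisely this step: controlling the contour integral near the branch point and bounding the error uniformly requires the standard but delicate analytic estimates on $\zeta$ and $L(\cdot,\chi)$, rather than anything specific to sums of two squares.
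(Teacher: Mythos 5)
The paper never proves this statement: it is imported as a black-box classical result, with the citation to Landau (1908) standing in for a proof, so there is no internal argument to compare yours against. On its own merits, your sketch is a correct outline of the classical argument (Landau's method, in its modern Selberg--Delange formulation). The checkable steps all verify: multiplicativity of $b$ follows from \ref{thm:twososchar} since the parity condition on primes $p\equiv 3 \bmod 4$ splits over coprime factors; the Euler product for $F$ is right; the identity $F(s)^{2}/(\zeta(s)L(s,\chi))=H(s)$ with $H(s)=(1-2^{-s})^{-1}\prod_{p\equiv 3\bmod 4}(1-p^{-2s})^{-1}$ holds (the squared factor at $2$ and the squared even-power factors at $p\equiv 3\bmod 4$ cancel exactly as you claim), and $H$ indeed converges absolutely for $\Re s>\tfrac12$; and the constant evaluation is correct, since $L(1,\chi)=\pi/4$, $H(1)=2\prod_{p\equiv 3\bmod 4}(1-p^{-2})^{-1}$ and $\Gamma(1/2)=\sqrt{\pi}$ give $G(1)/\sqrt{\pi}=2^{-1/2}\prod_{p\equiv 3\bmod 4}(1-p^{-2})^{-1/2}=K$. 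The one caveat is that your final step is a plan rather than a proof: the Perron integral, the analytic continuation of $\zeta(s)^{1/2}$ across the $1$-line, the Hankel contour around the branch point, and the uniform error bounds from the zero-free regions are exactly where all the analytic difficulty of the theorem sits, and you invoke them rather than carry them out. So as a self-contained proof it is incomplete, but as a reduction to standard Selberg--Delange machinery it is sound, and it is essentially the same proof that the paper's citation points to.
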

Ideally, we want to use the above \ref{thm:landau2sosdensity} on the density of $\ss$ to show that
$\posslp$  reduces to $\twosos$, as we did for $\thrsos$. There are two issues with this approach:
\begin{enumerate}
  \item The density of $\ss$ integers is not as high as $\sss$ integers, hence to find 
    the next $\ss$ integer after a given $N\in \IN$ might require a larger shift (as compared to the  shift of 2
    for $\sss$). This issue is overcome below by using $\NP$ oracle reductions instead of $\P$ reductions.
  \item A more serious issue is that \ref{thm:landau2sosdensity}  says something about
    the density of $\ss$ integers only asymptotically, as $x\to\infty$. But
    for this idea of finding the next $\ss$ integer after a given integer only works if this density bound
    is true for all intervals of naturals. This issue is side stepped by relying on the \ref{conj:gencramconj} below.
\end{enumerate}
Let $q$ and $r$ be positive integers such that $1\leq r<q$ and
$\gcd(q,r)=1$. We use $G_{q,r}(x)$ to denote maximum gap between
primes in the arithmetic progression $\{qn+r\mid n\in\IN,qn+r\leq x\}$.
We use $\varphi(n)$ to denote the Euler's totient function\ie the number of positive $m\leq n$ with $\gcd(m,n)=1$.
\begin{conjecture}[Generalized Cram{\'e}r conjecture A, \cite{kourbatov2018distribution}]
\label{conj:gencramconj}For any $q>r\geq1$ with $\gcd(q,r)=1$,
we have 
\[
G_{q,r}(p)=O(\varphi(q)\log^{2}p).
\]
\end{conjecture}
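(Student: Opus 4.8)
The honest starting point is that \ref{conj:gencramconj} is stated as a \emph{conjecture}, attributed to Kourbatov \cite{kourbatov2018distribution}, and it is genuinely open: no unconditional proof is known, and in this paper it serves only as a hypothesis feeding into \ref{thm:2soshard}. So rather than a proof, what I can lay out is the heuristic derivation that makes $G_{q,r}(p)=O(\varphi(q)\log^2 p)$ the natural prediction, together with an explanation of why upgrading it to a theorem is out of reach.

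The plan for the heuristic is to run Cram\'er's probabilistic model of the primes, localized to the arithmetic progression $\{qn+r : n\in\IN,\ \gcd(q,r)=1\}$. First I would fix the correct local density: by the prime number theorem for arithmetic progressions the number of primes up to $x$ in the progression is $\sim \frac{1}{\varphi(q)}\frac{x}{\log x}$, whereas the progression has $\sim x/q$ terms up to $x$, so a typical term $qn+r$ near $x$ should behave like a prime with probability
\[
  p_n \approx \frac{q}{\varphi(q)\,\log x}.
\]
Treating the events ``$qn+r$ is prime'' as independent with this probability, the probability that $k$ consecutive terms are all composite is $\approx (1-p_n)^k \approx e^{-k p_n}$; requiring roughly one such run among the $\sim x/q$ terms up to $x$ forces $k\approx \frac{1}{p_n}\log\frac{x}{q}$. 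Multiplying this run length by the spacing $q$ between consecutive terms turns it into an integer gap of size $\Theta\!\paren{\varphi(q)\,\log x\,\log\tfrac{x}{q}}=O\!\paren{\varphi(q)\log^2 x}$, matching the conjecture; the case $q=1$ recovers the classical Cram\'er prediction $G_{1,1}(p)=O(\log^2 p)$ for gaps between consecutive primes.

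The hard part --- and the reason this stays a conjecture --- is that the independence hypothesis at the heart of Cram\'er's model has no rigorous analogue, so the heuristic cannot be converted into a proof by any known method. What is actually provable is exponentially weaker. Already for $q=1$, the best unconditional bound on gaps between consecutive primes is polynomial, of the shape $O(p^{0.525})$ (Baker--Harman--Pintz), and even under the Riemann Hypothesis one obtains only $O(\sqrt{p}\,\log p)$; for a general progression one must in addition control the zero-free regions of the Dirichlet $L$-functions modulo $q$ and invoke Linnik-type estimates, which still yield gaps polynomial in $p$. Closing the gulf between these polynomial bounds and the conjectured polylogarithmic $\varphi(q)\log^2 p$ is exactly the obstacle, and it is at least as hard as the original Cram\'er conjecture, one of the famous unresolved problems on the fine distribution of primes. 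I therefore do not anticipate an unconditional proof; the statement is best treated as a numerically well-supported hypothesis \cite{kourbatov2018distribution}, which is precisely how \ref{thm:2soshard} uses it.
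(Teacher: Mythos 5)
You are right: the paper offers no proof of \ref{conj:gencramconj}---it is stated as a conjecture, attributed to \cite{kourbatov2018distribution}, and used only as a hypothesis in \ref{thm:2soshard}---so correctly declining to prove it is exactly the treatment the paper itself adopts. Your Cram\'er-model heuristic (local density $\frac{q}{\varphi(q)\log x}$, independence, run-length $\approx \varphi(q)\log^2 x$) and your summary of the provable state of the art (Baker--Harman--Pintz $O(p^{0.525})$, $O(\sqrt{p}\log p)$ under RH, Linnik-type bounds for progressions) are accurate, and there is nothing in the paper to compare them against.
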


\subsection{Lower Bounds for $\protect\twosos$}
\begin{lem}
\label{lem:2sosseqslphard}$\eqslp\leq_{\P}\twosos$.
\end{lem}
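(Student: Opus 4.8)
The plan is to mimic the reduction $\eqslp \leq_{\P} \thrsos$ of \ref{lem:3sosseqslphard}, replacing the ``never a $\sss$'' gadget $7M^{4}$ from \ref{lem:gen3sos} by a gadget that is a $\ss$ only when the input vanishes. By \ref{thm:twososchar}, an integer exceeding $1$ fails to be a $\ss$ as soon as some prime $p\equiv 3\pmod 4$ divides it to an odd power. The smallest such prime is $3$, so I would engineer the gadget so that $3$ always occurs to an odd power unless $N=0$.

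Concretely, given a straight-line program computing $N\in\IZ$, I would construct in polynomial time a straight-line program computing $3N^{2}$ (obtain the constant $3$ as $1+1+1$ and multiply by $N\times N$). This is a polynomial-time many-one reduction making a single $\twosos$ oracle call.

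The core claim is that $3N^{2}$ is a $\ss$ if and only if $N=0$. If $N=0$, then $3N^{2}=0=0^{2}+0^{2}$ is a $\ss$. If $N\neq 0$, then $3N^{2}\geq 3>1$, and writing $v_{3}(\cdot)$ for the exponent of $3$ in a prime factorization, we have $v_{3}(3N^{2})=1+2\,v_{3}(N)$, which is odd. Since $3\equiv 3\pmod 4$, \ref{thm:twososchar} shows that $3N^{2}$ is not a $\ss$. Squaring $N$ is what makes this work: it forces the contribution $2\,v_{3}(N)$ to be even, so the total exponent of $3$ is always odd. (The unsquared gadget $3N$ would fail, for instance at $N=3$, where $3N=9=3^{2}+0^{2}$ is a $\ss$.)

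I do not expect a genuine obstacle here. Once one observes that $3$ is the convenient prime $\equiv 3\pmod 4$ and that squaring pins down the parity of its exponent, the argument is immediate and sidesteps the case analysis on powers of $4$ that \ref{lem:gen3sos} required for $\sss$. If one prefers to follow the template of the $\thrsos$ section verbatim, I would first isolate the auxiliary statement ``for every $M\in\IZ_{+}$, the integer $3M^{2}$ is not a $\ss$'' as the two-squares analogue of \ref{lem:gen3sos}, and then apply it with $M=N^{2}$ (giving the gadget $3N^{4}$), exactly as \ref{lem:3sosseqslphard} applies \ref{lem:gen3sos}.
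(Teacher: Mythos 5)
Your proof is correct and takes essentially the same approach as the paper: the paper's proof also multiplies a square by $3$ and invokes \ref{thm:twososchar} through the odd exponent of $3$, using the gadget $3M^{2}$ with $M=N^{2}$ (i.e., $3N^{4}$) rather than your slightly leaner $3N^{2}$. The difference is immaterial, and you even describe the paper's exact variant in your closing remark.
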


\begin{proof}
Given a straight-line program representing an integer $N$, we want
to decide whether $N=0$. Suppose $M=N^{2}$. We have $M\geq0$ and
$M=0$ iff $N=0$. If $M\neq0$ then by employing \ref{thm:twososchar},
$3M^{2}$ cannot be a $\ss$. Hence $3M^{2}$ is a $\ss$ iff $M=0$. 

\end{proof}

\begin{proof}[Proof of \cref{thm:2soshard}]
Given a straight-line program of size $s$ representing an integer
$N$, we want to decide whether $N>0$. 
Choose $M=2^{3s}$. We compute the value of $N \bmod T$, where $T\eqdef 2M+1$. This can be done in
$\poly(s)$ time by simulating the computation of the given SLP (which computes $N$)
modulo $T$. If $\abs{N} \leq M$, then we can even recover the exact value of $N$ by
knowing $N\bmod T$. So by assuming $\abs{N}\leq M$, we guess the exact value of $N$. 
Let us call this guessed value (computed by knowing $N \bmod T$) to be $\pri{N}$.
Then by using $\eqslp$ oracle (which can be simulated by
$\twosos$ oracle using \ref{lem:2sosseqslphard}), we  check if the equality
$N=\pri{N}$ is actually true. If $N=\pri{N}$ then we can easily determine the sign of $N$. Otherwise our assumption
$\abs{N}\leq M$ is false and hence we can assume that $\abs{N}>M$.

Suppose $p\geq \abs{N}$ is the smallest prime of the form $4k+1$. By using
the results in \cite{Breusch1932}, we know that $p\leq2\abs{N}$ for $\abs{N}\geq7$.
By using \ref{conj:gencramconj} with $q=4,r=1$, we get that $p\leq \abs{N}+O(\varphi(4)\log^{2}p)\leq \abs{N}+c\log^{2}\abs{N}$
for some absolute constant $c$. For large enough $\abs{N}$, it implies that $p\leq \abs{N}+\log^{3}\abs{N}\leq \abs{N}+2^{3s}$.
Therefore $p-\abs{N}\leq M$. By using \ref{thm:twososchar}, we know that $p$ is a $\ss$.
Now we guess the witness $S\eqdef p-\abs{N}$, clearly this has a binary description of size
at most $O(s)$. Now we use $\twosos$ oracle to check if $N+S$ is a $\ss$. 
If $N>0$ then clearly such a witness exists.
On the other hand if $N<0$ then we know that $N < -M$ and $N+S$ cannot be a $\ss$.
Therefore if \ref{conj:gencramconj}is true then $\posslp \in \NP^{\twosos}$.
\end{proof}
Similarly to $\twosos$ and $\thrsos$, one can also study the complexity of the following problem
$\sqslp$.
\begin{problem}[$\sqslp$, Problem 7 in \cite{sumofsqrissac2023}]
Given a straight-line program representing $N\in\IZ$, decide whether
$N=a^{2}$ for some $a\in\IZ$.
\end{problem}

$\sqslp$ was shown to be decidable in randomized polynomial time
in \cite[Sec 4.2]{sumofsqrissac2023}, assuming GRH. 
The complexity of $\twosos$ remains an intriguing open problem. If $\twosos$  were to be in $\P$ then this would disprove \ref{conj:gencramconj} or prove that 
$\posslp \in \NP$, neither of which is currently known.

\section{Polynomials as Sum of Squares}\label{sec:polyassos}
\subsection{Positivity of Polynomials}
Analogous to $\posslp$, we also study the positivity problem for
polynomials represented by straight line programs. In particular,
we study the following problem, called $\pospolyslp$.
\begin{problem}[$\pospolyslp$]
Given a straight-line program representing a univariate polynomial
$f\in\IZ[x]$, decide if $f$ is positive\ie $f(x)\geq0$ for all
$x\in\R$.
\end{problem}

It is known that every positive univariate polynomial $f$ can be written as
sum of two squares. The formal statement (\ref{lem:realpossumoftwosqr})  and its folklore proof can be found in
the appendix.

Now we look at the rational variant of the \ref{lem:realpossumoftwosqr}.
Suppose $f\in\IZ[x]\subset\IQ[x]$ is a positive polynomial. We know
that it can be written as sum of squares of two real polynomials.
Can it also be written as sum of squares of rational polynomials?
In this direction, Landau proved that each positive polynomial in
$\IQ[x]$ can be expressed as a sum of at most eight polynomial squares
in $\IQ[x]$ \cite{Landau1906}. Pourchet improved this result and
proved that only five or fewer squares are needed \cite{Pourchet1971}.

We now show that $\pospolyslp$ is $\coNP$-hard, this result follows
from an application of results proved in \cite{PERRUCCI2007471}.
Suppose $W$ is a 3-SAT formula on $n$ literals $\lst xn$ with $W=C_{1}\land C_{2}\land\dots\land C_{\ell}$,
here $C_{i}$ is a clause composed of 3 literals. We choose any $n$
distinct odd primes $\lstc pn<$. So $x_{i}$ is associated with the
prime $p_{i}$. Thereafter, we define $M\eqdef\prod_{i\in[n]}p_{i}$.
The following \ref{thm:pmwsmallcircuit} was proved in \cite{PERRUCCI2007471}.

\begin{thm}[\cite{PERRUCCI2007471}]
\label{thm:pmwsmallcircuit}One can construct a SLP $C$ of size
$\poly(p_{n},\ell)$ which computes a polynomial $P_{M}(W)$ of the form:
\[
P_{M}(W)\eqdef\sum_{i\in[\ell]}(F_{M}(C_{i}))^{2}
\]
such that $P_{M}(W)$ has a real root iff $W$ is satisfiable. Here, $F_M(C_i)$ is a univariate polynomial that depends on $C_i$ (see \cite{PERRUCCI2007471} for more details).
\end{thm}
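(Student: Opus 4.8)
To prove \ref{thm:pmwsmallcircuit}, the plan is to reduce $3$-SAT to the question ``does a sum of squares have a real root?'' by packing the $n$ Boolean variables into a single real variable via the Chinese Remainder Theorem applied to the primes $\lst pn$. Concretely, I would write a candidate real root as $x=2\cos\phi$ and exploit the Chebyshev identity $T_{M/p_j}(x/2)=\cos\!\paren{(M/p_j)\phi}$, where $M=\prod_{j}p_j$. An encoding point $x=2\cos(2\pi k/M)$ then represents the assignment ``$x_j$ true'' exactly when $k\equiv 0 \bmod p_j$, i.e.\ when $T_{M/p_j}(x/2)=1$, and ``$x_j$ false'' when $k\not\equiv 0 \bmod p_j$, i.e.\ when $T_{M/p_j}(x/2)$ equals $\cos(2\pi k'/p_j)$ for some $k'\neq 0$. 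Thus each literal gets a single univariate polynomial that vanishes precisely when that literal is satisfied: $T_{M/p_j}(x/2)-1$ for a positive literal $x_j$, and $\Psi_{p_j}\!\paren{T_{M/p_j}(x/2)}$ for a negative literal $\neg x_j$, where $\Psi_{p_j}$ cuts out the nontrivial $p_j$-th-root-of-unity cosines. Defining $F_M(C_i)$ to be the product of the satisfaction polynomials of the three literals of $C_i$, the product vanishes at an encoding point iff the clause is satisfied, so $P_M(W)=\sum_{i\in[\ell]}(F_M(C_i))^2\ge 0$ has a common real root iff all clauses can be simultaneously satisfied.

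The second ingredient is keeping everything integral and the SLP small. To land in $\IZ[x]$ I would use that the Galois conjugates of $\cos(2\pi/p_j)$ are exactly $\{\cos(2\pi k'/p_j): 1\le k'\le p_j-1\}$, so the ``false'' factor $\Psi_{p_j}\in\IQ[y]$ is (a scaling of) the minimal polynomial of $\cos(2\pi/p_j)$; clearing denominators keeps the coefficients integral. The degree $M/p_j$ is exponential in $n$, but $T_m$ admits an SLP of size $O(\log m)=O(n\log p_n)$ via the doubling identities $T_{2m}=2T_m^2-1$ and $T_{m+1}=2yT_m-T_{m-1}$, and composing the degree-$(<p_j)$ polynomial $\Psi_{p_j}$ costs only $\poly(p_n)$ further gates. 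Summing the $\ell$ squared clause polynomials then yields an SLP of total size $\poly(p_n,\ell)$, matching the claimed bound.

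The step I expect to be the main obstacle is proving the correspondence is \emph{exact}, i.e.\ ruling out spurious real roots that are not genuine assignments. First one checks that all real roots lie in $[-2,2]$: for $\abs{x}>2$ we have $T_{M/p_j}(x/2)>1$, which exceeds every admissible value in $[-1,1]$, so no clause factor can vanish there. Then, writing a root as $x=2\cos\phi$, the key observation is that \emph{any} vanishing literal factor already forces $\cos\!\paren{(M/p_j)\phi}$ to be a $p_j$-th-root-of-unity cosine, which pins $\phi\in\frac{2\pi}{M}\IZ$; since every clause must be satisfied, some factor vanishes in each $F_M(C_i)$, so validity of the encoding is automatic and no extra ``validity'' squares are needed (this is what makes the clean form $P_M(W)=\sum_i (F_M(C_i))^2$ correct). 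Once $\phi=2\pi k/M$ with $k\in\IZ$, the residues $k\bmod p_j$ determine a genuine assignment, and the vanishing pattern reads off exactly the satisfied literals. The remaining work, verifying both directions of ``real root $\iff$ satisfiable'' and the size/integrality bookkeeping, I expect to be routine once this encoding and the no-spurious-roots argument are in place.
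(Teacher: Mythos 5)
The paper itself contains no proof of \ref{thm:pmwsmallcircuit}: the statement is imported verbatim from \cite{PERRUCCI2007471} and used as a black box, so there is no internal argument to compare against. Your blind reconstruction is, in substance, exactly the Perrucci--Sabia construction: CRT packing of the $n$ Boolean variables into a single residue $k \bmod M$, Chebyshev polynomials $T_{M/p_j}$ to obtain exponential degree from $O(\log M)$-size SLPs, the minimal polynomial of the cosines of nontrivial $p_j$-th roots of unity to detect negated literals, clause polynomials $F_M(C_i)$ as products of the three literal polynomials, and $P_M(W)$ as the sum of their squares. Your correctness argument is also the right one, and its two pillars are stated correctly: all real roots of the literal polynomials lie in $[-2,2]$, and a single vanishing literal factor already forces $\phi\in\frac{2\pi}{M}\IZ$, so every real root of $P_M(W)$ is an encoding point and the vanishing pattern reads off a satisfying assignment; no separate ``validity'' gadget is needed.

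The one detail to tighten is integrality and constant-freeness of the SLP, since the SLPs in this paper only use the constants $0,1$ and the operations $+,-,\times$. The polynomial $T_m(x/2)-1$ is not in $\IZ[x]$, so you should work throughout with $V_m(x)\eqdef 2T_m(x/2)$, which has integer coefficients and a constant-free SLP of size $O(\log m)$ via the pair recurrences $V_{2m}=V_m^{2}-2$ and $V_{2m+1}=V_{m+1}V_m-x$, and pair it with the \emph{monic integer} minimal polynomial of the algebraic integer $2\cos(2\pi/p_j)$ (degree $(p_j-1)/2$, coefficients computable in $\poly(p_j)$ time) rather than a rational scaling of the minimal polynomial of $\cos(2\pi/p_j)$. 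With that normalization your scaling remarks become precise, and the size bound $\poly(p_n,\ell)$ follows exactly as you describe.
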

\begin{thm}[Theorem 1.2 in \cite{basu2009bound}]
\label{thm:minpolyfval}Let $f\in\IZ[x]$ be a univariate polynomial
of degree $d$ taking only positive values on the interval $[0,1]$.
Let $\tau$ be an upper bound on the bit size of the coefficients
of $f$. Let $m$ denote the minimum of $f$ over $[0,1]$. Then 
\[
m>\frac{3^{d/2}}{2^{(2d-1)\tau}(d+1)^{2d-1/2}}.
\]
\end{thm}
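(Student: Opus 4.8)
The plan is to bound the minimum $m$ by a case analysis on where it is attained, reducing the whole problem to lower-bounding the \emph{critical values} of $f$. Since $f\in\IZ[x]$, the endpoint values $f(0)$ and $f(1)$ are positive integers, hence at least $1$; so whenever the minimum over $[0,1]$ is attained at $0$ or $1$ the asserted inequality is immediate, as its right-hand side lies far below $1$. The only genuine case is when $m=f(x_0)$ for an interior point $x_0\in(0,1)$ at which $f'(x_0)=0$. Thus it suffices to produce a lower bound on $\abs{f(\alpha)}$ valid for every real root $\alpha$ of $f'$.

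To control all critical values simultaneously, I would pass to the \emph{critical-value polynomial}
\[
  R(t)\eqdef\operatorname{Res}_x\bigl(f'(x),\,t-f(x)\bigr)\in\IZ[t].
\]
Because $f,f'\in\IZ[x]$, the resultant $R$ has integer coefficients; from the factored form $R(t)=\operatorname{lc}(f')^{d}\prod_{\alpha:\,f'(\alpha)=0}\bigl(t-f(\alpha)\bigr)$ one reads off that $\deg_t R=d-1$ and that the roots of $R$ are exactly the critical values $f(\alpha)$. In particular the minimum $m=f(x_0)>0$ is a nonzero root of $R$, so a lower bound on $m$ follows from a lower bound on the least nonzero root of an integer polynomial of controlled height.

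The final step combines two classical estimates. First, for any nonzero root $\rho$ of an integer polynomial one has $\abs{\rho}\ge 1/(1+\nrm R_{\infty})$: after stripping the factor $t^{k}$ coming from any zero roots, the reversal of the remaining polynomial has a nonzero integer leading coefficient, so Cauchy's root bound applied to it bounds $\abs{1/\rho}$. Second, the coefficients of $R$ are, up to sign, minors of the $(2d-1)\times(2d-1)$ Sylvester matrix of $f'$ and $t-f$, whose $d$ rows of $f'$-coefficients have entries at most $d\,2^{\tau}$ and whose $d-1$ rows of $(t-f)$-coefficients have entries at most $2^{\tau}$. Hadamard's inequality then bounds $\nrm R_{\infty}$ by a product of the row $\ell_2$-norms, and here the exponents of the target bound all appear: the height raised to the matrix size gives $2^{(2d-1)\tau}$, the $d$ derivative rows (each of norm about $d^{3/2}2^{\tau}$) contribute $d^{3d/2}$, which together with the $(d+1)^{(d-1)/2}$ from the remaining rows yields $(d+1)^{2d-1/2}$, and the surviving $3^{d/2}$ in the numerator is a lower-order sharpening of these estimates. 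Substituting the height bound into $m\ge 1/(1+\nrm R_{\infty})$ gives the inequality in the stated form.

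I expect the main obstacle to be not the shape of the bound but the \emph{exact} constants: the crude Cauchy and Hadamard inequalities already give the correct order $2^{-(2d-1)\tau}\cdot\operatorname{poly}(d)^{-d}$, yet extracting precisely $3^{d/2}$ in the numerator and the exponent $2d-1/2$ on $(d+1)$ requires a careful, entrywise application of Hadamard's inequality to the Sylvester matrix (tracking the $\ell_2$-norms of rows that mix a single symbolic $t$-entry with $\tau$-bit integer entries) together with a sharp root-separation estimate in place of the crude $1/(1+\nrm R_{\infty})$ bound. Tightening these two ingredients to matching constants, rather than settling for the easy order-of-magnitude version, is where the real work lies.
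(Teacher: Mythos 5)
First, a point of reference: the paper does not prove this statement at all — it is quoted verbatim as Theorem 1.2 of \cite{basu2009bound}, so there is no in-paper proof to compare your attempt against; I can only judge your proposal against the statement itself and the standard route to such bounds. Your skeleton is that standard route, and its structural steps are all correct: $f(0),f(1)$ are positive integers, hence $\geq 1$, which beats the right-hand side (that side is $<1$ for $d,\tau\geq 1$); an interior minimum is a critical value and hence a root of $R(t)\eqdef\operatorname{Res}_x\bigl(f'(x),\,t-f(x)\bigr)\in\IZ[t]$, which is a \emph{nonzero} polynomial of degree $d-1$ by the product formula (its leading coefficient is $(d a_d)^d$, so you do not even need the irreducibility of $t-f(x)$); and the reversal/Cauchy bound $m\geq 1/(1+\lVert R\rVert_\infty)$ is legitimate because the trailing nonzero coefficient of $R$ is a nonzero integer.

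The genuine gap is the one you flag yourself, and it is not cosmetic: the constants \emph{are} the theorem, and what you have written only yields $\lVert R\rVert_\infty\lesssim 2^{d-1}\,d^{3d/2}(d+1)^{(d-1)/2}\,2^{(2d-1)\tau}$, which overshoots the required $3^{-d/2}(d+1)^{2d-1/2}\,2^{(2d-1)\tau}$ by a factor of roughly $(2\sqrt3)^d$, so the inequality you can actually assert is strictly weaker than the one stated. The good news is that the refinement you anticipate does close the gap, and concretely as follows. Each of the $d$ derivative rows has $\ell_2$-norm at most $(2^\tau-1)\sqrt{\sum_{i=1}^d i^2}\leq(2^\tau-1)(d+1)^{3/2}/\sqrt3$, since $d(d+1)(2d+1)/6\leq(d+1)^3/3$; this single estimate is the sole source of both the numerator $3^{d/2}$ and the factor $(d+1)^{3d/2}$. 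For the $d-1$ rows carrying the symbolic entry, expand the determinant multilinearly in $t$ and sum over all coefficients of $R$: this gives $\lVert R\rVert_1\leq F^d\,(G+1)^{d-1}$ with $F=(2^\tau-1)(d+1)^{3/2}/\sqrt3$ and $G=(2^\tau-1)\sqrt{d+1}$, and the bit-size convention $\abs{a_i}\leq 2^\tau-1$ absorbs the binomial overhead via $G+1\leq 2^\tau\sqrt{d+1}$. The exponents then assemble exactly: $3d/2+(d-1)/2=2d-\tfrac12$ on $(d+1)$ and $d+(d-1)=2d-1$ on $2^\tau$, so $\lVert R\rVert_1\leq(2^\tau-1)^d2^{(d-1)\tau}(d+1)^{2d-1/2}3^{-d/2}$, and the final $+1$ in $m\geq 1/(1+\lVert R\rVert_1)$ is absorbed using $(2^\tau-1)^d\leq 2^{d\tau}-1$, giving precisely the claimed bound. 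With this computation your outline becomes a complete proof; without it, it is an order-of-magnitude version of the statement, not the statement.
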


The theorem above proves the lower bound for the interval $[0,1]$.
Next, we extend it to the whole real line.

\begin{lem}
\label{lem:minvalpospoly}Let $f\in\IZ[x]$ be a positive univariate
polynomial of degree $d$. Let $\tau$ be an upper bound on the bit
size of the coefficients of $f$. Let $m$ denote the minimum of $f$
over $\IR$. If $m\neq0$ then 
\[
m>\frac{3^{d/2}}{2^{(2d-1)\tau}(d+1)^{2d-1/2}}.
\]
\end{lem}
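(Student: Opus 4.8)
The plan is to extend \ref{thm:minpolyfval} from the interval $[0,1]$ to all of $\IR$ while keeping the degree $d$ and the coefficient bit-size $\tau$ \emph{unchanged}, so that the identical lower bound survives. First I would record the structural consequences of the hypotheses: since $f(x)\ge 0$ for all real $x$ and $m\neq 0$, we have $m>0$ and in fact $f>0$ everywhere; in particular $f(0)>0$ (so the constant term is nonzero), the leading coefficient is positive, and $d$ is even (an odd-degree polynomial is unbounded below and hence takes negative values). Consequently $f(x)\to+\infty$ as $x\to\pm\infty$, so the infimum $m$ is attained at some finite critical point $x_0$ with $f'(x_0)=0$.

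The tempting approach of mapping a bounding interval $[-R,R]$ (with $R$ a Cauchy-type root bound for $f'$) onto $[0,1]$ by an affine substitution \emph{fails}, and this is the main obstacle: the substitution introduces powers of $R=2^{\Theta(\tau)}$, inflating the coefficient bit-size to $\Theta(d\tau)$ and thereby badly weakening the bound. The key idea is instead to reach $[0,1]$ using only operations that preserve both $d$ and $\tau$, namely the reflection $x\mapsto -x$ (which sends $a_k$ to $(-1)^k a_k$, leaving absolute values untouched) and the reversal $x\mapsto x^{d}f\paren{1/x}$ (which merely reverses the coefficient list). After the reflection I may assume $x_0\ge 0$, and I split on whether $x_0\le 1$.

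If $x_0\le 1$, then $x_0\in[0,1]$, and since $m$ is the global minimum we have $\min_{[0,1]}f=m>0$; applying \ref{thm:minpolyfval} to $f$ directly yields the claim. If $x_0>1$, I pass to the reversal $g(x)\eqdef x^{d}f\paren{1/x}\in\IZ[x]$, which has the same degree $d$ (as $f(0)\neq 0$) and the same coefficient bit-size $\tau$, and which is strictly positive on $[0,1]$: indeed $g(0)$ equals the positive leading coefficient of $f$, and $g(x)=x^{d}f(1/x)>0$ for $x\in(0,1]$ because $1/x\ge 1$. Writing $m_g\eqdef\min_{[0,1]}g$, \ref{thm:minpolyfval} gives $m_g>\frac{3^{d/2}}{2^{(2d-1)\tau}(d+1)^{2d-1/2}}$. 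Since $1/x_0\in(0,1)$ and $g(1/x_0)=x_0^{-d}f(x_0)=x_0^{-d}m$, I conclude $m=x_0^{d}\,g(1/x_0)\ge x_0^{d}\,m_g\ge m_g$, the last step using $x_0>1$; chaining with the bound on $m_g$ closes this case.

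The crux is precisely this last case: I need a change of variable that carries a faraway minimizer into $[0,1]$ \emph{for free}, and the reversal does exactly that because the value rescales by the factor $x_0^{d}\ge 1$, which only strengthens the inequality rather than weakening it. The remaining checks — that reflection and reversal leave $d$ and $\tau$ invariant, that $g>0$ on $[0,1]$, and the trivial boundary value $x_0=1$ — are routine, and one may assume $d\ge 1$ throughout (the constant case is degenerate).
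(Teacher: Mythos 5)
Your proposal is correct and follows essentially the same route as the paper: both use the reversal polynomial $x^{d}f\paren{1/x}$ (which preserves the degree $d$ and the coefficient bound $\tau$ since $f(0)\neq 0$), exploit that dividing by $\lambda^{d}\le 1$ (equivalently multiplying by $x_0^{d}\ge 1$) only helps, apply \ref{thm:minpolyfval} on $[0,1]$, and handle the negative axis via $f(-x)$. The only difference is cosmetic — you case-split on where the global minimizer lies, while the paper bounds the minimum separately on $[0,1]$, $[1,\infty)$, and the negative reals — so the two arguments are the same in substance.
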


\begin{proof}
We assume $m\neq0$. Consider the reverse polynomial $f_{\rev}\eqdef x^{d}f\paren{\frac{1}{x}}$.
It is is clear that $f_{\rev}$ is positive on $[0,\infty)$. Moreover,
$f_{\rev}$ has degree $d$ and $\tau$ is an upper bound on the bit
size of its coefficients. By employing \ref{thm:minpolyfval} on $f_{\rev}$,
we infer that 
\[
\min_{a\in[0,1]}f_{\rev}(a)>\frac{3^{d/2}}{2^{(2d-1)\tau}(d+1)^{2d-1/2}}.
\]
\ref{thm:minpolyfval} implies that:
\begin{equation}
\min_{a\in[0,1]}f(a)>\frac{3^{d/2}}{2^{(2d-1)\tau}(d+1)^{2d-1/2}}.\label{eq:minf01}
\end{equation}
Now consider a $\lambda\in[0,1]$, we have:

\begin{equation}
f\paren{\frac{1}{\lambda}}=\frac{f_{\rev}(\lambda)}{\lambda^{d}}\geq f_{\rev}(\lambda)>\frac{3^{d/2}}{2^{(2d-1)\tau}(d+1)^{2d-1/2}}.\label{eq:minf1inf}
\end{equation}
By combining \ref{eq:minf01} and \ref{eq:minf1inf}, we obtain that:
\[
\min_{a\in[0,\infty)}f(a)>\frac{3^{d/2}}{2^{(2d-1)\tau}(d+1)^{2d-1/2}}.
\]
By repeating the above argument on $f(-x)$ instead of $f(x)$, we
obtain:

\[
m=\min_{a\in(-\infty,\infty)}f(a)>\frac{3^{d/2}}{2^{(2d-1)\tau}(d+1)^{2d-1/2}}.
\]
\end{proof}

\begin{proof}[Proof of \cref{thm:pospolyslpsonphard}]
Suppose $W$ is 3-SAT formula on $n$ literals $\lst xn$ with $W=C_{1}\land C_{2}\land\dots\land C_{\ell}$,
here $C_{i}$ is a clause composed of 3 literals. By using \ref{thm:pmwsmallcircuit},
we can construct a SLP of size $\poly(p_{n},\ell)$ which computes a
polynomial $P(W)\in\IZ[x]$ such that $P(W)$ has a real root iff $W$
is satisfiable. 
(Recall that $p_1 < \dots < p_n$ was a sequence of odd primes.)
Since $P(W)$ is a sum of squares, $P(W)$ is positive.
Suppose $m$ is the minimum value of $P(W)$ over $\IR$. We know that
$m\geq0$.

By the prime number theorem we can assume $p_{n}=O(n\log n)$. Moreover,
it is easy to see that $\ell\leq8n^{3}$. Hence the constructed SLP is
of size $s=\poly(n)$. Suppose $\tau$ is an upper bound on the bit
size of the coefficients of $P(W)$. It is easy to see that $\deg(P(W))\leq2^{s}$
and $\tau\leq2^{s}$. If $W$ is not satisfiable then we know that
$m\neq0$ and therefore \ref{lem:minvalpospoly} implies that 
\[
\log(m)>2^{s-1}\log3-(2^{s+1}-1)2^{s}-(2^{s+1}-1/2)\log(2^{s}+1)>-2^{2s+2}.
\]
Hence
\[
m>\frac{1}{2^{2^{2s+2}}}.
\]
Suppose $B=2^{2^{2s+2}}$. Then $B\cdot P(W)-1$ is positive iff $m>0.$
Hence we have:
\[
B\cdot P(W)-1\text{\text{ is positive} iff }W\text{ is unsatisfiable}.
\]
Moreover $B\cdot P(W)-1$ has a SLP of size $O(s)=\poly(n)$ and this
SLP can be constructed in time $\poly(n)$. Since determining the
unsatisfiability of $W$ is $\coNP$-complete, it follows that $\pospolyslp$
is $\coNP$-hard.
\end{proof}

\subsection{Checking if a Polynomial is a Square}\label{subsec:sesqpolyslp}

In light of \cite{Pourchet1971}'s result and \ref{thm:pospolyslpsonphard},
we also study the following related problem $\sqpolyslp$. Another motivation to study this
problem also comes from the quest for studying the complexity of factors of polynomials.
In this context, one wants to prove that if a polynomial can be computed a small arithmetic circuit then
so can be its factors.  In this direction, Kaltofen showed that if a polynomial
$f=g^eh$  can be computed an arithmetic circuit of size $s$ and $g,h$ are coprime, then
$g$ can also be computed by a circuit of size
$\poly(e, \deg(g), s)$ \cite{Kaltofen87factoring}. When $f=g^e$ ,
Kaltofen also showed that $g$ can be computed an arithmetic circuit of size $\poly(\deg(g), s)$ \cite{Kaltofen87factoring}.
This question for finite fields is asked as an open question in \cite{kopparty2014equivalence}.
What if we do not want to find a small circuit for polynomial $g$ in case
$f=g^e$ but only want to determine if $f$ is $\th{e}$ power
of some polynomial. And in this decision problem, we want to avoid the 
dependency on $\deg(g)$ in running time, which can be exponential in $s$.
We study this problem for $e=2$ in $\sqpolyslp$, but our results work for any arbitrary constant $e$.
\begin{problem}[$\sqpolyslp$]
Given a straight-line program representing a univariate polynomial
$f\in\IZ[x]$, decide if $\exists g\in\IZ[x]$ such that $f=g^{2}$.
\end{problem}
One can also study the complexity of determining if the given univariate polynomial can be 
written as a sum of two, three or four squares but in this section,
we only focus on the problem $\sqpolyslp$. The following \ref{thm:whenisfsqr} 
hints to an approach that $\sqpolyslp$ can be reduced to $\sqslp$.
\begin{thm}[Theorem 4 in \cite{murty2008polynomials}]
\label{thm:whenisfsqr}For $f\in\IZ[x]$, $\exists g\in\IZ[x]$ with
$f=g^{2}$ iff $\forall t\in\IZ$, $f(t)$ is a perfect square.
\end{thm}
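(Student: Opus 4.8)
The plan is to prove the two directions separately; the forward direction is immediate, and the whole difficulty lies in the converse. If $f = g^{2}$ with $g \in \IZ[x]$, then for every $t \in \IZ$ we have $f(t) = g(t)^{2}$ with $g(t) \in \IZ$, so $f(t)$ is a perfect square. For the converse, assume $f(t)$ is a perfect square for all $t \in \IZ$. Using that $\IZ[x]$ is a UFD, I would first extract the largest square: write $f = C\,P^{2}\,Q$ with $C \in \IZ$, with $P,Q \in \IZ[x]$ primitive, and with $Q$ square-free (so $Q$ is, up to sign, the product of exactly those irreducible factors of $f$ occurring to odd multiplicity). Then $f(t) = C\,P(t)^{2}\,Q(t)$ for every $t$, and the goal reduces to showing $Q$ is constant and $C$ (after fixing signs) is a perfect square integer.

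The heart of the argument is a local obstruction: if $Q$ is non-constant, I would produce a single prime $\ell$ and an integer $t$ for which the $\ell$-adic valuation $v_{\ell}(f(t))$ is odd, contradicting that $f(t)$ is a perfect square. To find such $\ell$, I would invoke the classical theorem (Schur) that a non-constant integer polynomial has infinitely many prime divisors among its values $Q(1),Q(2),\dots$; from these infinitely many primes I discard the finitely many dividing $C$, the leading coefficient of $Q$, or the discriminant $\operatorname{disc}(Q)$. Any remaining prime $\ell$ divides some value $Q(n)$, so $\bar n$ is a root of $Q$ modulo $\ell$, and since $\ell \nmid \operatorname{disc}(Q)$ this root is simple, i.e.\ $Q(n) \equiv 0$ but $Q'(n) \not\equiv 0 \pmod{\ell}$. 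A one-step Hensel/Taylor lift then pins the valuation exactly: among $t \in \{n, n+\ell\}$ the expansion $Q(n+j\ell) \equiv Q(n) + j\ell\,Q'(n) \pmod{\ell^{2}}$ forces $v_{\ell}(Q(t)) = 1$ for at least one choice of $j$. For this $t$ we obtain $v_{\ell}(f(t)) = v_{\ell}(C) + 2\,v_{\ell}(P(t)) + v_{\ell}(Q(t)) = 0 + (\text{even}) + 1$, which is odd, so $f(t)$ is not a perfect square. I expect establishing this lemma --- in particular combining Schur's infinitude of prime divisors with the separability and Hensel step to guarantee a value of exact valuation one --- to be the main obstacle; everything else is bookkeeping.

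The contradiction shows $Q$ must be constant; being primitive and square-free, $Q = \pm 1$, so $f = C'\,P^{2}$ with $C' = \pm C \in \IZ$. To finish, I would evaluate at any $t_{0}$ with $P(t_{0}) \neq 0$ (possible since $P$ has finitely many roots): then $f(t_{0}) = C'\,P(t_{0})^{2}$ is a perfect square with $P(t_{0})^{2}$ a nonzero square, forcing $C'$ to be a non-negative perfect square (writing $C' = c_{0}d^{2}$ with $c_{0}$ square-free, the product is a square only if $c_{0} = 1$), say $C' = e^{2}$ with $e \in \IZ$. Hence $f = (eP)^{2}$, and $g \eqdef eP \in \IZ[x]$ witnesses $f = g^{2}$, completing the proof.
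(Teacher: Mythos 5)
The paper never proves \ref{thm:whenisfsqr}: it imports the statement verbatim from Murty \cite{murty2008polynomials} (Theorem~4 there), so there is no in-paper argument to compare yours against. Judged on its own, your route is a sound and essentially standard elementary one: extract the square part $f = C\,P^{2}Q$ with $Q$ square-free, invoke Schur's theorem on prime divisors of the values of the non-constant polynomial $Q$, discard the finitely many primes dividing $C$, the leading coefficient, or $\operatorname{disc}(Q)$, and use the Taylor/Hensel step to produce $t$ with $v_{\ell}(Q(t)) = 1$, so that $v_{\ell}(f(t))$ is odd; the final descent showing $C' = e^{2}$ is fine. It is worth noting how this sits next to what the paper actually uses: in \ref{subsec:sesqpolyslp} the paper applies the quantitative Hilbert irreducibility bound (\ref{thm:hilbertirredbound}) to $Y^{2} - f(T)$, obtaining the counting statement \ref{cor:sqpolyranbound} (only $O(B^{1/2}\log^{5}B)$-type many exceptional $t \le B$), which is what the $\coRP$ algorithm needs. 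Your Schur-plus-Hensel argument is more elementary --- no irreducibility theorem at all --- but purely qualitative: it yields infinitely many $t$ with $f(t)$ not a square, with no density bound. So it can replace the citation to Murty for \ref{thm:whenisfsqr}, but not \ref{cor:sqpolyranbound}.

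One step needs repair. Restricting to $t \in \{n, n+\ell\}$, you cannot rule out $P(t) = 0$; in that case $f(t) = C\,P(t)^{2}Q(t) = 0$, which \emph{is} a perfect square, and your valuation identity $v_{\ell}(f(t)) = v_{\ell}(C) + 2v_{\ell}(P(t)) + v_{\ell}(Q(t))$ degenerates ($v_{\ell}(P(t)) = \infty$), so no contradiction is reached. The fix is one line with tools you already have: the conditions $\ell \mid Q(t)$ and $\ell \nmid Q'(t)$ depend only on $t \bmod \ell$, and among the $t \equiv n \pmod{\ell}$ the values with $v_{\ell}(Q(t)) \ge 2$ occupy at most one residue class modulo $\ell^{2}$, since $Q'(n)$ is invertible mod $\ell$ and the Hensel lift is unique. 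Hence infinitely many $t$ in the class of $n$ satisfy $v_{\ell}(Q(t)) = 1$ exactly, and you may choose one avoiding the finitely many integer roots of $P$ (note $v_{\ell}(Q(t)) = 1$ already guarantees $Q(t) \neq 0$, and the case $f \equiv 0$, i.e.\ $C = 0$, is trivial). With this patch the proof is complete.
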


We shall use an effective variant of \ref{thm:whenisfsqr} which follows
the following effective variant of the Hilbert's irreducibility theorem.
For an integer polynomial $f$, $H(f)$ is the height of $f$\ie
the maximum of the absolute values of the coefficients of $f$.
\begin{thm}[\cite{walkowiak2005,debes2008}]
\label{thm:hilbertirredbound}Suppose $P(T,Y)$ is an irreducible
polynomial in $\IQ[T,Y]$ with $\deg_{Y}(P)\geq2$ and with coefficients
in $\IZ$ assumed to be relatively prime. Suppose $B$ is a positive
integer such that $B\geq2$. We define:
\begin{align*}
m & \eqdef\deg_{T}(P)\\
n & \eqdef\deg_{Y}(P)\\
H & \eqdef\max(H(P),e^{e})\\
S(P,B) & \eqdef\card{\{1\leq t\leq B \mid P(t,Y)\text{ is reducible in }\IQ[Y]\}}
\end{align*}
Then we have:
\[
S(P,B)\leq2^{165}m^{64}2^{296n}\log^{19}(H)B^{\frac{1}{2}}\log^{5}(B).
\]
\end{thm}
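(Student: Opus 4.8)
The plan is to prove this effective Hilbert irreducibility statement by a reduction-modulo-primes argument combined with the large sieve, which is precisely the mechanism that produces the $B^{1/2}\log^{5}(B)$ shape of the bound. First I would invoke Gauss's lemma to replace ``reducible in $\IQ[Y]$'' by ``reducible in $\IZ[Y]$ into primitive factors of positive $Y$-degree'', so that the bad set $\mathcal{B}\eqdef\{1\leq t\leq B\mid P(t,Y)\text{ reducible}\}$ is unchanged. The driving observation is monotonicity under reduction: if $P(t,Y)=g(Y)h(Y)$ over $\IZ$ with $\deg_{Y}g,\deg_{Y}h\geq1$, then for every prime $p$ not dividing the leading $Y$-coefficient of $P$ the reduction $P(t,Y)\bmod p$ is still reducible over $\mathbb{F}_{p}$, since factor degrees are preserved. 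Hence $t\in\mathcal{B}$ forces $\bar t\bmod p$ to avoid, for every such ``good'' prime $p$, those residues $\bar s$ for which $P(\bar s,Y)\bmod p$ is irreducible. If a positive proportion of residues modulo each good prime yields an irreducible fibre, the large sieve will force $\mathcal{B}$ to be small.

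The combinatorial heart is to guarantee that positive proportion. A transitive Galois group need not contain an $n$-cycle, so I cannot sieve directly on full irreducibility of the fibre. Instead I would decompose $\mathcal{B}=\bigcup_{1\leq d\leq n/2}\mathcal{B}_{d}$ according to the smallest factor degree $d$ and pass to the $d$-set resolvent $R_{d}(T,Z)\in\IZ[T,Z]$ with $\deg_{Z}R_{d}=\binom{n}{d}$, whose construction (for a suitable separating symmetric function) is effective and whose degree and height are controlled polynomially in $m$, $2^{n}$ and $\log H$. Having a factor of degree $d$ over $\IQ$ is detected by $R_{d}(t,Z)$ acquiring a rational root; and since $P$ is irreducible over $\IQ(T)$, no irreducible $\IQ(T)$-factor $R_{d,j}$ of $R_{d}$ can be linear in $Z$, so each $R_{d,j}$ has $\deg_{Z}\geq2$ and defines a genuine transitive branched cover. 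For such a cover the event ``$R_{d,j}(\bar s,Z)$ has no root in $\mathbb{F}_{p}$'' is exactly the event that Frobenius at $\bar s$ acts without fixed points, and a transitive permutation group of degree $e\geq2$ has a proportion of derangements bounded below by $1/e$ (Cameron--Cohen). Via the Weil bound (Chebotarev for function fields) this yields, for all good $p$, at least $\delta\,p-C(n)\sqrt{p}$ root-free residues, with $\delta\geq1/\binom{n}{d}$ and $C(n)$ an explicit constant governed by the genus and degree of the cover.

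With this positive-proportion input I would apply the large sieve inequality to each set $\{t\leq B\mid R_{d,j}(t,Z)\text{ has a rational root}\}$, sieving over the good primes $p\leq Q$ with $Q\eqdef\lfloor B^{1/2}\rfloor$. Because a fixed fraction $\delta$ of residue classes is excluded at each such prime, the sieve sum is $\gg\delta\,Q/\log Q$, and the estimate $\card{\mathcal{A}}\leq(B+Q^{2})/(\text{sieve sum})$ collapses to a bound of the form $B^{1/2}\log^{5}(B)$ times factors depending only on $m,n,H$. Summing the contributions over the polynomially many pairs $(d,j)$ then bounds $S(P,B)=\card{\mathcal{B}}$ by the stated expression.

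The step I expect to be the main obstacle is the explicit bookkeeping, not the sieve itself. One must (i) enumerate and bound the set of bad primes---those dividing the leading coefficient, the $T$-discriminant, or a modulus of bad reduction of some cover $R_{d,j}$---which is where $\log^{19}(H)$ and the large powers of $m$ enter, through resultant and discriminant height estimates; and (ii) make the Weil error $C(n)\sqrt{p}$ effective and verify that it is dominated by $\delta p$ for all good primes up to $Q$, which forces a quantitative lower threshold on the sieving primes and is the source of the $2^{296n}$ and $2^{165}$ constants. Propagating these quantities through the resolvent construction and the function-field Chebotarev estimate is the delicate part; once the positive proportion is secured, the large sieve and the derangement bound are comparatively routine.
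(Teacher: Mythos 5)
You should first be aware that the paper does not prove \ref{thm:hilbertirredbound} at all: it is imported as a black box from \cite{walkowiak2005,debes2008} and used only to derive \ref{cor:sqpolyranbound}, so there is no in-paper proof to compare yours against. Measured against the cited literature, your outline does track the method actually used there: Walkowiak's bound is an explicit version of S.~D.~Cohen's large-sieve proof of Hilbert's irreducibility theorem, and the ingredients you assemble --- reduction modulo good primes, decomposition of the exceptional set by smallest factor degree $d$, passage to resolvents of $Z$-degree $\binom{n}{d}$, the observation that no resolvent factor $R_{d,j}$ can be linear in $Z$ because $P$ is irreducible of degree $n>d$ over $\IQ(T)$, the derangement bound for transitive groups, a function-field Chebotarev estimate via Weil's bounds, and the large sieve with $Q\approx B^{1/2}$ --- are the right ones, in the right order.

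The genuine gap is the one you flag yourself and then set aside as ``bookkeeping'': for this particular statement, the bookkeeping \emph{is} the theorem. The claim being proved is not that $S(P,B)\leq C(m,n,H)\,B^{1/2}\log^{5}(B)$ for some constant, but that the specific quantity $2^{165}m^{64}2^{296n}\log^{19}(H)$ works; every factor in it arises from explicit height estimates for the resolvents, an explicit error term in the function-field Chebotarev theorem, and an explicit enumeration of bad primes and degenerate specializations. Your sketch establishes (modulo substantial but standard work) only the qualitative shape of the bound, and nothing in it lets one verify that the constants come out as stated rather than, say, $\log^{23}(H)$ or $2^{400n}$; deriving them is the bulk of the two cited papers. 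Note also that for the purposes of this paper the explicit values are not actually needed in that precision: any bound polynomial in $m$ and $\log H$, exponential in $O(n)$, and of order $B^{1/2}\operatorname{poly}\log B$ would support the proof of \ref{cor:sqpolyranbound} after adjusting constants, which is exactly why the authors cite the result rather than reprove it.
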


\begin{cor}
\label{cor:sqpolyranbound}Suppose $f(x)\in\IZ[x]$ is an integer polynomial
computed by a SLP of size $s$. Define $S(f)\eqdef\card{\{1\leq t\leq2^{200s}\mid f(t)\text{ is a square }\}}$.
If $f$ is not a square then we have:
\[
S(f)<2^{800}s^{5}2^{183s}.
\]
\end{cor}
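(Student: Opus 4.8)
The plan is to apply the effective Hilbert irreducibility bound of \ref{thm:hilbertirredbound} to the bivariate polynomial
\[
P(T,Y)\eqdef Y^{2}-f(T)\in\IZ[T,Y].
\]
The key observation is that for an integer $t$ the univariate polynomial $P(t,Y)=Y^{2}-f(t)$ is reducible in $\IQ[Y]$ if and only if $f(t)$ is a perfect square: a monic quadratic $Y^{2}-c$ splits over $\IQ$ exactly when $c$ is the square of a rational, and for $c=f(t)\in\IZ$ a rational square is automatically an integer square. Hence $S(f)$, which counts $t\in[1,2^{200s}]$ with $f(t)$ a square, equals $S(P,2^{200s})$ in the notation of \ref{thm:hilbertirredbound}, and it suffices to bound the latter.

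First I would verify the hypotheses of \ref{thm:hilbertirredbound}. The coefficient of $Y^{2}$ in $P$ is $1$, so the integer coefficients of $P$ are relatively prime and $\deg_{Y}(P)=2\geq2$. For irreducibility over $\IQ[T,Y]$, note that over the field $\IQ(T)$ the polynomial $Y^{2}-f(T)$ is irreducible iff $f(T)$ is not a square in $\IQ(T)$, which (as $\IQ[T]$ is a UFD) is the same as $f$ not being a square in $\IQ[T]$; by Gauss's lemma a polynomial of $\IZ[x]$ is a square in $\IQ[x]$ iff it is a square in $\IZ[x]$, so the hypothesis that $f$ is not a square guarantees that $P$ is irreducible in $\IQ[T,Y]$. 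The hard part will be exactly this chain of equivalences, since everything downstream is routine once global irreducibility of $P$ is secured.

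It then remains to instantiate the bound with the SLP parameters. Since $f$ is computed by an SLP of size $s$ we have $m=\deg_{T}(P)=\deg(f)\leq2^{s}$, while $n=\deg_{Y}(P)=2$, and $\nrm f_{\infty}\leq2^{2^{s}}$ gives $H(P)\leq2^{2^{s}}$, so that $\log(H)\leq2^{s}$. Taking $B=2^{200s}$, the estimate of \ref{thm:hilbertirredbound} reads
\[
S(P,B)\leq2^{165}m^{64}2^{296n}\log^{19}(H)B^{\frac12}\log^{5}(B).
\]
Substituting $m^{64}\leq2^{64s}$, $2^{296n}=2^{592}$, $\log^{19}(H)\leq2^{19s}$, $B^{1/2}=2^{100s}$, and $\log^{5}(B)\leq2^{40}s^{5}$ collects the $s$-dependent powers of two into $2^{(64+19+100)s}=2^{183s}$ and the absolute constants into $2^{165+592+40}=2^{797}<2^{800}$, yielding
\[
S(f)=S(P,B)<2^{800}s^{5}2^{183s}.
\]
The large constant $2^{800}$ leaves ample slack, so the exact base of the logarithm and the precise degree and height estimates are immaterial to the final bound.
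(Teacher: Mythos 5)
Your proof is correct and takes essentially the same approach as the paper: both apply \ref{thm:hilbertirredbound} to $P(T,Y)=Y^{2}-f(T)$ with $B=2^{200s}$, $m\leq 2^{s}$, $n=2$, $H\leq 2^{2^{s}}$, and collect the same constants to get $2^{183s}$ and $2^{800}$. Your explicit verification that reducibility of $Y^{2}-f(t)$ over $\IQ$ matches $f(t)$ being an integer square, and that irreducibility of $P$ follows from $f$ not being a square via Gauss's lemma, merely fills in steps the paper's proof leaves implicit.
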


\begin{proof}
Consider the polynomial $P(T,Y)\eqdef Y^{2}-f(T)$. Since $f(x)$
is not a square, we infer that $P(T,Y)$ is an irreducible polynomial
in $\IQ[T,Y]$. Now we employ \ref{thm:hilbertirredbound} on $P(T,Y)$
with $B=2^{200s}$, we have $m\leq2^{s},n=2$ and $H\leq2^{2^{s}}.$
In this case, we have $S(P,B)=S(f)$. By using \ref{thm:hilbertirredbound}
, we have:

\[
S(f)\leq2^{165}2^{64s}2^{592}2^{19s}2^{100s}(200s)^{5}<2^{800}s^{5}2^{183s}.
\]
\end{proof}
\ref{cor:sqpolyranbound} implies a randomized polynomial time algorithm
for $\sqpolyslp$, as demonstrated below in \ref{thm:sqpolyslpincorp}. 
\begin{thm}\label{thm:sqpolyslpincorp}
$\sqpolyslp$ is in $\coRP$.
\end{thm}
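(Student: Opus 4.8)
The plan is to use \ref{cor:sqpolyranbound} directly: it gives a gap/hitting-set style dichotomy that lets a single random evaluation distinguish squares from non-squares with constant one-sided error. First I would fix the sampling domain to $[2^{200s}] = \{1,2,\dots,2^{200s}\}$, where $s$ is the size of the given SLP for $f$. The algorithm draws a uniformly random integer $t$ from this range, evaluates $f(t)$ modulo nothing---but rather computes the actual integer $f(t)$ is too large, so instead I would evaluate $f(t)$ via an SLP of size $O(s)$ (substituting the constant $t$ costs only $\poly(\log t) = \poly(s)$ extra gates) and then test whether the resulting integer $f(t)$ is a perfect square. The key point is that deciding whether a \emph{given} integer (represented by an SLP) is a perfect square is exactly the problem $\sqslp$, which by \cite{sumofsqrissac2023} is in randomized polynomial time (under GRH); alternatively, since $f(t)$ for the sampled $t$ has an explicit SLP, one can fold this perfect-square test into the overall randomized procedure.

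The correctness argument splits on the two cases. If $f = g^2$ for some $g \in \IZ[x]$, then by \ref{thm:whenisfsqr} every evaluation $f(t)$ is a perfect square, so the square test accepts with probability $1$; this gives the one-sided (co-$\RP$) guarantee for the ``yes'' instances. If $f$ is not a square, then \ref{cor:sqpolyranbound} bounds the number of ``bad'' sample points---those $t \in [2^{200s}]$ with $f(t)$ a perfect square---by $S(f) < 2^{800} s^5 2^{183s}$. Dividing by the domain size $2^{200s}$, the probability that a uniformly random $t$ happens to land on a bad point is at most
\[
\frac{2^{800} s^5 2^{183s}}{2^{200s}} = 2^{800} s^5 2^{-17s},
\]
which tends to $0$ as $s$ grows and is bounded below $1/2$ for all sufficiently large $s$. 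Thus for a non-square $f$ the algorithm rejects (detects a non-square evaluation) with probability at least $1/2$, completing the $\coRP$ bound after handling the finitely many small $s$ by brute force or by standard amplification.

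The main obstacle---and the only genuinely non-routine ingredient---is the perfect-square test for the integer $f(t)$, since $f(t)$ can be doubly-exponentially large and is only given implicitly by an SLP. I would handle this by invoking the $\sqslp$ result (randomized polynomial time, assuming GRH) on the SLP computing $f(t)$, noting that composing the SLP for $f$ with the constant $t$ is polynomial-time. Two bookkeeping remarks round out the argument: first, since both the outer sampling and the inner $\sqslp$ test are randomized with one-sided/bounded error, one should verify that the error types compose correctly---the outer procedure has perfect completeness, so as long as the $\sqslp$ subroutine also has the appropriate one-sided behaviour (or is amplified to error well below the $1/2$ gap), the overall algorithm remains in $\coRP$; second, I would explicitly note the dependence on GRH inherited from the $\sqslp$ subroutine. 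The remaining steps---bounding the SLP size of $f(t)$, verifying the probability estimate above is below $1/2$, and amplification---are routine.
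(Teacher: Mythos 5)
Your proposal is correct and takes essentially the same route as the paper's proof: sample $t$ uniformly from $\{1,\dots,2^{200s}\}$, build an SLP for $f(t)$, test whether $f(t)$ is a perfect square via the $\sqslp$ algorithm of \cite[Sec 4.2]{sumofsqrissac2023}, and combine perfect completeness on squares with the soundness bound from \ref{cor:sqpolyranbound}. Your two bookkeeping remarks (the GRH assumption inherited from the $\sqslp$ subroutine, and the handling of small $s$ where the bound $2^{800}s^{5}2^{-17s}$ is vacuous) are points the paper glosses over, so they only strengthen the write-up.
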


\begin{proof}
Given an integer polynomial $f(x)$ computed by a SLP of size $s$,
we want to decide if $f=g^{2}$ for some $g\in\IZ[x]$. We sample a
positive integer uniformly at random from the set $\{1\leq t\leq2^{200s}\mid t\in\IN\}$.
Using the algorithm in \cite[Sec 4.2]{sumofsqrissac2023}, we test
if $f(t)$ is a square. We output ``Yes'' if $f(t)$ is a square.
If $f=g^{2}$ for some $g\in\IZ[x]$, then we always output ``Yes''.
Suppose $f\neq g^{2}$ for any $g\in\IZ[x]$. By using \ref{cor:sqpolyranbound}, we obtain that:
\[
\Pr[f(t)\text{ is a square}]<\frac{2^{800}s^{5}2^{183s}}{2^{200s}}<\frac{1}{100}\text{ for }s>100.
\]
Hence with probability at least $0.99$ we sample a $t$ such that
$f(t)$ is not a square. The algorithm for $\sqslp$ verifies that $f(t)$ is not a square with probability at least $\frac{1}{3}$ \cite[Sec 4.2]{sumofsqrissac2023}.
Hence we output ``No'' with probability at least $0.33$. This implies $\sqpolyslp \in \coRP$.
\end{proof}

\section{Conclusion and Open Problems}
We studied the connection between $\posslp  $ and problems related to 
the representation of integers as sums of squares, drawing on Lagrange's 
four-square theorem from 1770. We investigated variants of the problem, 
considering whether the positive integer computed by a given SLP can be
represented as the sum of squares of two or three integers. We analyzed 
the complexity of these variations and established relationships between 
them and the original $\posslp$ problem. Additionally, we introduced the 
$\divslp$ problem, which involves determining if a given SLP computes an 
integer divisible by a given power of 2. We showed that $\divslp$ is at least
as hard as $\degslp$. We also showed the relevance of $\divslp$ in connecting the 
$\thrsos$ to $\posslp$. In contrast to $\posslp$, we also showed that the 
polynomial variant of the $\posslp$ problem is unconditionally $\coNP$-hard. 
Overall, this paper contributes to a deeper understanding of decision problems 
associated with SLPs and provides insights into the computational complexity of
problems related to the representation of integers as sums of squares. 
A visual representation illustrating the problems discussed in this paper and their interrelations is available in \ref{fig:vis}. Our results open avenues for further research in this area; 
in particular, we highlight the following research avenues:
\begin{enumerate}
 \item   What is the complexity of $\divslp$? We showed it is $\degslp$ hard. Is it $\NP$-hard too? How does it relate to $\posslp$?
 \item   Can we prove \ref{thm:2soshard} without relying on \ref{conj:gencramconj}?
 \item   One can also study the problems of deciding whether a given SLP computes an integer univariate polynomial, which can be written as the sum of two, three, or four squares. We studied these questions for integers in this paper. But it makes for an interesting research to study these questions for polynomials.
 \item  And finally, can we prove unconditional hardness results for $\posslp$?
\end{enumerate}
\paragraph{Acknowledgments:}
We would like to thank Robert Andrews for providing a simpler proof of \ref{lem:reversetheslp}. We had a proof of it that was a bit longer. Robert Andrews simplified the proof after reviewing our proof in a personal communication.

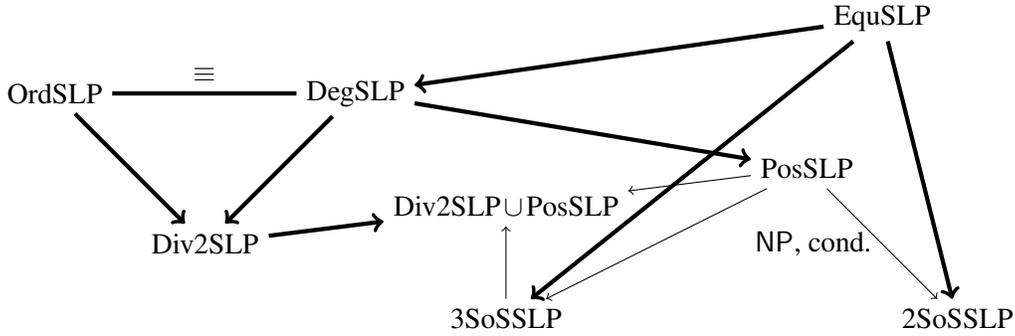
\begin{figure}[H]
\centering
\begin{tikzpicture}\node at (6,0) [] (DivSLP) {$\divslp$};
 \node at (4,2) [] (OrdSLP) {$\ordslp$}; 
 \node at (8,2) [] (DegSLP) {$\degslp$};

 \node at (14,1) [] (PosSLP) {$\posslp$};

 \node at (15,3) [] (EquSLP) {$\eqslp$};
 
 \node at (10,-1) [] (3SoSSLP) {$\thrsos$};
 \node at (16,-1) [] (2SoSSLP) {$\twosos$};

 \node at (10,0.5) [] (Union) {$\divslp\cup\posslp$};

 \draw [ultra thick,-] (OrdSLP) edge node[above] {$\equiv$} (DegSLP); 
 \draw [ultra thick,->] (OrdSLP) -- (DivSLP);
 \draw [ultra thick,->] (DegSLP) -- (DivSLP);
  \draw [ultra thick, ->] (DegSLP) -- (PosSLP);

 \draw [->] (PosSLP) -- (3SoSSLP); 
 \draw [->] (PosSLP) edge node[left] {$\NP$, cond.} (2SoSSLP); 

 \draw[ultra thick, ->] (DivSLP) -- (Union);
 \draw[->] (PosSLP) -- (Union);
 \draw[->] (3SoSSLP) -- (Union);

  \draw[ultra thick, ->] (EquSLP) -- (DegSLP);
  \draw[ultra thick, ->] (EquSLP) -- (3SoSSLP);
  \draw[ultra thick, ->] (EquSLP) -- (2SoSSLP);

\end{tikzpicture}
\caption{A visualization of the relations between the problems studied in this
work. An arrow means that there is a Turing reduction. A thicker arrow indicates a polynomial time many one
reduction.
The reduction from $\posslp$ to $\twosos$ is nondeterministic
and depends on \ref{conj:gencramconj}.  \label{fig:vis}}
\end{figure}

\newcommand{\etalchar}[1]{$^{#1}$}

\appendix
\section{Missing Proofs}
\begin{lem}
\label{lem:realpossumoftwosqr}For every positive polynomial $f\in\IR[x]$,
there exist $g,h\in\IR[x]$ such that $f=g^{2}+h^{2}$
\end{lem}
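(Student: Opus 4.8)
The plan is to use the factorization of $f$ over $\IC$ together with the fact that the non-real roots of a real polynomial occur in conjugate pairs. First I would observe that since $f$ is positive (hence $f(x)\geq 0$ for all $x\in\IR$), its leading coefficient $a$ is strictly positive, and every real root of $f$ must have even multiplicity: at a real root of odd multiplicity $f$ would change sign, contradicting nonnegativity. The factorization of $f$ over $\IC$ therefore groups the roots as the real roots $r_1,\dots,r_k$ (with even multiplicities $2m_1,\dots,2m_k$) together with the non-real roots, which split into conjugate pairs $\alpha_j,\overline{\alpha_j}$ of equal multiplicity $n_j$.

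Next I would define the complex polynomial
\[
p(x) \eqdef \sqrt{a}\,\prod_{i=1}^{k}(x-r_i)^{m_i}\prod_{j}(x-\alpha_j)^{n_j},
\]
choosing one representative $\alpha_j$ (say with positive imaginary part) from each conjugate pair. Let $\overline{p}$ denote the polynomial obtained by conjugating the coefficients of $p$; its roots are exactly the complex conjugates of the roots of $p$, namely the $r_i$ again (real roots being self-conjugate) and the $\overline{\alpha_j}$. Hence $p\,\overline{p}$ has precisely the roots of $f$ with the correct multiplicities, and the leading coefficients match because $\sqrt{a}\cdot\sqrt{a}=a$; therefore $p\,\overline{p}=f$. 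Writing $p=g+ih$ with $g,h\in\IR[x]$ the real and imaginary parts of $p$, we have $\overline{p}=g-ih$ and so
\[
f = p\,\overline{p} = (g+ih)(g-ih) = g^2+h^2,
\]
which is the desired representation.

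The only genuine content is the parity observation in the first step (real roots have even multiplicity) together with the positivity of the leading coefficient; everything else is bookkeeping about conjugate pairs. An equivalent route avoids introducing $p$ explicitly: each factor $(x-r_i)^2$, each quadratic $(x-\alpha_j)(x-\overline{\alpha_j})=(x-\operatorname{Re}\alpha_j)^2+(\operatorname{Im}\alpha_j)^2$, and the constant $a=(\sqrt{a})^2+0^2$ is already a sum of two squares in $\IR[x]$, and the Brahmagupta--Fibonacci identity $(g_1^2+h_1^2)(g_2^2+h_2^2)=(g_1g_2-h_1h_2)^2+(g_1h_2+h_1g_2)^2$ shows that sums of two squares are closed under multiplication; multiplying the factors then yields $f$ as a sum of two squares. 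I expect no serious obstacle, as this is a classical folklore argument; the one point to state carefully is the justification of the even multiplicity of real roots from nonnegativity.
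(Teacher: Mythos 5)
Your proposal is correct and is essentially the paper's own argument: the paper factors $f$ over $\IC$, writes each even-multiplicity real-root factor $(x-\alpha)^2$ and each conjugate-pair quadratic $(x-s-\iu t)(x-s+\iu t)=(x-s)^2+t^2$ as a sum of two squares, and combines them with the identity $(a^{2}+b^{2})(c^{2}+d^{2})=(ac-bd)^{2}+(ad+bc)^{2}$ --- exactly your ``equivalent route.'' Your primary formulation via $f=p\,\overline{p}=(g+\iu h)(g-\iu h)=g^{2}+h^{2}$ is the same argument packaged globally, and if anything you are slightly more careful than the paper, which does not explicitly account for the positive leading coefficient.
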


\begin{proof}
If $f(x)\ge0$ for $x\in\R$ and $\alpha$ is a real root of $f$,
then it must have even multiplicity. We have $(x-\alpha)^{2}=(x-\alpha)^{2}+0^{2}$.
We use $\iu$ to denote $\sqrt{-1}.$ If $\beta=s+\iu t$ and $\overline{\beta}=s-\iu t$
are a complex-conjugate pair of roots of $f$ then, $(x-\beta)(x-\overline{\beta})=(x-s-\iu t)(x-s+\iu t)=(x-s)^{2}+t^{2}$
is a sum of two squares. Now the claim follows by using the identity
$(a^{2}+b^{2})(c^{2}+d^{2})=(ac-bd)^{2}+(ad+bc)^{2}$.
\end{proof}

\section{Alternative Proof of \ref{lem:divslpinch}}
\label{app:altproof}

We prove a general theorem on how to show that problems involving
SLPs are in $\CH$. It is similar to the proof of 
\cite[Lem 5]{DBLP:conf/coco/AllenderKRRV01}.
Let $C$ be a Boolean circuit in $\TC_{0}$. $C$ consists of unbounded
AND, unbounded OR, and unbounded majority gates (MAJ). According to
\cite{DBLP:journals/jcss/Ruzzo81}, when a family $(C_{n})$ is in dlogtime-uniform
$\TC_{0}$, this means that there is a deterministic  Turing machine (DTM)
that decides in time $O(\log n)$  whether
given $(n,f,g)$ the gate $f$ is connected to the gate $g$ and whether
given $(n,f,t)$ the gate $f$ has type $t$. All numbers are given in
binary. For a language $B\subseteq\{0,1\}^*$, let $\slp(B)$ be the
language:
\[
  \slp(B)\eqdef \{P\mid\text{\ensuremath{P} is an SLP computing a number \ensuremath{N} such that \ensuremath{\bin{N}\in B} (as a binary string)}\}
\]
This can be viewed as the ``SLP-version'' of $B$.

\begin{lem}
Let $B$ be in dlogtime-uniform $\TC_{0}$. Then $\slp(B)\in\CH$. 
\end{lem}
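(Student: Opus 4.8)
The plan is to evaluate the dlogtime-uniform circuit $C_n$ that decides $B$ on the input string $\bin{N}$, proceeding from the input level up to the output gate and spending one level of the counting hierarchy per level of the circuit. Write $m$ for the size of the given SLP $P$, so that $\abs{N}\le 2^{2^{m}}$ and $\bin{N}$ has length $n\le 2^{m}+O(1)$; in particular $\log n=O(m)$ is polynomial in the input size. Hence the dlogtime-uniformity machine runs in time $O(\log n)=\poly(m)$, gate indices are $\poly(m)$-bit strings, and the predicates ``gate $f$ is connected to gate $g$'' and ``gate $f$ has type $t$'' are decidable in deterministic polynomial time in $m$. To know which $C_n$ to use, I would first pin down $n=\abs{\bin{N}}$ by locating the most significant nonzero bit of $\abs{N}$ via binary search, each probe being a $\bitslp$ query; since $\bitslp\in\CH$ by \cite[Theorem 4.1]{Allender06onthe} and $\CH$ is closed under polynomial-time Turing reductions, this step stays inside $\CH$.

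Next, for every gate $g$ of $C_n$ let $\mathrm{val}(g)\in\{0,1\}$ denote its value on input $\bin{N}$. If $g$ reads the $i$-th bit of $\bin{N}$, then $\mathrm{val}(g)$ is that bit, supplied by the $\bitslp$ oracle and so computable in $\CH$. If $g$ is internal with fan-in $k_g\eqdef\card{\{h:h\to g\}}$, then $\mathrm{val}(g)=1$ iff some $h\to g$ has $\mathrm{val}(h)=1$ (OR gate), iff every $h\to g$ has $\mathrm{val}(h)=1$ (AND gate), or iff $\card{\{h:h\to g,\ \mathrm{val}(h)=1\}}>k_g/2$ (MAJ gate). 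Each case is a counting statement over the $\poly(m)$-bit space of gate indices $h$: the tests $h\to g$ and ``$\mathrm{val}(h)=1$'' are decided by the uniformity machine and by the next-lower evaluation predicate respectively, and the resulting count — together with the comparison against $k_g/2$, itself a count — is a single $\PP$ computation relative to an oracle deciding $\mathrm{val}$ one level below.

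I would then induct on the circuit level $\ell$, showing that ``the level-$\ell$ gate $g$ satisfies $\mathrm{val}(g)=1$'' defines a language in $\CH$. The base case $\ell=0$ is exactly $\bitslp\in\CH$, say $\bitslp\in\CH_{k_0}$. For the inductive step, evaluating a level-$\ell$ gate is, by the previous paragraph, a single $\PP$ query to an oracle deciding the values of level-$(\ell-1)$ gates; if that oracle lies in $\CH_{k}$ then, since $\PP^{\CH_{k}}=\CH_{k+1}\subseteq\CH$, the level-$\ell$ predicate lies in $\CH_{k+1}$. As $C_n$ has constant depth $d$, only a constant number of such $\PP$ layers are stacked and the output gate is decided in $\CH_{k_0+d}\subseteq\CH$, whence $\slp(B)\in\CH$.

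The main obstacle is that $C_n$ has $2^{\poly(m)}$ gates and fan-in as large as $2^{\poly(m)}$, so the circuit can never be materialized; every step must be carried out implicitly. This is precisely what the three ingredients make possible: gate indices are short, dlogtime-uniformity makes connectivity and type decidable in $\P$, and unbounded fan-in (including the threshold at MAJ gates) is handled by counting over the short gate indices, which is exactly the power that $\PP$ and $\#\P$ provide. The one delicate point is the MAJ gate, where the number of $1$-valued inputs and the total fan-in are two $\#\P$ quantities that must be compared; I would phrase the inductive step so that this comparison is folded into a single $\PP$ query against the level-$(\ell-1)$ evaluation oracle, using that the comparison of two such oracle counts is itself a $\PP$ predicate.
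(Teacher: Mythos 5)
Your proof is correct and follows essentially the same route as the paper's: induction on the (constant) circuit depth, with the base case being $\bitslp \in \CH$, and each circuit level absorbed by one $\PP$ level that guesses a gate index and uses the dlogtime-uniformity machine to test the predecessor relation inside the counting computation. The differences are only presentational---you spell out the determination of the input length $n$ and the AND/OR cases, and you decide the MAJ threshold by comparing two relativized counts, whereas the paper folds the majority test into a single $\PP$ computation by padding non-predecessor guesses with balanced accepting/rejecting paths---but the underlying argument is the same.
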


\begin{proof}
The proof is by induction on the depth. We prove the more general
statement: Let $M$ be a DTM from the definition of dlogtime and
$(C_{n})$ be the sequence of circuits for $B$. Let $P$ be the given SLP
encoding a number $N$.
Given $(P,g,b)$
we can decide in $\CH_{t+c}$ whether the value of the gate $g$ on
input $N$ given by $P$ is $b$. $t$ is the depth of $g$. If $t=0$,
then $g$ is an input gate. Thus this problem is $\bitslp$
which is in $\CH_{c}$ for some $c$. If $t>0$, then we have to decide
whether the majority of the gates that are children of $g$ are $1$.
This can be done using a $\PP$-machine with oracle to $\CH_{c+t-1}$. 
We guess a gate $f$ and check using the DTM $M$ whether $f$ is a predecessor
of $g$. If not we add an accepting and rejecting path. If yes, we 
use the oracle to check whether $f$ has value $1$. If yes we accept
and otherwise, we reject.
\end{proof}

It is easy to see that checking whether the $\ell$ least significant
bits of a number given in binary are $0$ can be done in dlogtime-uniform $\TC_0$.
Thus $\divslp$ is in $\CH$ by the lemma above. 

\section{Reduction from multivariate $\degslp$ to univariate $\degslp$}\label{sec:mdegslptodegslp}

We use $\mdegslp$ to denote the multivariate variant of the $\degslp$
problem, which we define formally below.
\begin{problem}[$\mdegslp$]
Given a straight line program representing a polynomial $f\in\IZ[\lst xn]$,
and given a natural number $d$ in binary, decide whether $\deg(f)\leq d$.
\end{problem}

$\mdegslp$ was simply called $\degslp$ in \cite{Allender06onthe}.
Now we recall the proof in \cite{Allender06onthe}, to show that to study the hardness of $\mdegslp$, it is enough
to focus on its univariate variant $\degslp$. To this end, we note
the following \ref{obs:mdegslptodegslp}.

\begin{observation}[\cite{Allender06onthe}]\label{obs:mdegslptodegslp}
$\mdegslp$ is equivalent to $\degslp$ under deterministic polynomial
time many one reductions.
\end{observation}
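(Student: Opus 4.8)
The plan is to establish the two reductions separately. The direction $\degslp\leq_{\P}\mdegslp$ is trivial: a univariate polynomial is in particular multivariate, and its total degree agrees with its degree in the single variable, so the identity map on instances works. The content lies in the reverse reduction $\mdegslp\leq_{\P}\degslp$, which I would obtain by collapsing the $n$ variables into one via a Kronecker-style substitution that simultaneously records the \emph{total} degree.

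So let $P$ be an SLP of size $s$ computing $f\in\IZ[\lst xn]$, and let $d$ be the given bound. We may assume $d<2^{s}$, since otherwise $\deg(f)\leq 2^{s}\leq d$ holds automatically and we answer ``yes'' without any query. Put $D\eqdef 2^{s}\geq\deg(f)$, $M\eqdef D+1$, and $L\eqdef M^{n}$, and substitute
\[
x_{i}\longmapsto y^{\,b_{i}},\qquad b_{i}\eqdef L+M^{i-1}.
\]
Writing $g(y)\eqdef f\paren{y^{b_{1}},\dots,y^{b_{n}}}\in\IZ[y]$, a monomial $x^{\alpha}$ of $f$ is sent to $y^{\langle b,\alpha\rangle}$ with $\langle b,\alpha\rangle=L\abs\alpha+\sum_{i}\alpha_{i}M^{i-1}$, where $\abs\alpha$ denotes total degree. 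I would output the $\degslp$ instance asking whether $\deg(g)\leq L(d+1)-1$; note this threshold and the SLP for $g$ both have bit-size polynomial in $s$ and $n$, the latter because each $y^{b_{i}}$ is built by repeated squaring and fed into $P$ in place of $x_{i}$.

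The crux is to show that this single substitution faithfully transports the degree comparison, and here two facts must be proved. First, \emph{no cancellation}: the map $\alpha\mapsto\langle b,\alpha\rangle$ is injective on exponent vectors with entries in $\{0,\dots,D\}$. Indeed, if $\langle b,\alpha\rangle=\langle b,\beta\rangle$ then $L\paren{\abs\alpha-\abs\beta}=\sum_{i}\paren{\beta_{i}-\alpha_{i}}M^{i-1}$; since each $\abs{\beta_{i}-\alpha_{i}}\leq D=M-1$ the right-hand side has absolute value at most $M^{n}-1<L$, forcing $\abs\alpha=\abs\beta$ and then, by uniqueness of base-$M$ digits, $\alpha=\beta$. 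Consequently the highest power of $y$ in $g$ is exactly $\langle b,\alpha^{\ast}\rangle$ for the monomial $\alpha^{\ast}$ of $f$ maximizing $\langle b,\cdot\rangle$, and it carries a nonzero integer coefficient. Second, \emph{degree read-off}: because $L>M^{n}-1\geq\sum_{i}\alpha_{i}M^{i-1}$ for every admissible $\alpha$, a difference in total degree already dominates the whole base-$M$ block, so $\alpha^{\ast}$ has maximal total degree and $\deg(g)=L\deg(f)+r$ with $0\leq r<L$. Dividing by $L$ recovers $\deg(f)=\flr{\deg(g)/L}$, and in threshold form this gives
\[
\deg(f)\leq d\iff \deg(g)\leq L(d+1)-1,
\]
which is precisely the query we make. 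I expect the main obstacle to be arranging the two parameter inequalities to hold at once — $M>\deg_{x_{i}}f$ to kill cancellation within a fixed total degree, and $L$ larger than an entire base-$M$ block so that total degree dominates — but once $M=D+1$ and $L=M^{n}$ are fixed as above, both the injectivity and the read-off follow, completing the reduction.
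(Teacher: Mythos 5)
Your reduction is correct, but it is genuinely different from the one in the paper. The paper substitutes $x_i \mapsto y\alpha_i$ with the \emph{scalar} constants $\alpha_i = 2^{2^{is^2}}$, writes $f(y\boldsymbol{\alpha}) = \sum_{i=0}^{D} y^i f_i(\boldsymbol{\alpha})$ in terms of the homogeneous parts $f_i$, and invokes a non-trivial fact from \cite{Allender06onthe} that the top part satisfies $f_D(\boldsymbol{\alpha})\neq 0$ for this choice of points; the payoff is that the degree is preserved exactly, so the threshold $d$ is passed through unchanged. You instead use a Kronecker-style substitution $x_i \mapsto y^{b_i}$ with $b_i = L + M^{i-1}$, where $M = D+1$ and $L = M^n$, and your key lemma is the injectivity of $\alpha \mapsto \langle b,\alpha\rangle$ on exponent vectors with entries in $\{0,\dots,D\}$ --- a purely combinatorial statement about base-$M$ digits, which rules out any cancellation or merging of monomials structurally rather than via magnitudes of integer evaluations. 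Your degree then gets rescaled, $\deg(g) = L\deg(f) + r$ with $0 \le r < L$, which you compensate by asking $\deg(g) \le L(d+1)-1$; the verification of this threshold equivalence is correct, as is the size analysis (each $y^{b_i}$ has an SLP of size $O(ns)$ by repeated squaring). What each approach buys: the paper's keeps the threshold untouched but leans on the nonvanishing of $f_D$ at specially chosen points, an argument that exploits integer magnitudes; yours is self-contained and coefficient-agnostic (the no-cancellation argument never looks at coefficients, so it would work verbatim over any ring, including positive characteristic, where evaluation-point arguments are more delicate), at the harmless cost of a rescaled degree bound. Two cosmetic points: your opening case $d \ge 2^s$ is actually unnecessary (the equivalence holds there too, and the threshold's bit-size is polynomial regardless), and since this is a many-one reduction, ``answer yes without any query'' should formally be ``output a fixed trivial yes-instance'' --- both are standard and do not affect correctness.
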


\begin{proof}
We only need to show that $\mdegslp$ reduces o $\degslp$ under deterministic
polynomial time many one reductions, other direction is trivial. Suppose
we are given an SLP of size $s$ which computes $f\in\IZ[\lst xn]$,
and we want to decide whether $\deg(f)\leq d$ for a given $d\in\IN$.
Suppose $D=\deg(f)$. For all $i\in\{0,1,\dots,D\}$, we use $f_{i}$
to denote the homogeneous degree $i$ part of $f$. Now notice that
for any $\boldsymbol{\alpha}=(\lst{\alpha}n)\in\IZ^{d}$, we have:

\[
f(y\boldsymbol{\alpha})=f(\lst{y\alpha}n)=\sum_{i=0}^{D}y^{i}f_{i}(\boldsymbol{\alpha}),
\]
where $y$ is a fresh variable. So if $\boldsymbol{\alpha}$ is chosen
such that $f_{D}(\boldsymbol{\alpha})$ is non-zero then $\deg(f(y\boldsymbol{\alpha}))=\deg(f)=D$.
If we choose $\alpha_{i}=2^{2^{is^{2}}}$ then it can seen that $f_{D}(\boldsymbol{\alpha})$
is non-zero, see e.g. \cite[Proof of Proposition 2.2]{Allender06onthe}.
SLPs computing $\text{\ensuremath{\alpha_{i}}}$ can be constructed
using iterated squaring in polynomial time. Hence we can construct
an SLP for $f(y\boldsymbol{\alpha})$ in polynomial time. By this
argument, we know that $\deg(f(y\boldsymbol{\alpha}))\leq d$ if and
only if $\deg(f)\leq d$ . Therefore $\mdegslp$ reduces to $\degslp$
under  polynomial time many one reductions.
\end{proof}


\begin{thebibliography}{ABKPM09}

\bibitem[ABKPM09]{Allender06onthe}
Eric Allender, Peter B\"{u}rgisser, Johan Kjeldgaard-Pedersen, and Peter~Bro
  Miltersen.
\newblock On the complexity of numerical analysis.
\newblock {\em SIAM Journal on Computing}, 38(5):1987--2006, 2009.

\bibitem[AKR{\etalchar{+}}01]{DBLP:conf/coco/AllenderKRRV01}
Eric Allender, Michal Kouck{\'{y}}, Detlef Ronneburger, Sambuddha Roy, and
  V.~Vinay.
\newblock Time-space tradeoffs in the counting hierarchy.
\newblock In {\em Proceedings of the 16th Annual {IEEE} Conference on
  Computational Complexity, Chicago, Illinois, USA, June 18-21, 2001}, pages
  295--302. {IEEE} Computer Society, 2001.

\bibitem[Ank57]{Ankeny1957SumsOT}
N.~C. Ankeny.
\newblock Sums of three squares.
\newblock {\em Proceedings of the American Mathematical Society},
  8(2):316--319, 1957.

\bibitem[BC06]{BURGISSER2006147}
Peter Bürgisser and Felipe Cucker.
\newblock Counting complexity classes for numeric computations ii: Algebraic
  and semialgebraic sets.
\newblock {\em Journal of Complexity}, 22(2):147--191, 2006.

\bibitem[BCSS97]{SmaleRealCompu1997}
Lenore Blum, Felipe Cucker, Michael Shub, and Steve Smale.
\newblock {\em Complexity and Real Computation}.
\newblock Springer-Verlag, Berlin, Heidelberg, 1997.

\bibitem[BJ23]{burgisser2023hardness}
Peter Bürgisser and Gorav Jindal.
\newblock On the hardness of posslp, 2023.

\bibitem[BLR09]{basu2009bound}
Saugata Basu, Richard Leroy, and Marie-Francoise Roy.
\newblock A bound on the minimum of a real positive polynomial over the
  standard simplex, 2009.

\bibitem[Bre32]{Breusch1932}
R.~Breusch.
\newblock Zur verallgemeinerung des bertrandschen postulates, dass zwischen x
  und 2x stets primzahlen liegen.
\newblock {\em Mathematische Zeitschrift}, 34:505--526, 1932.

\bibitem[DJPS21]{divtrunc21ccc}
Pranjal Dutta, Gorav Jindal, Anurag Pandey, and Amit Sinhababu.
\newblock {Arithmetic Circuit Complexity of Division and Truncation}.
\newblock In Valentine Kabanets, editor, {\em 36th Computational Complexity
  Conference (CCC 2021)}, volume 200 of {\em Leibniz International Proceedings
  in Informatics (LIPIcs)}, pages 25:1--25:36, Dagstuhl, Germany, 2021. Schloss
  Dagstuhl -- Leibniz-Zentrum f{\"u}r Informatik.

\bibitem[DSS18]{saxenaradicalstock2018}
Pranjal Dutta, Nitin Saxena, and Amit Sinhababu.
\newblock Discovering the roots: Uniform closure results for algebraic classes
  under factoring.
\newblock In {\em Proceedings of the 50th Annual ACM SIGACT Symposium on Theory
  of Computing}, STOC 2018, page 1152–1165, New York, NY, USA, 2018.
  Association for Computing Machinery.

\bibitem[Dud12]{dudley2012elementary}
U.~Dudley.
\newblock {\em Elementary Number Theory: Second Edition}.
\newblock Dover Books on Mathematics. Dover Publications, 2012.

\bibitem[DW08]{debes2008}
Pierre Debes and Yann Walkowiak.
\newblock Bounds for hilbert’s irreducibility theorem.
\newblock {\em Pure and Applied Mathematics Quarterly}, 4(4):1059–1083, 2008.

\bibitem[Edm06]{Landau1906}
Landau Edmund.
\newblock Über die darstellung definiter funktionen durch quadrate.
\newblock {\em Mathematische Annalen}, 62:272--285, 1906.

\bibitem[Gau01]{gauss1801disquisitiones}
C.F. Gauss.
\newblock {\em Disquisitiones arithmeticae}.
\newblock Apud G. Fleischer, 1801.

\bibitem[Hil09]{HilbertBeweisFD}
David~R. Hilbert.
\newblock Beweis f{\"u}r die darstellbarkeit der ganzen zahlen durch eine feste
  anzahlnter potenzen (waringsches problem).
\newblock {\em Mathematische Annalen}, 67:281--300, 1909.

\bibitem[JG23]{sumofsqrissac2023}
Gorav Jindal and Louis Gaillard.
\newblock On the order of power series and the sum of square roots problem.
\newblock In {\em Proceedings of the 2023 International Symposium on Symbolic
  and Algebraic Computation}, ISSAC '23, page 354–362, New York, NY, USA,
  2023. Association for Computing Machinery.

\bibitem[Kal87]{Kaltofen87factoring}
E.~Kaltofen.
\newblock Single-factor hensel lifting and its application to the straight-line
  complexity of certain polynomials.
\newblock In {\em Proceedings of the Nineteenth Annual ACM Symposium on Theory
  of Computing}, STOC '87, page 443–452, New York, NY, USA, 1987. Association
  for Computing Machinery.

\bibitem[Kou18]{kourbatov2018distribution}
Alexei Kourbatov.
\newblock On the distribution of maximal gaps between primes in residue
  classes, 2018.

\bibitem[KSS14]{kopparty2014equivalence}
Swastik Kopparty, Shubhangi Saraf, and Amir Shpilka.
\newblock Equivalence of polynomial identity testing and deterministic
  multivariate polynomial factorization.
\newblock In {\em 2014 IEEE 29th Conference on Computational Complexity (CCC)},
  pages 169--180. IEEE, 2014.

\bibitem[Lan08]{landau1908uber}
Edmund Landau.
\newblock {\em \"Uber die Einteilung der positiven ganzen Zahlen in vier
  Klassen nach der Mindestzahl der zu ihrer additiven Zusammensetzung
  erforderliche Quadrate}.
\newblock [s.n.] [S.l.], [S.l.], 1908.

\bibitem[Lan09]{landau1909einteilung}
E.~Landau.
\newblock {\em {\"U}ber die Einteilung der positiven ganzen Zahlen in vier
  Klassen nach der Mindestzahl der zu ihrer additiven Zusammensetzung
  erforderlichen Quadrate}.
\newblock 1909.

\bibitem[Leg97]{Legendre1797}
Adrien~Marie Legendre.
\newblock {\em Essai Sur La Théorie Des Nombres}.
\newblock Duprat, 1797.

\bibitem[Mal96]{Malajovich2001AnEV}
Gregorio Malajovich.
\newblock An effective version of kronecker’s theorem on simultaneous
  diophantine approximation.
\newblock Technical report, Citeseer, 1996.

\bibitem[Mor58]{mordell1958representation}
LJ~Mordell.
\newblock On the representation of a number as a sum of three squares.
\newblock {\em Rev. Math. Pures Appl}, 3:25--27, 1958.

\bibitem[Mur08]{murty2008polynomials}
M~Ram Murty.
\newblock Polynomials assuming square values.
\newblock {\em Number theory and discrete geometry}, pages 155--163, 2008.

\bibitem[NZM91]{NivenIvan199101}
Ivan Niven, Herbert~S. Zuckerman, and Hugh~L. Montgomery.
\newblock {\em An Introduction to the Theory of Numbers}.
\newblock Wiley, hardcover edition, 1 1991.

\bibitem[PS07]{PERRUCCI2007471}
Daniel Perrucci and Juan Sabia.
\newblock Real roots of univariate polynomials and straight line programs.
\newblock {\em Journal of Discrete Algorithms}, 5(3):471--478, 2007.
\newblock Selected papers from Ad Hoc Now 2005.

\bibitem[Ruz81]{DBLP:journals/jcss/Ruzzo81}
Walter~L. Ruzzo.
\newblock On uniform circuit complexity.
\newblock {\em J. Comput. Syst. Sci.}, 22(3):365--383, 1981.

\bibitem[Shi16]{shitov2016hard}
Yaroslav Shitov.
\newblock How hard is the tensor rank?, 2016.

\bibitem[Tiw92]{TIWARI1992393}
Prasoon Tiwari.
\newblock A problem that is easier to solve on the unit-cost algebraic ram.
\newblock {\em Journal of Complexity}, 8(4):393--397, 1992.

\bibitem[Wal05]{walkowiak2005}
Yann Walkowiak.
\newblock Th{\'e}or{\`e}me d'irr{\'e}ductibilit{\'e} de hilbert effectif.
\newblock {\em Acta Arithmetica}, 116(4):343--362, 2005.

\bibitem[Y.71]{Pourchet1971}
Pourchet Y.
\newblock Sur la représentation en somme de carrés des polynômes à une
  indéterminée sur un corps de nombres algébriques.
\newblock {\em Acta Arithmetica}, 19(1):89--104, 1971.

\end{thebibliography}
\end{document}